\newtheorem{theorem}{Theorem}[section]
\newtheorem{lemma}[theorem]{Lemma}
\newtheorem{definition}[theorem]{Definition}
\newtheorem{remark}[theorem]{Remark}
\newtheorem*{remark*}{Remark}
\newtheorem{claim}[theorem]{Claim}
\newcommand{\wt}{\widetilde}
\newcommand{\eps}{\epsilon}
\newcommand{\R}{\mathbb{R}}
\newcommand{\bx}{\mathbf{x}}
\newcommand{\ba}{\mathbf{a}}
\newcommand{\by}{\mathbf{y}}
\newcommand{\bb}{\mathbf{b}}
\newcommand{\bg}{\mathbf{g}}
\newcommand{\bu}{\mathbf{u}}
\newcommand{\br}{\mathbf{r}}
\newcommand{\bv}{\mathbf{v}}
\newcommand{\mB}{\mathbb{B}}
\newcommand{\mA}{\mathcal{A}}
\newcommand{\close}{\mathsf{close}}
\newcommand{\T}{\mathsf{T}}
\newcommand{\mO}{\mathsf{O}}
\newcommand{\mR}{\mathsf{R}}
\newcommand{\mS}{\mathbb{S}}
\newcommand{\calS}{\mathcal{S}}
\newcommand{\mJ}{\mathcal{J}}
\newcommand{\bq}{\mathbf{q}}
\newcommand{\Binghui}[1]{{\color{blue}[Binghui: #1]}}
\newcommand{\bX}{\mathbf{X}}
\newcommand{\bR}{\mathbf{R}}
\newcommand{\bM}{\mathbf{M}}
\newcommand{\bA}{\mathbf{A}}
\newcommand{\bB}{\mathbf{B}}
\newcommand{\nil}{\mathsf{nil}}
\newcommand{\rank}{\mathsf{rank}}
\newcommand{\nA}{\mathcal{A}_{\mathsf{nice}}}
\newcommand{\row}{\mathsf{row}}
\newcommand{\bI}{\mathbf{I}}
\newcommand{\mH}{\mathcal{H}}
\newcommand{\bU}{\mathbf{U}}
\newcommand{\mE}{\mathcal{E}}
\newcommand{\mN}{\mathcal{N}}
\newcommand{\mI}{\mathcal{I}}
\newcommand{\suc}{\mathsf{suc}}
\newcommand{\bS}{\mathbb{S}}
\newcommand{\sP}{\mathsf{P}}
\DeclareMathOperator*{\E}{{\mathbb{E}}}
\DeclareMathOperator{\ALG}{ALG}
\DeclareMathOperator{\poly}{poly}
\DeclareMathOperator{\argmin}{argmin}
\DeclareMathOperator{\proj}{proj}
\algnewcommand\algorithmicforeach{\textbf{for each}}
\newcommand*{\RN}[1]{\expandafter\@slowromancap\romannumeral #1@}
\title{Memory-Query Tradeoffs for Randomized Convex Optimization}
\author{}
\author{  
Xi Chen \\ Columbia University \\ texttt{xichen@cs.columbia.edu}
\and Binghui Peng \\ Columbia University \\ \texttt{bp2601@columbia.edu}
}
\date{}
\begin{document}
\maketitle

\begin{abstract}
We show that any randomized first-order algorithm which minimizes a
$d$-dimensional, $1$-Lipschitz convex function over the unit ball must either use $\Omega(d^{2-\delta})$ bits of memory or make $\Omega(d^{1+\delta/6-o(1)})$ queries, for any constant $\delta\in (0,1)$ and when the precision $\epsilon$ is quasipolynomially small in $d$. Our result implies that cutting plane methods, which use $\tilde{O}(d^2)$ bits of memory and $\tilde{O}(d)$ queries, are Pareto-optimal among randomized first-order algorithms, and quadratic memory is required to achieve optimal query complexity for convex optimization.
\end{abstract}

\setcounter{page}{0}
\thispagestyle{empty}
\newpage

\section{Introduction}

A fundamental problem in optimization and mathematical programming is convex optimization given access to a first-order oracle.
Consider one of the canonical settings where the input is a $1$-Lipschitz, convex function $F:\mB^d\rightarrow \mathbb{R}$ over the $d$-dimensional unit ball $\mB^d$. An algorithm has access to a first-order oracle of $F$: In each round, it can send a query point $\bx\in \mB^d$
to the oracle and receive a pair $(F(\bx),\bg(\bx))$, where $\bg(\bx)\in \partial F(\bx)\subseteq \mathbb{R}^d$ is a subgradient of $F$ at $\bx$.
The goal is to find an $\epsilon$-optimal point $\bx^*\in \mB^d$ satisfying 
$F(\bx^*)-\argmin_{\bx\in \mB^d}F(\bx)\le \epsilon$.
Convex optimization has a wide range of applications in numerous fields. In particular, it has served as one of the most important primitives in machine learning \cite{bubeck2015convex}

\def\eps{\epsilon}

The worst-case query complexity of minimizing a $1$-Lipschitz, convex function $F:\mB^d\rightarrow \mathbb{R}$  has long been known to be $\Theta(\min\{1/\eps^2,d\log (1/\epsilon)\})$ \cite{nemirovskij1983problem}.
The two upper bounds can be achieved by subgradient descent ($O(1/\eps^2)$ \cite{nesterov2003introductory}) and cutting-plane methods ($O(d\log (1/\eps))$ \cite{levin1965algorithm,bertsimas2004solving,vaidya1996new,atkinson1995cutting,lee2015faster,jiang2020improved}), respectively. 
However, even when $\eps\ll 1/\sqrt{d}$ (which is arguably the more interesting regime), gradient descent has been the dominant approach for convex optimization despite its  suboptimal worst-case query complexity, and cutting-plane methods are less frequently used in practice.
One noticeable drawback of cutting-plane methods is its memory requirement, which has become a more and more important resource in the era of massive datasets. 
While cutting-plane methods require quadratic $\Omega(d^2\log (1/\eps))$ bits of memory (e.g., to either store all subgradients queried so far or at least an ellipsoid in $\mathbb{R}^d$),   in contrast, linear $O(d\log (1/\eps))$ bits of memory suffice for gradient descent. 
It is a natural question to ask whether there is an algorithm that can achieve the better  of the two worlds.

This motivates a COLT 2019 open problem posed by  Woodworth and  Srebro \cite{WSOpen} to study \emph{memory-query tradeoffs of convex optimization}.
The first result along this direction was proved by Marsden, Sharan, Sidford and Valiant \cite{marsden2022efficient}. They showed that any randomized first-order algorithm that minimizes a $1$-Lipschitz, convex function $F:\mB^d\rightarrow \mathbb{R}$ with $\eps=1/\text{poly}(d)$ accuracy must either use $d^{1.25-\delta}$ bits of memory or make $d^{1+(4/3)\delta}$ many queries, for any constant $\delta\in [0,1/4]$.
Recently, Blanchard, Zhang and Jaillet \cite{blanchard2023quadratic} showed that any {\em deterministic} algorithm must use either $d^{2-\delta}$ bits of memory or $d^{1+\delta/3}$ queries.
Their tradeoffs imply that cutting-plane methods are Pareto-optimal among  deterministic algorithms
on the memory-query curve of  convex optimization.

\subsection{Our Contribution}

The result of Blanchard, Zhang and Jaillet \cite{blanchard2023quadratic} left open the question about whether similar tradeoffs hold for randomized  algorithms.
We believe that this is an important question to address because  numerous  gradient descent methods (e.g. stochastic gradient descent \cite{bubeck2015convex}, randomized smoothing \cite{duchi2012randomized}, variance reduction \cite{johnson2013accelerating}) are inherently randomized. Many exponential separations are known between deterministic and randomized algorithms for optimization problems, including escaping saddle points \cite{du2017gradient, jin2021nonconvex} and volume estimations \cite{vempala2005geometric}. The role of randomness becomes even more critical in connection to memory. Almost all known streaming algorithms and dimension reduction technique are randomized \cite{woodruff2014sketching}.


In this paper, we show that cutting-plane methods are in fact Pareto-optimal among randomized algorithms and thus, \emph{optimal query complexity 
  for convex optimization requires 
  quadratic memory.}
This is a corollary of the following memory-query tradeoff:

\begin{theorem}\label{theo:main}
Let $d$ be a sufficiently large integer, $\eps = \exp(-\log^5 d)$ and $\delta \in (0,1)$. 
Any algorithm that finds an $\eps$-optimal point of a $d$-dimensional $1$-Lipschitz convex function  requires either $\Omega(d^{2-\delta})$ bits of memory or makes at least $\Omega(d^{1+\delta/6-o(1)})$ queries.
\end{theorem}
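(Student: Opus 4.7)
The plan is to follow the framework initiated by Marsden, Sharan, Sidford, and Valiant \cite{marsden2022efficient} and sharpened for deterministic algorithms by Blanchard, Zhang, and Jaillet \cite{blanchard2023quadratic}, but with new ingredients that control randomness quantitatively. At a high level, I would define a hard distribution over $1$-Lipschitz convex functions on $\mB^d$ parameterized by a random family of (approximately) orthonormal vectors in $\R^d$, and show that any randomized algorithm that solves the optimization task with constant success probability must either discover information about many of these hidden directions through queries or carry that information in its memory across rounds.

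Concretely, I would use a Nemirovski-style instance built over $T = \polylog(d)$ phases. In each phase, a fresh batch of hidden vectors $\bv_1,\ldots,\bv_K$ with $K$ slightly smaller than $d$ is drawn uniformly at random subject to being nearly orthogonal to each other and to every direction queried so far, and the algorithm interacts with a function of the form
\[
F(\bx) \;=\; \max_{i \in [K]} \bigl(\langle \bv_i, \bx \rangle - \gamma_i\bigr) \;+\; \alpha \|\bx\|_2^2,
\]
chained across phases so that producing an $\eps$-optimal point requires simultaneous correlation with a representative vector from every phase. The offsets $\gamma_i$ and regularization $\alpha$ are tuned so that each first-order query at a generic point reveals essentially only the single index achieving the max, hence at most one $\bv_i$; the quasipolynomially small precision $\eps = \exp(-\log^5 d)$ provides the slack needed to support this chaining, to embed Bernstein-type concentration for near-orthogonality, and to absorb a union bound over the $T$ phases.

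The heart of the argument is a round-by-round information-theoretic analysis. Let $S_t$ denote the memory state at the start of phase $t$ and $\bW_t$ the vectors freshly sampled in that phase; conditioning on the transcript up to phase $t$, the state $S_t$ carries at most $M$ bits about $\bW_t$. I would formalize a set of \emph{witnessed} directions as those $\bv_i$ for which the algorithm eventually outputs a point of correlation above a threshold, and argue via a Fano-type decoding lemma that each witnessed direction not already charged to a query in its direction must contribute $\Omega(d^{1-o(1)})$ bits to the conditional mutual information $I(\bW_t; S_t \mid \text{transcript})$. Summing across phases and accounting for the at most $Q$ queries that can cheaply witness a direction by pointing at it yields a clean inequality of the shape $M\cdot T + Q \cdot d^{o(1)} \;\gtrsim\; T\cdot K \cdot d^{1-o(1)}$, and balancing $T, K$ against the target exponents produces the stated tradeoff.

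The main obstacle, and the real departure from \cite{blanchard2023quadratic}, is that the memory state of a randomized algorithm is jointly distributed with the algorithm's private coins and the hidden family, so one cannot directly decode any $\bv_i$ from $S_t$ via pigeonhole. The key technical lemma I expect to need states that, conditioned on both the transcript and the algorithm's randomness, every unqueried $\bv_i$ remains close to uniform on an $\Omega(d)$-dimensional subspace orthogonal to the earlier query directions. Combined with a hypothesis-testing / data-processing step, this would translate any output correlated with $\bv_i$ into either an explicit directional query or a measurable drop in the conditional entropy of $S_t$. Pushing this $\KL$-based analysis uniformly over randomness, rather than the deterministic decoding argument available to \cite{blanchard2023quadratic}, is where I anticipate the bulk of the technical work, and is the source of the weaker $\delta/6$ exponent (compared to the $\delta/3$ in the deterministic bound) on the query side.
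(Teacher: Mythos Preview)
Your proposal has a genuine structural gap: the hard instance you describe is missing the projection term $\|\bA\bx\|_\infty$ that is the actual source of the memory bottleneck in this line of work. A pure Nemirovski function $\max_i(\langle \bv_i,\bx\rangle-\gamma_i)+\alpha\|\bx\|_2^2$ can be minimized by vanilla subgradient descent with $O(d)$ memory regardless of how many hidden directions $K$ you pack in, because the algorithm only needs to maintain its current iterate, not the revealed $\bv_i$'s. What forces quadratic memory in \cite{marsden2022efficient} and in the present paper is that every informative query must lie (nearly) in the null space of a random $\bA\in\{-1,1\}^{(d/2)\times d}$; learning enough about $\bA$ to produce such queries is what costs $\Omega(d^2)$ bits. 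Your instance has nothing playing the role of $\bA$, so there is no object with $\Theta(d^2)$ entropy for the memory to be forced to encode.

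Relatedly, your device of drawing the $\bv_i$ ``orthogonal to every direction queried so far'' is exactly the adaptive construction of \cite{blanchard2023quadratic}, and the paper explicitly explains why it fails against randomized algorithms: a randomized algorithm can query in a random direction, and the adversary cannot make a freshly sampled $\bv_i$ orthogonal to a query that has not yet been made. So the conditional-uniformity lemma you highlight as the ``key technical lemma'' is false in the form you need. Finally, the back-of-the-envelope inequality $M\cdot T + Q\cdot d^{o(1)}\gtrsim T\cdot K\cdot d^{1-o(1)}$ does not balance: each query returns a full $d$-dimensional subgradient (so contributes $\Theta(d)$ bits, not $d^{o(1)}$), and with $T=\polylog(d)$, $K\approx d$ you would get at best $Q\gtrsim d^{1+o(1)}$, not $d^{1+\delta/6}$.

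The paper takes a different route. It keeps the \cite{marsden2022efficient} instance (Nemirovski plus the $\|\bA\bx\|_\infty$ barrier) and reduces not to the one-round \emph{orthogonal vector game} but to a new three-round \emph{correlated orthogonal vector game}: Alice holds $\bA$, Bob holds a random $\bv$, and after Alice sends a short message and then $n$ rows of $\bA$, Bob must output a \emph{single} unit vector that is simultaneously $\xi$-orthogonal to $\bA$ and $\sqrt{s/d}$-correlated with $\bv$. The lower bound for this game is proved by an iterative encoding/compression argument (not mutual information or Fano): one shows that a too-good protocol would let you encode a constant fraction of all matrices $\bA\in\{-1,1\}^{(d/2)\times d}$ into a set of size $2^{d^2/2}/d$, a contradiction. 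The recursion over $\log\log d$ levels, with doubly-exponentially shrinking cover radii and robustly-linearly-independent tuples, is the technical heart and is what replaces the adaptive-adversary trick that only works deterministically.
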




After reviewing additional related work in Section \ref{sec:related}, we give an overview of our techniques in Section \ref{sec:overview}. At a high level, our proof is based on a novel reduction to a two-player, three-round communication problem which we call the \emph{correlated orthogonal vector game}.
Crucially, it differs from the \emph{orthogonal vector game} considered in both \cite{marsden2022efficient} and \cite{blanchard2023quadratic} in  two aspects: (a) It suffices for Bob to output a single vector that is orthogonal to vectors of Alice, but (b) the vector cannot be an arbitrary one but needs to have a strong correlation with a random vector that Bob sends to Alice in the second round. See Section \ref{sec:sketchgame}.
The major challenge is to prove a lower bound for the correlated orthogonal vector game, from which Theorem \ref{theo:main} follows.
For this purpose, we develop a recursive encoding scheme which we explain in more details in Section \ref{sec:sketchlowerbound}.
We believe that our techniques may have applications in understanding memory-query tradeoffs in other settings.


\subsection{Additional related work}\label{sec:related}
\paragraph{Learning with limited memory} The role of memory for learning has been extensively studied in the past few years, including memory-sample tradeoffs for parity learning \cite{raz2017time,raz2018fast,garg2018extractor,garg2019time,garg2021memory,liu2023memory} and linear regression \cite{sharan2019memory}, memory-regret tradeoffs for online learning \cite{peng2022online, peng2023near, woodruff2022memory,woodruff2023streaming,aamand2023improved}, memory bounds for continual learning \cite{chen2022memory}, communication/memory bounds for statistical estimation \cite{balcan2012distributed, garg2014communication,steinhardt2015minimax,braverman2016communication} and many others \cite{steinhardt2016memory, dagan2018detecting,dagan2019space,gonen2020towards, braverman2020gradient, feldman2020does, brown2021memorization, brown2022strong}.

\paragraph{Literature of convex optimization} There is a long line of literature on convex optimization, we refer readers to \cite{nesterov2003introductory, bubeck2015convex} for a general coverage of the area. 
In particular, there is a long line of work \cite{nemirovskij1983problem, nesterov2003introductory,woodworth2016tight,braun2017lower,simchowitz2018tight, braverman2020gradient} on establishing query lower bounds for finding approximate minimizers of Lipschitz functions, with access to a subgradient oracle. A recent line of work \cite{nemirovski1994parallel,balkanski2018parallelization, bubeck2019complexity, diakonikolas2019lower} studies lower bounds for parallel algorithms.
Memory efficient optimization algorithms have also been proposed, including limited-memory-BFGS \cite{liu1989limited,nocedal1980updating}, conjugate gradient descent \cite{hestenes1952methods,fletcher1964function,hager2006survey}, sketched/subsampled quasi-newton method \cite{pilanci2017newton,roosta2019sub} and recursive cutting-planes \cite{blanchard2023constrained}.

\section{Technique overview}\label{sec:overview}

We provide a high-level overview of   techniques behind the proof of Theorem \ref{theo:main}. To establish the memory-query tradeoffs, we use the same family of hard instance as \cite{marsden2022efficient} and our contribution is a novel and improved analysis. Define  
\begin{align}
F(\bx) =  \frac{1}{\sqrt{d}L} \cdot \max \left\{L \|\bA\bx\|_\infty - 1, \hspace{0.06cm}\max_{i \in [N]}\Big(\langle \bv_i, \bx\rangle - i \gamma\Big) \right\} \label{eq:obj-over}
\end{align}
where $\smash{\bA \sim \{-1,1\}^{(d/2)\times d}}$, $\smash{\bv_i \sim \frac{1}{\sqrt{d}}\mH_d}$ independently and uniformly at random with $\mH_d:= \{-1,1\}^d$.

Letting $\delta\in (0,1)$ be the constant in Theorem \ref{theo:main}, we use the standard choice of $$\gamma = \wt{O}(d^{-\delta/4})\quad\text{and}\quad N = \wt{O}(d^{\delta/6})$$ for Nemirovski function, and $L = \exp(\log^5 d )$ is a large scaling factor.

The function \eqref{eq:obj-over} consists of two parts: the projection term $\|\bA\bx\|_\infty$ and the Nemirovski function $\max_{i \in [N]}(\langle \bv_i, \bx\rangle - i \gamma)$. From a high level, the Nemirovski function enforces that any convex optimization algorithm, with high probability, has to obtain $\bv_1, \ldots, \bv_{N}$  in this order, and  to obtain the subgradient $\bv_{i+1}$, the algorithm must  submit a query $\bx$ that has $\Omega(\gamma)$-correlation with $\bv_i$ and at the same time,
  is almost orthogonal to $\bA$, as enforced by the projection term. 


\subsection{Reduction to correlated orthogonal vector game}\label{sec:sketchgame}

At the heart of our proof is the new correlated orthogonal vector game. It is an abstract model that captures the hardness of the Nemirovski function and the projection in the hard functions $F$.
\begin{definition}[$(s, \xi)$-Correlated orthogonal vector game]
Let $d\in \mathbb{N}$ be the dimension. Let $k, s, n \in [d]$ and $\xi \in (0, 1)$ be input parameters.
The $(s, \xi)$-correlated orthogonal vector game is a two-player three-round communication problem: Alice receives a matrix $\bA \sim \{-1,1\}^{(d/2)\times d}$ and Bob receives a vector $\bv \sim \frac{1}{\sqrt{d}}\mH_d$. A deterministic $(k, n)$-communication protocol proceeds in three rounds:
\begin{enumerate}
\item[] Round 1: Alice sends Bob a message $\bM$ of length $dk$.
\item[] Round 2: Bob sends Alice the vector $\bv$.
\item[] Round 3: Alice sends Bob $n$ rows $\ba_1, \ldots, \ba_n \in \row(\bA) \cup \{\nil\}$. 
\end{enumerate}
At the end, Bob is required to output a vector $\bx \in \mathbb{B}^d$, such that, with probability at least $1/2$ (over the randomness of $\bA$ and $\bv$), 
\begin{itemize}
\item []  {\bf Orthogonality:} $\bx$ is almost orthogonal to $\bA$, i.e., $\|\bA\bx\|_{\infty} \leq \xi$, and
\item []  {\bf Correlation:} $\bx$ has large correlations with $\bv$, i.e., $|\langle \bv, \bx \rangle| \geq \sqrt{s/d}$.
\end{itemize}
A randomized $(k,n)$-communication protocol is a distribution of deterministic protocols.
\end{definition}

Suppose that there is a randomized convex optimization algorithm that uses $S=d^{2-\delta}$ bits of memory, makes $T=d^{1+\delta/6-o(1)}$ queries, and finds an $\eps$-optimal solution of a random $F$  with probability at least $2/3$, where $\eps$ is quasipolynomially small in $d$.
We show in Section \ref{sec:reduction} that such an algorithm can be used to obtain a $(k,n)$-communication protocol with $k=S/d$ and $n\approx T/N$ that solves the $(s,\xi)$-correlated orthogonal vector game with $s\approx d\gamma^2\approx d^{1-\delta/2} $ and a quasipolynomially small $\xi$. The novelty is mainly in the definition of the correlated orthogonal vector game and that we can manage to prove a strong lower bound for it, as sketched in the next subsection; the reduction itself uses standard techniques from the literature.

\subsection{Lower bound for the correlated orthogonal vector game}\label{sec:sketchlowerbound}
We next that prove any $(k,n)$-communication protocol for $(s, \xi)$-correlated orthogonal vector game, requires either sending $kd \geq \Omega(s^{2})$ bits in the first round or sending $n \geq d^{1-o(1)}$ rows in the third round.
By our choice of parameters, this implies the third round message must contain $T/N = d^{1-o(1)}$ rows, and  yields a query lower bound of  $T\geq d^{1+\delta/6 - o(1)}$.


\begin{remark}[Sharpness of parameters] We remark on the relation of $k,s,d, n$.
A communication protocol that sends $n = d/2$ rows in the third round could of course resolve the game. Hence $n \geq d^{1-o(1)}$ is almost the best one can prove.
Meanwhile, for the first round, Alice could send a $k$-dimensional subspace $U$ (this costs roughly $kd$ bits) that is orthogonal to $\bA$, and even without the third round message, Bob can extract a vector $\bx \in U$ that satisfies $|\langle \bx, \bv\rangle| \geq O(\sqrt{k/d})$. This implies that $k \geq \Omega(s)$ is the best one can prove. 
Our lower bound does not exactly match this and we only manage to prove $kd \geq \Omega(s^2)$. Though the later already suffices for an improved (quadratic) lower bound for randomized algorithms.
\end{remark}

The proof of this lower bound turns out to be  challenging and it is our main technical contribution. We prove by contradiction and assume (1) $kd \ll s^2$ and (2) $ \Delta := d/n\geq \exp(\log d/\log\log d)$.
Our lower bound is established via an iterative encoding argument: We provide an encoding procedure, built upon the (too-good-to-be-true) communication protocol, that encodes (almost) all matrices $\bA \in \{0, 1\}^{(d/2)\times d}$ into a set of size no more than $2^{d^2/2}/d$.

\paragraph{Warm up: Encoding in the bit model} Our encoding argument is delicate and we first illustrate the basic idea in a simpler ``bit model'' of the correlated orthogonal vector game:  Alice sends $n$ bits (instead rows) in the third round. This is strictly weaker than the original model because $n$ rows can transmit at least $n$ bits of message (even when they are required to be row vectors of $\bA$). 

In this model, Bob's output vector $\bx$, is a function of the first round message $\bM \in \{0,1\}^{kd}$, the third round message $\bb \in \{0,1\}^{n}$ and its input vector $\bv \in \frac{1}{\sqrt{d}}\mH_{d}$, and we denote as $\bx_{\bM, \bv, \bb}$. Fix a matrix $\bA \in \{-1,1\}^{(d/2)\times d}$ and the first round message $\bM_\bA$, the choice of $\bv$ is still uniformly at random, and we observe that the collection $\{\bx_{\bM_\bA, \bv, \bb}\}_{\bv \in \frac{1}{\sqrt{d}}\mH_{d}, \bb\in \{0,1\}^n}$ should contain at least $s$ linearly independent vectors that are orthogonal to $\bA$, and this continues to hold w.h.p. when one restricts to a subset $V^{*} \subseteq \mH_{d}$ of size $O(s)$. 
That is to say, using a probabilistic argument, there exists a set $V^{*} \subseteq \frac{1}{\sqrt{d}}\mH_d$ ($|V^{*}| = O(s)$) such that $\{\bx_{\bM_\bA, \bv, \bb}\}_{\bv \in V^{*}, \bb \in \{0,1\}^n}$ contains at least $s$ linearly independent vectors that are orthogonal to $\bA$, for at least $1/2$-fraction of matrices $\bA \in \{0, 1\}^{(d/2)\times d}$. We denote this set as $\nA$

Consider the following (one-shot) encoding strategy: Fix a message $\bM \in \{0,1\}^{kd}$, a $s$-tuple $\bx_1, \ldots, \bx_{s} \in \{\bx_{\bM, \bv, \bb}\}_{\bv \in V^{*}, \bb \in \{0,1\}^n}$ that are linearly independent, include all matrices $\bA \in \{-1,1\}^{(d/2)\times d}$ whose rows are orthogonal to $\bx_{1}, \ldots, \bx_{s}$.
The above argument implies that all matrices $\bA \in \nA$ are included, but on the hand, the encoding scheme only encodes 
\[
2^{kd} \times (2^n s)^{s} \times 2^{(d/2)\times (d - s)} \leq 2^{d^2/2+ kd + ns\log_2(s) -ds/2} \ll \frac{1}{2} \cdot 2^{d^2/2} = |\nA| 
\]
different matrices. Here the first term of LHS is the number of message $\bM$, the second term is the number of $s$-tuple $\bx_1, \ldots, \bx_{s}$ (note this is the place we need $|V^{*}| \leq O(s)$) and the third term is the number of matrices $\bA$ that are orthogonal to the $s$-tuple.

\paragraph{Encoding in the row model} We next design an encoding strategy based on the (too-good-to-be-true) communication protocol that sends $n$ rows instead of $n$ bits, which turns out to be much more challenging and requires an iterative encoding.
In the row model, the output $\bx$ of Bob depends on the message $\bM \in \{0,1\}^{kd}$, the vector $\bv \in \frac{1}{\sqrt{d}}\mH_{d}$ as well as $n$ rows $\bR \subseteq \row(\bA) \cup \{\nil\}$, and we denote it as $\bx_{\bM, \bv, \bR}$. 
The first step is similar and we select a subset $V^{*} \subseteq \frac{1}{\sqrt{d}}\mH_d$ that is ``representive''.
In particular, one can select a subset $V^{*} \subseteq \frac{1}{\sqrt{d}}\mH_{d}$, of size at most $2^{s}$, such that $\{\bx_{\bM_\bA, \bv, \bR}\}_{\bv \in V^{*}, \bR \in \row(\bA) \cup \{\nil\}}$ contains at least $N_0 = 2^{\Omega(s)}$ ``well-spread'' output vectors $\bx_1, \ldots, \bx_{N_0}$ that are orthogonal to $\bA$, for $1/2$-fraction of matrices $\bA \in \{-1,1\}^{(d/2)\times d}$. 
Here ``well-spread'' means the pairwise distance are at least $\alpha_{0} = 1/d^8$. Again, we denote this set as $\nA$.



To apply a similar encoding strategy, we need to fix the message $\bM$, and the $s$-tuple from $\T^{\bM}(\bA) := \{\bx_{\bM, \bv, \bR}\}_{\bv \in V^{*}, \bR \in \row(\bA)\cup \{\nil\}}$, this turns out to be circular because it already needs the full knowledge of $\bA$. 
Hence, a natural idea is to select a subset of rows $\bA_{1}$ of $\bA$, and uses the $s$-tuple from $\T^{\bM}(\bA_{1}) = \{\bx_{\bM, \bv, \bR}\}_{\bv \in V^{*}, \bR \in \row(\bA_{1}) \cup \{\nil\}}$. 
Let $\bA_{2} = \bA\backslash \bA_{1}$ be the removed sub-matrix and we need its size to be as large as possible -- this is the part where we get compressions.
Meanwhile, one can not hope to remove too many rows, because removing one row would decrease the size of $\T^{\bM}(\bA)$ by a factor of roughly $(1-1/\Delta)$. There are $N_0 = 2^{\Omega(s)}$ well-spread vectors in $\T^{\bM_\bA}(\bA)$ (that are also orthogonal to $\bA$), so one only affords to remove at most $s\Delta$ rows.

The encoding strategy would fix a message $\bM\in \{0,1\}^{kd}$, a submatrix $\bA_1 \in \{-1,1\}^{(d/2- s\Delta)\times d}$, a $s$-tuple $\bx_1, \ldots, \bx_{s} \in \{\bx_{\bM, \bv, \bb}\}_{\bv \in V^{*}, \bR \in \row(\bA_{1})}$ that are (1) linearly independent, and (2) orthogonal to $\bA_{1}$, then it includes all matrices $\wt{\bA}_{2} \in \{-1,1\}^{s\Delta\times d}$ whose rows are orthogonal to $\bx_{1}, \ldots, \bx_{s}$. It encompass $\nA$ and has size at most\footnote{We also need to fix the row indices of $\bA_1$, but it is a lower order term that we omit here for simplicity.}
\[
2^{kd} \times 2^{(d/2-s\Delta)\times d} \times (d^{n}2^s)^s \times 2^{s\Delta\times (d - s)}  \approx 2^{d^2/2 +kd +sn\log_2 (d) -s^2\Delta }.
\]
This is much larger than $2^{d^2/2}$ and not useful at all!
We note in the above expression, the first term is the number of message $\bM$, the second term is the total number of $\bA_{1}$, the third term is the number of $s$-tuple $\bx_1, \ldots, \bx_{s}$ and the last term is the number of orthogonal matrices $\wt{\bA}_{2}$. 


\paragraph{Iterative encoding} Our key observation is that: If there are too many $s$-tuples that are linearly independent and orthogonal to $\bA_{1}$, then there must exists a large number of well-spread vectors in $\T^{\bM_{\bA}}(\bA_{1}) = \{\bx_{\bM_{\bA}, \bv, \bR}\}_{\bv \in V^{*}, \bR \in \row(\bA_{1}) \cup \{\nil\}}$ that are orthogonal to $\bA_{1}$. In particular, if there are $2^{\Omega(s^2\Delta)}$ different $s$-tuples, then there are $N_1:= 2^{\Omega(s\Delta)}$ well-spread vectors that are orthogonal to $\bA_{1}$. This guarantee is stronger than the original one: $N_0 =2^{\Omega(s)}$ well-spread vectors that are orthogonal to $\bA$. Therefore, if we fail to compress $\bA$ by removing $s\Delta$ rows, we can try to compress it by removing another $s\Delta^2$ rows. 
We can repeat it for $\log\log (d)$ iterations and it yields our final iterative encoding scheme.

There are a few subtle yet critical details that are (intentionally) omitted by us, e.g., the ``well-spreadness'' (the choice of pairwise distance $\alpha$) are doubly exponentially decreasing at each iteration, and we need robustly linear independence (similar to \cite{marsden2022efficient}), we refer readers to Section \ref{sec:lower} for these technical details.

\subsection{Comparison with \cite{marsden2022efficient, blanchard2023quadratic}}
We compare our techniques with previous work of \cite{marsden2022efficient, blanchard2023quadratic}. 
Both proofs are based on reductions to the following \emph{orthogonal vector game}:
Alice receives a random matrix $\bA\in \{-1, 1\}^{(d/2)\times d}$, sends Bob a message $\bM \in \{0, 1\}^{kd}$ and $n$ rows $\bR$ from $\bA$, and Bob is required to output $\Omega(k)$ linearly independent vectors that are orthogonal to $\bA$.
On the one hand, \cite{marsden2022efficient} proves a lower bound of $n \geq \Omega(d)$ for orthogonal vector game. 
On the other hand, they show that any $S$-bit, $T$-query randomized algorithm can be used to obtain a protocol for the orthogonal game with $k=S/d$ and $n=T/(N/k)$ and thus, using $n\ge \Omega(d)$, one has
$T=\Omega(Nd^2/S)$. 
However, $N$ can only be $O(d^{1/3})$ in the Nemirovski function, which limits their tradeoffs to apply only up to $S=d^{4/3}$ (for $T$ to be superlinear).
The tradeoffs of \cite{blanchard2023quadratic}, on the other hand, are 
based on the idea that, when against  deterministic algorithms, one can adaptively build  the vector $\bv_{i}$ in the Nemirovski  function (basing on the algorithm's queries so far) and make it orthogonal to previous queries. This increases the size $N$ of Nemirovski function to be $d$ and thus, the tradeoff holds even near $S=d^2$. 
However, 
the argument of \cite{blanchard2023quadratic} is dedicated to deterministic algorithms and can be easily sidestepped by  randomized algorithms.

The correlated orthogonal vector game introduced in this paper is more fine-grained. It proceeds in three rounds (instead of one round as in the orthogonal vector game) and Bob is required to output only one orthogonal vector that has large correlations with the random vector $\bv$.
It is not surprising that our encoding argument gives an alternative proof for the orthogonal vector game: One simply enumerates the message $\bM$, the $n$ rows $\bR$, and the rest rows of matrix $\bA$, which are required to be orthogonal to the $\Omega(k)$ output vectors from $(\bM, \bR)$.
Proving a strong lower bound for the correlated orthogonal vector game is technically the most challenging part of the paper.

\section{Preliminaries}

\paragraph{Notation} Let $[n] = \{1,2,\ldots, n\}$ and $[n_1:n_2] = \{n_1, n_1+1, \ldots, n_2\}$. Let $\mB^{d}$ be the $d$-dimensional unit ball, $\mS^d$ be the $d$-dimensional unit sphere, and $\mH_d = \{-1,1\}^{d}$ be the $d$-dimensional Boolean hypercube.
Given a point $\bx \in \R^d$ and $\delta > 0$, we write $\mB(\bx, \delta):= \{\by: \|\bx -\by\|_2\leq \delta\}$ to denote the $\delta$-ball centered at $\bx$.
For a set of vectors $\bx_1, \ldots, \bx_{m} \in \R^d$, we use $\mathsf{span}(\bx_1, \ldots, \bx_{m}) \subseteq \R^{d}$ to denote the subspace spanned by $\bx_1, \ldots, \bx_{m}$.
Given a subspace $S$ and vector $\bx\in \R^{d}$, we use $\proj_{S}(\bx) \in \R^{d}$ to denote the projection of $\bx$ onto $S$.
Given a matrix $\bA$, we write $\row(\bA)$ to denote the set of its row vectors.
We write $x \sim X$ if the random variable $x$ is drawn uniformly at random from a set $X$.

We consider the following oracle model of convex optimization.
\begin{definition}[Memory constrained convex optimization]
Let $d \in \mathbb{N} $ be the dimension.
An~\mbox{$S$-bit}   and $T$-query convex optimization deterministic algorithm with first-order oracle access runs as follows.
Given access to a first-order oracle of a $1$-Lipschitz convex function $F:$ $ \mathbb{B}^d \rightarrow \R$,
\begin{flushleft}\begin{enumerate}
\item The algorithm starts by initializing  
  the memory to be a string $\bM_0\in \{0,1\}^S$;
\item During iteration $t\in [T]$, the algorithm  
  picks a query $\bx_t\in \mB^d$ based on  $\bM_{t-1}$,
  obtains from the oracle both $F(\bx_t)\in \mathbb{R}$
  and a subgradient $\bg(\bx_t) \in \partial F(\bx_t) \subseteq \R^{d}$, and updates the memory state to $\bM_{t }\in \{0,1\}^S$ based on $ \bM_{t-1},F(\bx_t)$ and $\bg(\bx_t)$.
\item Finally, after $T$ iterations, the algorithm   
  outputs a point $\bx_{T} \in \mathbb{B}^d$ based on $\bM_{T-1}$.
\end{enumerate}\end{flushleft}
An $S$-bit and $T$-query randomized algorithm is a distribution over deterministic algorithms.
We say a randomized algorithm finds an $\epsilon$-optimal point with probability at least $1-\rho$ if for any $f$,  
\[
F(\bx_{T}) - \argmin_{\bx \in \mathbb{B}^d} F(\bx ) \leq \eps.
\]
with probability at least $1-\rho$.
\end{definition}

\section{Reduction to correlated orthogonal vector game}
\label{sec:reduction}

Let $\delta \in (0,1)$ be the constant in Theorem \ref{theo:main}. We use the following hard instances from \cite{marsden2022efficient}:
\[
F(\bx) =  \frac{1}{\sqrt{d}L} \cdot \max \left\{L \|\bA\bx\|_\infty - 1, \hspace{0.06cm}\max_{i \in [N]}\Big(\langle \bv_i, \bx\rangle - i \gamma\Big) \right\}
\]
where $\bA \sim \{-1,1\}^{(d/2) \times d}$, $\bv_i \sim \frac{1}{\sqrt{d}}\mH_d$ (so that each $\bv_i$ is a unit vector), and 
\[
\gamma = \frac{\log^2 d}{d^{\delta/4}}, \quad N = \frac{d^{\delta/6}}{\log^4d} \quad \text{and} \quad L = \exp\big(\log^5d\big).
\]
It is easy to verify that
$F$ is $1$-Lipschitz and convex.

\paragraph{First-order oracle} Given $\bA$, $\bv_i$ and the function $F$ they define, we will use $\bg$ below as the subgradient part of the first-order oracle of $F$: on any query point $\bx\in \mB^d$, if the maximum is achieved at either
$
\frac{L\cdot \langle \bA_j,\bx\rangle-1}{\sqrt{d}L}
$
or $
\frac{-L\cdot \langle \bA_j,\bx\rangle-1}{\sqrt{d}L},
$
for some row $\bA_j$ of $\bA$ ($j\in [d/2]$), $\bg$ returns either $\bA_j/\sqrt{d}$ or $-\bA_j/\sqrt{d}$ accordingly with the smallest such $j\in [d/2]$;
otherwise, $\bg$ returns $\bv_i/(\sqrt{d}L)$ with the smallest $i\in [N]$ that achieves the maximum. 
In the rest of the section, when $F$ is the input function, the algorithm has access to $F$ and $\bg$ for subgradients of $F$ as the first-order oracle.

Our main result in this section is the following lemma, which shows that any query-efficient randomized convex optimization algorithm would imply an efficient deterministic communication protocol for the correlated orthogonal vector game. 
\begin{lemma}[Reduction]
\label{lem:reduction}
Let  $\eps = 1/{(d^2L)}$. 
If there is a randomized convex optimization  algorithm that uses  $S=d^{2-\delta}$ bits of memory, makes  $T = d^{1+\delta/6 - o(1)}$ queries and finds an $\eps$-optimal point~with probability at least $2/3$, 
then there is a deterministic $(k, n)$-communication protocol that succeeds with probability at least $1/2$ for the $(s, \xi)$-correlated orthogonal vector game, with parameters
$$
 k = \frac{S}{d}=d^{1-\delta - o(1)},\quad
 n=\frac{40T}{N}= d^{1-o(1)},\quad 
 s = d^{1-\delta/2}\log^2d\quad
 \text{and}\quad \xi = \frac{2}{L}.
$$
\end{lemma}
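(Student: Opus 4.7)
The plan is to embed a hypothetical $(S,T)$-algorithm $\mA$ into a three-round protocol by cutting the execution of $\mA$ at the first moment it ``discovers'' $\bv_{i^{*}}$ for a random $i^{*}\in[N]$: Alice runs the simulation before the cut and Bob reconstructs it afterwards. Write $\tau_i$ for the first query index at which the subgradient oracle returns $\pm\bv_i/(\sqrt{d}L)$, setting $\tau_0=0$.

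Using public randomness, sample the coins of $\mA$, a uniform $i^{*}\in[N]$, and independent $\bv_j\sim\frac{1}{\sqrt d}\mH_d$ for $j\ne i^{*}$; identify Bob's input $\bv$ with $\bv_{i^{*}}$ so that together with Alice's $\bA$ the function $F$ in \eqref{eq:obj-over} is fully determined. A structural argument inherited from \cite{marsden2022efficient}, which exploits the anti-concentration $|\langle\bv_j,\bx_t\rangle|=O(\sqrt{\log(NT)/d})\ll\gamma$ for each $\bv_j$ that is still independent of the current trajectory, shows that with probability $1-o(1)$ over $(\bA,\{\bv_j\}_j)$ the algorithm discovers $\bv_1,\ldots,\bv_N$ strictly in order and that the query $\bx_{\tau_{i^{*}}}$ satisfies $\|\bA\bx_{\tau_{i^{*}}}\|_\infty\le 2/L=\xi$ (because the $\bv_{i^{*}}$-term, bounded above by $1$, must beat $L\|\bA\bx\|_\infty-1$) as well as $|\langle\bv_{i^{*}},\bx_{\tau_{i^{*}}}\rangle|\ge\sqrt{s/d}$ (the $\log d$ slack between $\gamma=(\log^2 d)/d^{\delta/4}$ and $\sqrt{s/d}=(\log d)/d^{\delta/4}$ absorbs the anti-concentration tail).

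The protocol then runs as follows. In round~1, Alice simulates $\mA$ on the truncated function that drops all terms $\langle\bv_j,\bx\rangle - j\gamma$ with $j\ge i^{*}$; she has $\bA$ and every $\bv_j$ with $j<i^{*}$, and by the structural statement above this truncated trace matches the true one exactly through step $\tau_{i^{*}-1}$. Alice then transmits the memory string $\bM_{\tau_{i^{*}-1}}\in\{0,1\}^S$, a message of length $dk=S$. In round~2 Bob sends $\bv$. In round~3 Alice resumes the true simulation from $\bM_{\tau_{i^{*}-1}}$ until step $\tau_{i^{*}}$, and for every intervening step $t$ transmits the row $\bA_j\in\row(\bA)$ used by the oracle if the maximum sits on the $\bA$-term, and $\nil$ otherwise, padding to $n=40T/N$ entries. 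Bob then replays the simulation: he computes each $\bx_t$ deterministically from $\bM_{t-1}$, and Alice's $t$-th symbol determines both $\bg(\bx_t)$ and $F(\bx_t)$. A row $\bA_j$ gives $\bg(\bx_t)=\mathrm{sign}(\langle\bA_j,\bx_t\rangle)\bA_j/\sqrt{d}$ and $F(\bx_t)=(L|\langle\bA_j,\bx_t\rangle|-1)/(\sqrt{d}L)$; a $\nil$ tells Bob to use the smallest $j\in[N]$ maximizing $\langle\bv_j,\bx_t\rangle-j\gamma$ (he knows all $\bv_j$ now) together with the corresponding subgradient and value. He outputs $\bx_{\tau_{i^{*}}}$, recognized as the first $t$ at which this smallest $j$ equals $i^{*}$.

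Markov's inequality on the uniform choice of $i^{*}$ gives $\tau_{i^{*}}-\tau_{i^{*}-1}\le 40T/N=n$ with probability at least $1-1/40$, so combined with $\mA$'s success probability $2/3$ and the $1-o(1)$ structural event, the randomized protocol succeeds with probability at least $1/2$; fixing the best realization of the public randomness by Yao's principle yields the desired deterministic $(k,n)$-protocol with $k=S/d$ and $n=40T/N$. The main obstacle, and the step that dictates the quantitative relationship between the Nemirovski parameters $\gamma,N$ and the game parameter $s$, is the correlation lower bound $|\langle\bv_{i^{*}},\bx_{\tau_{i^{*}}}\rangle|\ge\sqrt{s/d}$: one only gets the single discovery inequality $\langle\bv_{i^{*}}-\bv_{i^{*}-1},\bx_{\tau_{i^{*}}}\rangle\ge\gamma$ for free, and promoting this to an absolute lower bound on $|\langle\bv_{i^{*}},\bx_{\tau_{i^{*}}}\rangle|$ requires anti-concentration of $\langle\bv_j,\bx_{\tau_{i^{*}}}\rangle$ at the nearby indices; a secondary subtlety is to maintain exact synchronization of the simulated memory states through the handoff from Alice's truncated simulation to the true simulation shared with Bob.
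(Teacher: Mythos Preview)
Your overall architecture---cut the execution at a well-chosen time, ship the $S$-bit memory to Bob, and let him replay the next $n$ steps from Alice's row-or-$\nil$ transcript---matches the paper, and your random $i^{*}$ plus Yao is equivalent to the paper's averaging over a fixed $i$ (Lemma~\ref{lem:gap}). The truncated-function simulation in round~1 is a legitimate variant of the paper's $V$-versus-$V'$ coupling. The gap is in the correlation bound, and it is not a matter of tightening constants: the stopping time $\tau_{i^{*}}$ you chose is the wrong one.

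By your own ordering argument, for every $t\le\tau_{i^{*}}$ the history $H_{t-1}$ contains only $\bA$-rows and $\bv_1,\ldots,\bv_{i^{*}-1}$, so the query $\bx_t$ is a deterministic function of these and hence independent of $\bv_{i^{*}}$. A union bound over $t\le T$ then gives $|\langle\bv_{i^{*}},\bx_t\rangle|=O(\sqrt{(\log T)/d})$ for \emph{all} such $t$ with high probability, in particular for $t=\tau_{i^{*}}$. Since $\sqrt{s/d}=(\log d)\,d^{-\delta/4}\gg\sqrt{(\log T)/d}$, the correlation requirement fails at $\tau_{i^{*}}$ with high probability, not merely in corner cases. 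The discovery inequality $\langle\bv_{i^{*}}-\bv_{i^{*}-1},\bx_{\tau_{i^{*}}}\rangle\ge\gamma$ is perfectly consistent with this: the algorithm, having already seen $\bv_{i^{*}-1}$, can (and an optimizing algorithm typically \emph{will}) drive $\langle\bv_{i^{*}-1},\bx\rangle$ very negative, so that $\bv_{i^{*}}$ attains the Nemirovski maximum at $\bx$ while $\langle\bv_{i^{*}},\bx\rangle\approx 0$. No anti-concentration on ``nearby indices'' rescues this: for $j<i^{*}$ the vectors are already revealed and may be arbitrarily correlated with $\bx_{\tau_{i^{*}}}$, while for $j>i^{*}$ the comparison $\langle\bv_{i^{*}},\bx\rangle\ge\langle\bv_j,\bx\rangle-(j-i^{*})\gamma$ points the wrong way.

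The paper's fix is to change the stopping time so that the correlation is built into its definition. It sets $t_i$ (Definition~\ref{def:corrtime}) to be the first step at which $|\langle\bx_{t_i},\bv_i\rangle|\ge\gamma/4$ \emph{and} $\|\bA\bx_{t_i}\|_\infty\le\xi$; then $\gamma/4\ge\sqrt{s/d}$ gives the correlation for free, and the work shifts to proving that such a time exists, i.e.\ $t_N\le T$ (Lemma~\ref{lem:success}). This follows because any $\eps$-optimal query must have $\langle\bv_N,\bx\rangle$ significantly negative while being nearly orthogonal to $\bA$ (Lemma~\ref{lem:obj-value}). The ordering $t_1\le\cdots\le t_N$ (Lemma~\ref{lem:first-time}), the gap bound, and the simulation coupling (Claim~\ref{claim:hehe}: before $t_i$ the oracle never returns $\bv_{i+1}$, so swapping $\bv_{i+1}$ for Bob's $\bv$ leaves the first $t_i-1$ rounds unchanged) then proceed essentially as in your sketch.
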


\def\ALG{\textsf{ALG}}

Assume such a randomized convex optimization algorithm exists. Since randomized algorithms are distributions of deterministic algorithms, there must be a deterministic $S$-bit-memory, $T$-query algorithm that outputs an $\epsilon$-optimal point of $F$ with probability at least $2/3$ (over the randomness of $\bA$ and $\bv_i$).
Let $\ALG$ denote such an algorithm.
For convenience we assume that the last point that $\ALG$ queries is the same point it outputs at the end; it is without loss of generality since such a requirement can only increase the number of queries by one.
We use it to obtain a deterministic communication protocol for the correlated orthogonal vector game in the rest of the section.

Given a function $F$, we first use the execution of $\ALG$ on $F$ to define the \emph{correlation time}:

\begin{definition}[Correlation time]\label{def:corrtime}
For each $i \in [N]$, let $t_i \in [T] \cup \{\infty\}$ be the first time that $\emph{\ALG}$ submits a query $\bx_{t_i} \in \mB^d$  such that
\begin{enumerate}
\item  $\bx_{t_i}$ has a large correlation  with $\bv_i$: $|\langle \bx_{t_i}, \bv_i\rangle| \geq  
 {\gamma}/{4}$; and
\item $\bx_{t_i}$ is almost orthogonal to $\bA$: $\|\bA\bx_{t_i}\|_{\infty} \leq  \xi$.
\end{enumerate}
If no such query exists during the execution of $\emph{\ALG}$, then we set $t_i = \infty$.
\end{definition}

First we show that, with high probability, the correlation time are in order $t_1 \leq t_2 \leq \cdots \leq t_{N}$:
\begin{lemma}
\label{lem:first-time}
Fix any  $\bA \in \{-1,1\}^{ ({d}/{2})\times d}$. With probability at least $1-d^{-\omega(1)}$ over the randomness of $\bv_1, \ldots, \bv_N$, we have
$t_1 \leq t_2 \leq \cdots \leq t_{N}$. 
\end{lemma}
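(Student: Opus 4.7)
The plan is to show that, with high probability over $\bv_1,\ldots,\bv_N$, the oracle returns the subgradients $\pm\bv_i/(\sqrt{d}L)$ (call this ``revealing $\bv_i$'') in the order $i=1,2,\ldots,N$, and that each reveal coincides with a query witnessing the correlation time $t_i$. These two facts together give $t_1\le t_2\le\cdots\le t_N$. Throughout we fix $\bA$ and treat $\bv_1,\ldots,\bv_N$ as the only source of randomness. Say that $\bv_j$ is \emph{revealed by time $t$} if the subgradient returned by the oracle at some query $\bx_s$ with $s<t$ was $\pm\bv_j/(\sqrt dL)$.

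The heart of the argument is the following concentration claim: for each $(j,t)\in[N]\times[T]$, if $\bv_j$ has not been revealed by time $t$, then for the transcript-determined query $\bx_t$,
\[
\Pr\bigl[|\langle \bv_j,\bx_t\rangle|\ge \gamma/4\bigr]\;\le\;2\exp\!\bigl(-d\gamma^2/32\bigr)\;=\;\exp\!\bigl(-\Omega(d^{1-\delta/2}\log^4 d)\bigr).
\]
Before $\bv_j$ is explicitly returned, every constraint the transcript places on $\bv_j$ has the one-sided form $\langle \bv_j,\bx_s\rangle\le\beta_{j,s}$: either the projection term dominated at $\bx_s$ (giving $\beta_{j,s}=L\|\bA\bx_s\|_\infty-1+j\gamma$), or some other index $k_s\neq j$ achieved the Nemirovski max (giving $\beta_{j,s}=\langle\bv_{k_s},\bx_s\rangle+(j-k_s)\gamma$). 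Such half-space conditioning cannot concentrate $\bv_j$ in any specific direction, so a Hoeffding-type tail survives. I would discharge the conditioning either by a lazy-sampling coupling—defer realizing $\bv_j$ until the oracle actually demands it—or by a union bound over transcripts that absorbs the conditioning cost.

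A union bound over all $(j,t)\in[N]\times[T]$ pairs then bounds the probability of the bad event
\[
\mE:=\bigl\{\exists\,j,t:\ \bv_j\text{ unrevealed at time }t,\ |\langle \bv_j,\bx_t\rangle|\ge\gamma/4,\ \|\bA\bx_t\|_\infty\le\xi\bigr\}
\]
by $NT\cdot\exp(-\Omega(d^{1-\delta/2}\log^4 d))=d^{-\omega(1)}$, given our parameter choices. On the complement of $\mE$, any $t_j$-witnessing query must occur after $\bv_j$ has been revealed; and for the oracle to return $\bv_j$ at $\bx_s$ the smallest-index rule, together with the same concentration applied to every still-unrevealed $\bv_{j'}$ with $j'<j$, forces $\bv_1,\ldots,\bv_{j-1}$ to have been revealed earlier. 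Since every reveal of $\bv_k$ with $k\ge 2$ automatically satisfies $\|\bA\bx_s\|_\infty\le 2/L=\xi$ (otherwise the projection term would dominate) and beats the previously-revealed $\bv_{k-1}$ by $\gamma$ in inner product, it also witnesses $t_k$ with correlation at least $\gamma\ge\gamma/4$. The ordering $t_1\le t_2\le\cdots\le t_N$ then follows.

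The main obstacle is making the concentration claim in the second paragraph rigorous despite the dependence of $\bx_t$ on past values of $\bv_j$ through the transcript. Although the one-sided nature of the conditioning is morally harmless, an honest proof needs either a lazy-sampling coupling that defers realizing $\bv_j$ until the oracle truly requires it, or a transcript-level union bound combined with unconditional Hoeffding tails. Either route mirrors the blueprint used in prior reductions to the orthogonal vector game (e.g., \cite{marsden2022efficient}), so I would expect the remainder of the argument to proceed routinely once the lazy-oracle viewpoint is set up.
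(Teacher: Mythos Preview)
Your deterministic step—deriving $t_1\le\cdots\le t_N$ from the complement of $\mE$—contains a self-contradiction. You assert that the query $\bx_s$ at which $\bv_k$ is first revealed ``witnesses $t_k$ with correlation at least $\gamma$,'' because $\langle\bv_k,\bx_s\rangle$ exceeds $\langle\bv_{k-1},\bx_s\rangle$ by $\gamma$. But that gap gives no lower bound on $|\langle\bv_k,\bx_s\rangle|$ unless you also control $\langle\bv_{k-1},\bx_s\rangle$ from below, which you do not (once $\bv_{k-1}$ is revealed the algorithm may deliberately anti-correlate with it). Worse, your own good event forces the opposite conclusion: at the reveal time $s$, the vector $\bv_k$ is still ``unrevealed at $s$'' by your definition, and $\|\bA\bx_s\|_\infty\le\xi$, so on the complement of $\mE$ you must have $|\langle\bv_k,\bx_s\rangle|<\gamma/4$. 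Hence the reveal of $\bv_k$ \emph{never} witnesses $t_k$ on your good event, and the ordering does not follow as written.

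Your event $\mE$ is nonetheless sufficient, via a different route: on the complement of $\mE$, if $t_{i+1}=t<\infty$ then $\bv_{i+1}$ must have been revealed at some $s<t$; at that $s$ you have $|\langle\bv_{i+1},\bx_s\rangle|<\gamma/4$ together with $\langle\bv_{i+1},\bx_s\rangle>\langle\bv_i,\bx_s\rangle+\gamma$ from the smallest-index rule, hence $\langle\bv_i,\bx_s\rangle<-3\gamma/4$ and $\bx_s$ witnesses $t_i$, giving $t_i\le s<t_{i+1}$. The paper skips this detour. It fixes a single candidate violation (both $t_i,t_{i+1}=\infty$ after round $t-1$ but $t_{i+1}=t$) and observes that under this hypothesis $\bv_{i+1}$ is not merely unreturned but \emph{never affects any earlier oracle response}: whenever the Nemirovski part is active, both $|\langle\bv_i,\cdot\rangle|$ and $|\langle\bv_{i+1},\cdot\rangle|$ are below $\gamma/4$, so the $\bv_i$-term strictly beats the $\bv_{i+1}$-term regardless of the realization of $\bv_{i+1}$. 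This deterministic decoupling lets the paper apply Khintchine directly to the now-fixed query $\bx_t$, and it is exactly the missing ingredient in your lazy-sampling sketch—the oracle's choice of which $\bv_\ell$ to return depends on $\bv_j$ even before $\bv_j$ is returned, so ``defer realizing $\bv_j$ until the oracle demands it'' is not well-defined without the extra hypothesis $t_i=\infty$. Finally, your claim that ``half-space conditioning cannot concentrate $\bv_j$'' is false as stated (condition on $\langle\bv_j,\bx\rangle\ge 1-o(1)$); what is true, and what the paper actually proves, is that the specific constraints here remove only a $d^{-\omega(1)}$ fraction of the hypercube.
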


\begin{proof}
Consider the process of first initializing $t_1=\cdots=t_N=\infty$ and then in each round $t\in [T]$, setting $t_i=t$ if $t_i=\infty$ at the end of round $t-1$ but  the point $\bx_{t}$ queried by $\ALG$ in round $t$ satisfies both conditions 
  $|\langle \bx_t,\bv_i\rangle |\ge \gamma /4$ and $\|\bA \bx_t\|_\infty\le \xi$.
Note that if the event 
  of the lemma is violated,
  then there must exist $t\in [T]$ and $i\in [N-1]$ such that the following two events hold:
\begin{enumerate}
    \item Event $E_1$: At the end of round $t-1$, we have $t_i=t_{i+1}=\infty$; and
    
    \item Event $E_2$: At the end of round $t$, we have $t_i=\infty$ but $t_{i+1}=t$.
\end{enumerate}
Fixing $t\in [N]$ and $i\in [N-1]$,
  we show in the rest of the proof that
  the probability of $E_1\land E_2$
  is at most $d^{-\omega(1)}$. The lemma  follows from a union bound on $t$ and $i$.

We start the following simple claim.

\begin{claim}\label{claim:hehe}
Assuming $E_1$, $\emph{\ALG}$ never receives
  $\bv_{i+1}/(\sqrt{d}L)$ as the subgradient
  before round $t$.
\end{claim}
\begin{proof}
Given that $t_i=t_{i+1}=\infty$ at the end of round $t-1$, every point $\bx$ queried by $\ALG$ before round $t$ satisfies one of the following conditions: either
\begin{enumerate}
    \item $\|\bA \bx\|_\infty> \xi$, 
    in which case the maximum in $F(\bx)$
    must be achieved by $L\|\bA\bx\|_\infty-1$; or 
    \item $|\langle \bx,\bv_i\rangle|<\gamma/4$ and $|\langle\bx,\bv_{i+1}\rangle|<\gamma /4$, in which case we have
$$
\langle \bx,\bv_{i+1}\rangle -(i+1)\gamma
< (\gamma/4)-(i+1)\gamma <-(\gamma/4) -i\gamma<\langle \bx,\bv_{i }\rangle -(i+1)\gamma.
$$
\end{enumerate}
So in both cases, $\bv_{i+1}/(\sqrt{d}L)$
  cannot be a subgradient of $F$ returned at $\bx$.
\end{proof}

We finish the proof by showing that 
   the probability of $E_2$ conditioning on
   $E_1$ is at most $d^{-\omega(1)}$.
To this end, consider running $\ALG$ on $F$
  that is defined using the fixed $\bA$ and random $\bv_1,\ldots,\bv_N$,
  and $E_1$ holds at the end of round $t-1$. 
Let $H$ denote the query-answer history of
  the $t-1$ rounds so far.
Let $\bx_1,\ldots,\bx_{t-1}$ be the $t-1$ points queried in $H$, and
  let $\bx_t$ be the point to be queried by $\ALG$ next in round $t$ given $H$.
By Claim \ref{claim:hehe}, $\bv_{i+1}/\sqrt{d}L$ is never returned as a subgradient in $H$.

Before proceeding to round $t$, we 
  reveal all vectors $\bv_1,\ldots,\bv_N$ except
  $\bv_{i+1}$.
Let $\bv_{-i}$ denote the tuple that contains
  these $N-1$ vectors revealed. 
We claim that given $\bv_{-i}$, the following 
  set of vectors captures exactly candidates for $\bv_{i+1}$ to (i) be consistent with $H$ and (ii) satisfy $E_1$:
\begin{align}
V := \left\{\bv \in  \frac{1} {\sqrt{d}}\mH_d: |\langle \bv, \bx_{\tau}\rangle| \leq \frac{\gamma}{4}\ \text{for all}\ \tau \in [t-1]\ \text{with}\ 
\|\bA \bx_\tau\|_\infty\le \xi\right\}. \label{eq:reduction3}
\end{align}
To see this is the case, it is trivial that
  $\bv_{i+1}$ needs to be in $V$ for both (i) and (ii) to hold.
On the other hand, $\bv_{i+1}\in V$ is sufficient for (i) and (ii) since no subgradient in $H$ is relevant to $\bv_{i+1}$.

It then suffices to show that 
  the probability of $\bv\sim V$ satisfying 
  $|\langle \bv,\bx_{t}\rangle|\ge \gamma /4$ 
  is at most $d^{-\omega(1)}$.
First we have 
by Khintchine inequality (Lemma \ref{lem:khintchine}) and using $\delta <1$ that
\begin{align*}
\Pr_{\bv\sim \frac{1}{\sqrt{d}}\mH_d} \left[\bv \in V \right] 
\geq 1 - d^{-\omega(1)}.
\end{align*}
On the other hand, we have by Chernoff bound that 
$$
\Pr_{\bv\sim \frac{1}{\sqrt{d}}\mH_d } \left[
|\langle \bv,\bx_{t}\rangle| \ge \gamma /4
\right]\le d^{-\omega(1)}.
$$ 
It follows that the probability of $\bv\sim V$ satisfying  
  $|\langle \bv,\bx_{t}\rangle|\ge  \gamma /4$ 
  is at most $d^{-\omega(1)}$.
\end{proof}

Next, we observe the optimal value of $F$ is small with high probability:
\begin{lemma}
\label{lem:obj-value}
Fix $\bA \in \{-1,1\}^{(d/2)\times d}$. With probability at least $1-d^{-\omega(1)}$ over  $\bv_1, \ldots, \bv_N$ we have 
\[
\argmin_{\bx \in \mathbb{B}^d}F(\bx ) \leq -\frac{1}{\sqrt{d}L}\cdot \frac{1}{\sqrt{N}\log^2 d}.
\]
\end{lemma}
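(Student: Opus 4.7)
The plan is to exhibit, with probability at least $1 - d^{-\omega(1)}$ over $\bv_1, \ldots, \bv_N$, a feasible point $\bx \in \mB^d$ lying exactly in $\ker \bA$ and having a common large-negative correlation with every $\bv_i$; then $F(\bx)$ will be controlled by the Nemirovski term alone. Concretely, I will construct $\bx$ with $\bA \bx = 0$ and $\langle \bv_i, \bx \rangle = -\frac{1}{2\sqrt{N}}$ for all $i \in [N]$, so that
\[
F(\bx) = \frac{1}{\sqrt{d}\,L} \max\Big\{-1,\ -\tfrac{1}{2\sqrt{N}} - \gamma\Big\} \;=\; -\frac{1}{\sqrt{d}\,L}\Big(\tfrac{1}{2\sqrt{N}} + \gamma\Big) \;\leq\; -\frac{1}{\sqrt{d}\,L\sqrt{N}\log^2 d},
\]
using that $\frac{1}{2\sqrt{N}} \geq \frac{1}{\sqrt{N}\log^2 d}$ for large $d$ and $\frac{1}{2\sqrt{N}} + \gamma \ll 1$.

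\textbf{Step 1 (set up the projected Gram matrix).} Let $\Pi \in \R^{d\times d}$ denote the orthogonal projection onto $\ker \bA$, which has rank $r \geq d/2$ so $\tr(\Pi) = r \geq d/2$. Set $\bu_i := \Pi \bv_i$ and form the Gram matrix $G \in \R^{N\times N}$ with $G_{ij} = \langle \bu_i, \bu_j \rangle = \bv_i^\T \Pi \bv_j$. Note $\bv_i^\T \Pi \bv_i = \|\Pi \bv_i\|^2$, and for $i \neq j$ the vectors $\bv_i,\bv_j$ are independent.

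\textbf{Step 2 (concentration of $G$).} I will show $G \succeq \tfrac14 \bI$ with probability $1 - d^{-\omega(1)}$. For diagonal entries, $\E[\|\Pi \bv_i\|^2] = r/d \geq 1/2$; Hoeffding/Hanson--Wright for the quadratic form in the independent $\pm 1/\sqrt{d}$ entries of $\bv_i$ gives $|\,\|\Pi \bv_i\|^2 - r/d\,| \leq \log^2 d / \sqrt{d}$ with probability $1 - d^{-\omega(1)}$. For off-diagonal entries, condition on $\bv_j$ so that $\bv_i^\T \Pi \bv_j$ is a sum of independent bounded terms with mean $0$; Hoeffding yields $|\bv_i^\T \Pi \bv_j| \leq \log^2 d/\sqrt{d}$ with probability $1 - d^{-\omega(1)}$. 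A union bound over the $\binom{N}{2}+N$ pairs and Gershgorin give $\|G - (r/d)\bI\|_{\mathrm{op}} \leq N \log^2 d / \sqrt{d} = o(1)$ (using $N \leq d^{\delta/6}$, $\delta < 1$), hence $G \succeq (1/2 - o(1))\bI \succeq \tfrac14 \bI$.

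\textbf{Step 3 (construct $\bx$ and verify).} Set $\alpha := -\tfrac{1}{2\sqrt{N}}\, G^{-1}\mathbf{1}_N$ and $\bx := \sum_{i=1}^N \alpha_i \bu_i$. Then $\bx \in \ker \bA$, so $\|\bA\bx\|_\infty = 0$; since $\Pi \bx = \bx$, $\langle \bv_i, \bx\rangle = \langle \Pi \bv_i, \bx\rangle = \langle \bu_i, \bx\rangle = (G\alpha)_i = -\tfrac{1}{2\sqrt{N}}$; and
\[
\|\bx\|^2 \;=\; \alpha^\T G\, \alpha \;=\; \tfrac{1}{4N}\, \mathbf{1}^\T G^{-1}\mathbf{1} \;\leq\; \tfrac{1}{4N}\cdot 4N \;=\; 1,
\]
using $G^{-1} \preceq 4\bI$. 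Thus $\bx \in \mB^d$ and plugging into $F$ as in the display above completes the proof. The only mildly nontrivial step is the operator-norm bound on $G - (r/d)\bI$, which is the main place the parameter regime $N \ll \sqrt d$ enters; everything else is a direct algebraic construction.
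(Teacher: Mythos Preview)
Your proof is correct. Both you and the paper project the $\bv_i$ onto $\ker\bA$ and take a linear combination of the projections $\bu_i = \Pi\bv_i$ to exhibit a feasible point; the difference is in the choice of coefficients. The paper simply takes the uniform combination $\bx = -\frac{1}{\sqrt{N}\log d}\sum_i \bu_i$ and then bounds each $\langle \bv_i, \bx\rangle$ by separately controlling the diagonal term $\|\bu_i\|^2$ and the cross term $\sum_{j\neq i}\langle \bu_i,\bu_j\rangle$ via Khintchine. You instead invert the Gram matrix $G = (\langle \bu_i,\bu_j\rangle)_{i,j}$ to force $\langle \bv_i,\bx\rangle$ to equal exactly $-\tfrac{1}{2\sqrt{N}}$ for every $i$, and reduce the norm check to $G^{-1}\preceq 4\bI$. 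Your route is a bit slicker---it yields the sharper bound $F(\bx)\le -\tfrac{1}{2\sqrt{d}L\sqrt{N}}$ and avoids the extra $\log d$ factor the paper pays in its scaling---at the modest cost of the Gershgorin argument showing $G\succeq\tfrac14\bI$. The underlying concentration inputs (diagonal via Hanson--Wright, off-diagonal via Khintchine, union bound over $O(N^2)$ events using $N\ll\sqrt{d}$) are essentially the same in both proofs.
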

\begin{proof}
Let $\bU_\bA$ be an orthonormal matrix that spans the row space of $\bA$.
Let $\hat{\bv}_i = (\bI - \bU_\bA\bU_\bA^\top)\bv_i$ be projection $\bv_i$ onto the orthogonal space of $\bU_\bA$. 
It is clear that $\hat{\bv}_i \perp \row(\bA)$ for any $i \in [N]$.

Consider
\[
\bx = -\frac{1}{\sqrt{N}\log d}\cdot \sum_{i=1}^{N} \hat{\bv}_{i}.
\]
It suffices to prove that, with probability at least $1 -d^{-\omega(1)}$, one has
\begin{itemize}
\item $F(\bx) \leq -\frac{1}{\sqrt{d}L} \cdot \frac{1}{\sqrt{N}\log^2 d}$, and
\item $\|\bx\|_2 \leq 1$.
\end{itemize}
For the first claim, it is clear that 
$\|\bA\bx\|_\infty = 0$ since $\hat{\bv}_i$ ($i \in [N]$) is orthogonal to the row space of $\bA$.
Furthermore, for any $i \in [N]$, one has 
\begin{align}
\langle \bv_i, \bx\rangle = &~ \bv_i (\bI-\bU_\bA \bU_\bA^\top)\bx = \langle \hat{\bv}_i, \bx\rangle \notag\\
=&~ \left\langle \hat{\bv}_{i}, - \frac{1}{\sqrt{N}\log (d)}\sum_{j=1}^{N} \hat{\bv}_{i}\right\rangle \notag\\
\leq &~  -\frac{1}{\sqrt{N}\log d} \|\hat{\bv}_{i}\|_2^2 + \frac{1}{\sqrt{N}\log d}  \left| \sum_{j\neq i} \langle \hat{\bv}_{i}, \hat{\bv}_{j} \rangle\right|. \label{eq:obj1}
\end{align}
The first step holds since $(\bI-\bU_\bA \bU_\bA^\top)\bx = \bx$, the second and third step follow from the definition of $\hat{\bv}_i$ and $\bx$.

We bound the two terms separately. For the first term, since $\hat{\bv}_i$ is the projection of $\bv_i$ onto the orthogonal space of $\bU_\bA$ (which has rank at least $d/2$), by Lemma \ref{lem:projection-prob}, with probability at least $1- d^{-\omega(1)}$, one has
\begin{align}
\|\hat{\bv}_{i}\|_2^2 \geq \frac{1}{2} - \frac{\log d}{\sqrt{d}}.\label{eq:obj2}
\end{align}
For the second term,  with probability at least $1- d^{-\omega(1)}$, one has 
\begin{align}
 \left\langle \hat{\bv}_{i}, \sum_{j\neq i} \hat{\bv}_{j} \right\rangle = &~ \bv_i^\top (\bI-\bU_\bA\bU_\bA^\top) (\sum_{j\neq i} \bv_{j}) \notag \\
 \leq &~ \frac{\log d}{\sqrt{d}} \cdot \|(\bI-\bU_\bA \bU_\bA^\top) (\sum_{j\neq i} \bv_{j})\|_2 \notag \\
  \leq &~ \frac{\log d}{\sqrt{d}} \cdot \|\sum_{j\neq i} \bv_{j}\|_2 \notag \\
 \leq &~ \frac{\log d}{\sqrt{d}}  \cdot \sqrt{N}\log d \leq \frac{1}{4}.\label{eq:obj3}
\end{align}
Here the second step holds due to Khintchine inequality (Lemma \ref{lem:khintchine}), the fourth step holds due to Chernoff bound, and the last step holds due to the choice of parameters.

Combining Eq.~\eqref{eq:obj1}\eqref{eq:obj2}\eqref{eq:obj3}, we conclude that 
\[
\langle \bv_i, \bx\rangle \leq - \frac{1}{\sqrt{N}\log^2 (d)} \quad \forall i \in [N]
\]
and therefore, complete the proof of the first claim.

For the second claim, with probability at least $1-d^{-\omega(1)}$
\begin{align*}
\|\sum_{i=1}^{N}\hat{\bv}_{i}\|_2 = \|(\bI -\bU_\bA \bU_\bA^\top) \sum_{i=1}^{N}\bv_i\|_2 \leq \|\sum_{i=1}^{N}\bv_i\|_2 \leq \sqrt{N}\log(d)
\end{align*}
We finish the proof here.
\end{proof}

It follows from  Lemma \ref{lem:obj-value}
  that $t_N\le T$ with probablity at least $(2/3) - d^{-\omega(1)}$:
\begin{lemma}
\label{lem:success}
With probability at least $(2/3) - d^{-\omega(1)}$, we have $t_{N} \leq T$. 
\end{lemma}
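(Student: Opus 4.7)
The plan is to combine the algorithm's success guarantee with the upper bound on $\min F$ from Lemma~\ref{lem:obj-value} and then unpack what it means for $F(\bx_T)$ to be very negative. Specifically, with probability at least $2/3$ over $(\bA, \bv_1,\ldots,\bv_N)$, $\ALG$ outputs an $\eps$-optimal point, and by Lemma~\ref{lem:obj-value} together with averaging over $\bA$, the minimum of $F$ is bounded above by $-\beta$ with $\beta := \tfrac{1}{\sqrt{d}L\sqrt{N}\log^2 d}$, except with probability $d^{-\omega(1)}$. A union bound gives that with probability at least $(2/3) - d^{-\omega(1)}$, the last query $\bx_T$ (which by assumption coincides with the algorithm's output) satisfies $F(\bx_T) \leq -\beta + \eps$. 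Since $\eps = 1/(d^2 L)$ and $\beta = 1/(L\cdot d^{1/2 + \delta/12})$, we have $\eps \ll \beta$, so in fact $F(\bx_T) \leq -\beta/2$.

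The next step is to exploit the max structure of $F$. The inequality $F(\bx_T) \leq -\beta/2$ forces every term inside the outer maximum to be at most $-\beta\sqrt{d}L/2$. From the projection term this yields $L\|\bA\bx_T\|_\infty - 1 \leq -\beta\sqrt{d}L/2$, hence $\|\bA\bx_T\|_\infty \leq 1/L \leq \xi$, which is precisely the orthogonality condition in Definition~\ref{def:corrtime}. From the Nemirovski term, taking $i = N$ gives
\[
\langle \bv_N, \bx_T \rangle \leq N\gamma - \tfrac{1}{2}\beta\sqrt{d}L.
\]

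The remaining step is the quantitative comparison showing the right-hand side is at most $-\gamma/4$ in absolute value. Plugging in the parameter choices, $\beta\sqrt{d}L = 1/(2d^{\delta/12})$ (up to constants) while $N\gamma = 1/(d^{\delta/12}\log^2 d)$, so $\beta\sqrt{d}L$ dominates $N\gamma$ by a $\log^2 d$ factor and the displayed upper bound is of order $-1/d^{\delta/12}$. Comparing against $\gamma/4 = \log^2(d)/(4 d^{\delta/4})$, we find $1/d^{\delta/12} \gg \log^2(d)/d^{\delta/4}$ for large $d$ because the ratio is $d^{\delta/6}/\log^2 d \to \infty$. Hence $|\langle \bv_N, \bx_T\rangle| \geq \gamma/4$, and the correlation condition also holds. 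Therefore $\bx_T$ witnesses $t_N \leq T$, completing the proof.

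There is no real obstacle here: the argument is essentially a careful bookkeeping of how an $\eps$-optimal point for a sufficiently negative minimum must simultaneously satisfy the orthogonality and correlation conditions. The only place requiring care is verifying that the parameter choices of $\gamma$, $N$, $L$, and $\eps$ indeed give $\eps \ll \beta$ and $\beta\sqrt{d}L \gg N\gamma + \gamma/4$; both reduce to a short exponent comparison using $\delta \in (0,1)$ and the polylogarithmic slacks built into the definitions of $\gamma$ and $N$.
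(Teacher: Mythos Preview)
Your proposal is correct and takes essentially the same approach as the paper. The paper argues by contrapositive---assuming $t_N=\infty$ and showing every query then has $F(\bx_t)\ge -(N+\tfrac14)\gamma/(\sqrt{d}L)$, which together with Lemma~\ref{lem:obj-value} contradicts $\eps$-optimality---while you argue directly that $F(\bx_T)\le -\beta/2$ forces both the orthogonality and correlation conditions for $i=N$; the parameter verifications are identical.
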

\begin{proof}
We claim that $t_N\le T$ whenever (1) $\ALG$ finds 
  an $\eps$-optimal point of $F$ and 
  (2) the event of Lemma \ref{lem:obj-value} holds.
The lemma follows from
  Lemma \ref{lem:obj-value} and the assumption that $\ALG$ succeeds with
  probability at least $2/3$.
To prove the claim, we note that by (2), 
\[
\argmin_{\bx \in \mathbb{B}^d}F(\bx ) \leq -\frac{1}{\sqrt{d}L}\cdot \frac{1}{\sqrt{N}\log^2d}.
\]
On the other hand, if $t_N=\infty$, then 
  every point $\bx_t$ queried by $\ALG$ satisfies
\[
F(\bx_t) \geq \frac{1}{\sqrt{d}L}\cdot \max\Big\{L\|\bA\bx_{t}\|_\infty-1, \langle \bv_N, \bx_t\rangle - N\gamma \Big\} \geq \frac{1}{\sqrt{d}L}\cdot \left(-N - \frac{1}{4}\right) \gamma \geq \argmin_{\bx \in \mathbb{B}^d}F(\bx ) +\eps.
\]
Here the second step follows from $t_N=\infty$ and the last step follows from our choice of parameters. 
Given that we assume $\ALG$ outputs $\bx_N$, this contradicts with
  the assumption that $\ALG$ succeeds in finding an $\eps$-optimal point of $F$.
This finishes the proof of the lemma.
\end{proof}

By a simple averaging argument, there exists 
  an $i\in [N-1]$ with the following property:
\begin{lemma}
\label{lem:gap}
There exists an $i\in [N-1]$ such that both $t_i\le t_{i+1}\le T$ and
\[
t_{i + 1} - t_{i} \leq \frac{20T}{N} =\frac{n}{2} 
\]
hold with probability at least $3/5$.
\end{lemma}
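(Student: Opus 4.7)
}
The plan is to combine the two structural facts already established with a simple averaging over the index $i$. First, I would pool Lemmas \ref{lem:first-time} and \ref{lem:success} via a union bound: fix $\bA$, apply Lemma \ref{lem:first-time} to get monotonicity $t_1\le t_2\le\cdots\le t_N$ with probability $1-d^{-\omega(1)}$ over $\bv_1,\ldots,\bv_N$, then integrate over $\bA$ and combine with Lemma \ref{lem:success} to conclude that the joint event
\[
\mathcal{E}\ :=\ \bigl\{\,0\le t_1\le t_2\le\cdots\le t_N\le T\,\bigr\}
\]
holds with probability at least $2/3-d^{-\omega(1)}$.

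Next I would use a simple telescoping/pigeonhole argument on the ordered sequence. Conditioned on $\mathcal{E}$, we have
\[
\sum_{i=1}^{N-1}(t_{i+1}-t_i)\ =\ t_N-t_1\ \le\ T,
\]
so at most $\lfloor T/(20T/N)\rfloor=\lfloor N/20\rfloor$ indices $i\in[N-1]$ can satisfy $t_{i+1}-t_i>20T/N$. Hence, under $\mathcal{E}$, at least $N-1-\lfloor N/20\rfloor\ge 0.94\,N$ of the indices $i\in[N-1]$ satisfy both $t_i\le t_{i+1}\le T$ and $t_{i+1}-t_i\le 20T/N$. Call this good-index-count $G$; we have just shown $G\ge 0.94N$ on $\mathcal{E}$ and trivially $G\ge 0$ otherwise.

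Finally, to pass from "there exist many good indices per instance" to a single fixed index that is good with good probability, I would switch the order of summation. Letting $E_i$ denote the event that $i$ itself is good, i.e., $\{t_i\le t_{i+1}\le T\}\cap\{t_{i+1}-t_i\le 20T/N\}$, linearity of expectation gives
\[
\sum_{i=1}^{N-1}\Pr[E_i]\ =\ \mathbb{E}[G]\ \ge\ 0.94\,N\cdot\Pr[\mathcal{E}]\ \ge\ 0.94\,N\cdot\bigl(2/3-d^{-\omega(1)}\bigr)\ \ge\ 0.62\,N
\]
for $d$ sufficiently large. An averaging argument over $i\in[N-1]$ then yields some $i^\star$ with $\Pr[E_{i^\star}]\ge 0.62\,N/(N-1)\ge 3/5$, which is the claim. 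The only nontrivial ingredient is the joint high-probability event $\mathcal{E}$, which we already have; the remaining steps are just telescoping, linearity of expectation, and pigeonhole, so no single step is really a hurdle—the main thing to be careful about is that "there exists $i$ with probability $3/5$" is read in the $\exists i:\Pr[\cdots]\ge 3/5$ sense (which is what the subsequent reduction needs), and is not confused with the weaker $\Pr[\exists i:\cdots]\ge 3/5$ obtained directly from $\mathcal{E}$.
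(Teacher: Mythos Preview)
Your proof is correct and is morally the same averaging argument the paper uses, but the execution differs. The paper argues by contradiction: assuming no index works, it lower–bounds $\Pr[t_{i+1}-t_i\ge 20T/N]$ for every $i$, passes to the conditional on $\mathcal{E}$, and then sums $\E[t_{i+1}-t_i\mid\mathcal{E}]$ to exceed $T$, contradicting $t_N\le T$. You instead count directly: on $\mathcal{E}$ at most $\lfloor N/20\rfloor$ gaps can be large, so $\E[G]\ge 0.94N\cdot\Pr[\mathcal{E}]$, and then average over $i$. Your route is a bit cleaner and avoids the slightly delicate step in the paper of turning ``no good index'' into a uniform lower bound on each bad-gap probability; the paper's route, on the other hand, never needs to track the exact fraction of good indices. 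Your closing remark distinguishing $\exists i:\Pr[E_i]\ge 3/5$ from $\Pr[\exists i:E_i]\ge 3/5$ is exactly the right point, since the downstream reduction fixes $i$ before the randomness of $\bA,\bv$ is drawn.
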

\begin{proof}
We always condition on the high probability event of Lemma \ref{lem:first-time}. 
Let $\mE$ denote the event that $t_N \leq T$ and by Lemma \ref{lem:success}, we have $\Pr[\mE]\geq \frac{2}{3}-d^{-\omega(1)}$.
We prove by contradiction and suppose there is no such index $i$, then we have $\Pr[t_{i + 1} - t_{i} \geq \frac{20T}{N}] > \frac{3}{5}$ for all $i\in [N-1]$, and therefore, $\Pr[t_{i + 1} - t_{i} \geq \frac{20T}{N} | \mE] > \frac{1}{15}$ . This means $\E[t_{i + 1} - t_{i} | \mE] > \frac{1}{15} \cdot \frac{20T}{N} \geq \frac{4T}{3N}$, and 
\[
\E\left[t_{N} - t_{1} | \mE\right] = \E\left[\sum_{i=1}^{N-1} t_{i + 1} - t_{i} | \mE\right] > (N-1) \cdot \frac{4T}{3N} > T
\]
This contradicts the fact that $t_N \leq T$ when $\mE$ happens.
\end{proof}

We remark that all lemmas proved so far stand regardless of the memory constraints. \medskip

\noindent\textbf{Reduction.} Now we are ready to use $\ALG$ to design a randomized $(k,n)$-communication protocol for the  $(s,\xi)$-correlated orthogonal vector game that succeeds with probability at least $1/2$. Given that a randomized protocol is a distribution of deterministic protocols, Lemma \ref{lem:reduction} follows.

\begin{proof}[Proof of Lemma \ref{lem:reduction}]
Let $i\in [N-1]$ be an integer that 
  satisfies the condition of Lemma \ref{lem:gap}. 
The communication protocol is described in Figure~\ref{fig:reduction}.
Recall that in the game Alice receives a matrix $\bA \sim \{-1, 1\}^{ ({d}/{2}) \times d}$ and Bob receives a vector $\bv \sim \smash{ \frac{1}{\sqrt{d}}\mH_{d}}$. In the protocol, Alice and Bob  sample independently $\bv_{1}, \ldots, \bv_N \sim \smash{\frac{1}{\sqrt{d}}\mH_d}$ using public randomness. Let 
\[
V = (\bv_1, \ldots, \bv_i, \bv_{i+1}, \bv_{i+2}, \ldots, \bv_{N}) \quad \text{and} \quad  V' = (\bv_1, \ldots, \bv_i, \bv, \bv_{i+2}, \ldots, \bv_{N}),
\]
that is, $V'$ replaces $\bv_{i+1}$ with $\bv$.
Let $F_{\bA,V}$ (or $F_{\bA,V'}$) denote the function defined using $\bA$ and those vectors in $V$ (or $V'$, respectively). Roughly speaking, Alice would first optimize the function $F_{\bA, V}$ to time $t_i$, then send its memory state $\bM$ to Bob. It continues to optimize $F_{\bA, V'}$ for $n$ iterations and sends all subgradients to Bob. 
Bob would simulate $\ALG$ using $\bM$ and these subgradients, output any query $\bx$ that is orthogonal to $\bA$ and has large correlation with $\bv$.

\begin{figure}[!t]
\begin{tcolorbox}[standard jigsaw, opacityback=0,title=Communication protocol for the correlated orthogonal vector game] 
\begin{flushleft}\begin{itemize}
\item Alice and Bob use public randomness to sample $\smash{\bv_1, \ldots, \bv_{N} \sim \frac{1}{\sqrt{d}}\mH_d}$. 
\item Round 1: Alice runs $\ALG$ on the function $F_{\bA, V}$, from which Alice obtains $t_i$ and sends the memory state $\bM \in \{0,1\}^{kd}$ of $\ALG$ after $t_i-1$ rounds to Bob.
\item  Round 2: Bob sends $\bv$ to Alice.
\item Round 3: Alice runs $\ALG$ on the function $F_{\bA,V'}$.
For each $j\in [n]$, Alice sets $\ba_j$ to be
  $\bA_i$ if the subgradient returned by $F_{\bA,V'}$ in 
  round $t_i+j-1$ is either $\bA_i/\sqrt{d}$ or $-\bA_i/\sqrt{d}$,
  and set $\ba_j=\nil$ otherwise.
Alice sends $\ba_1,\ldots,\ba_n\in \row(\bA)\cup \{\nil\}$ to Bob.

\item Output: Bob runs $\ALG$ for $n$ rounds, starting with memory state $\bM$,  
 the first message from Alice.
 For each round $j=1,\ldots,n$, let $\bx_j$ be the query point of $\ALG$:
\begin{enumerate}
    \item If $\ba_j\ne \nil$, Bob sets $$\left( \frac{L|\langle \ba_j,\bx_j\rangle|-1}{ \sqrt{d}L},\hspace{0.06cm} \pm \frac{\ba_j}{\sqrt{d}}\right)$$
    be the pair returned by the oracle, where the sign of the subgradient can be determined by the sign of $\langle \ba_j,\bx_j\rangle$, and continue the execution of $\ALG$;
    \item  If $\ba_j=\nil$, Bob finds the vector $\bv_{\ell}\in V'$ with the smallest index that maximizes $\langle \bv_\ell,\bx_j\rangle -\ell\gamma$, sets
    $$\left(\frac{\langle \bv_\ell,\bx_j\rangle -\ell\gamma}{\sqrt{d}L},\hspace{0.06cm}\frac{\bv_\ell}{\sqrt{d}L}\right)$$
    be the pair returned by the oracle and continue the execution of $\ALG$.
\end{enumerate}
Bob outputs any query $\bx_j$ ($j \in [n]$) that satisfies $|\langle \bv,\bx_j\rangle| \geq \gamma /4$ and $F(\bx_j) \leq \frac{1}{\sqrt{d}L}$.
If no such query exists, Bob fails and outputs an arbitrary vector, say $\mathbf{0}$.
\end{itemize}
\end{flushleft}
\end{tcolorbox}
\caption{Reduction from convex optimization to the correlated orthogonal vector game}\label{fig:reduction}
\end{figure}

It is clear from the description of the protocol that it is a $(k,n)$-communication protocol. 
We finish the proof by showing that it succeeds with probability at least $1/2$.
We first condition on the high probability event of Lemma \ref{lem:first-time}, for $V$ and $V'$. Note it happens with probability at least $1-d^{-\omega(1)}$. 
The convex optimization algorithm has the same transcripts for $F_{\bA,V}$ and $F_{\bA, V'}$, during time $t \in [t_{i}-1]$. 
This is because it only receives subgradients from $\{\bv_1, \ldots, \bv_{i}\} \cup \row(\bA)$ before time $t_{i}$ by Lemma \ref{lem:first-time} and Claim \ref{claim:hehe}.
Hence, we can assume the memory state $\bM$ comes from optimizing the function $F_{\bA, V'}$.

Next, we condition on the event of Lemma \ref{lem:gap}, which asserts $t_{i+1} - t_i \leq n/2$ and it happens with probability at least $3/5$.
It is easy to see that Bob, who knows $\bM$, $V'$ and $\ba_{1}, \ldots, \ba_{n}$, can simulate $\ALG$ for function $F_{\bA, V'}$ up to time $t_{i+1} \leq t_i + n$. 
Therefore, it must submit a query $\bx$ that satisfies $\|\bA\bx\|_\infty \leq \frac{2}{L} = \xi$ and $|\langle \bx, \bv\rangle| \geq \gamma/4 \geq \sqrt{s/d}$. We finish the proof of reduction here.
\end{proof}


\section{Lower bound for the correlated orthogonal vector game}
\label{sec:lower}

In this section, We prove the following lower bound for correlated orthogonal vector game.
Theorem \ref{theo:main} follows directly by combining the lower bound with Lemma \ref{lem:reduction}.

\begin{theorem}[Lower bound of correlated orthogonal vector game]
Let $d$ be a sufficiently large integer. Let $\delta\in (0,1)$, $k = d^{1-\delta - o(1)}$, $s = d^{1-\delta/2}\log^2(d)$, $\xi = 2\exp(-\log^5(d))$ Then any deterministic $(k,n)$-communication protocol that solves the $(s,\xi)$-correlated orthogonal vector game with probability at least $1/2$ requires 
$
n \geq d\cdot \exp(-\log d /\log\log d).
$
\end{theorem}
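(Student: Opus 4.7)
The plan is to proceed by contradiction via the iterative encoding scheme outlined in Section~\ref{sec:sketchlowerbound}. Suppose a deterministic $(k,n)$-protocol solves the game with probability at least $1/2$ while $n < d\exp(-\log d/\log\log d)$, so that $\Delta := d/n \geq \exp(\log d/\log\log d)$. Our parameters give $kd = d^{2-\delta-o(1)} \ll s^2 = d^{2-\delta}\log^4 d$. By averaging there is a set $\nA$ of at least $\tfrac{1}{2}\cdot 2^{d^2/2}$ matrices $\bA \in \{-1,1\}^{(d/2)\times d}$ on which Bob's output satisfies both the correlation and orthogonality conditions with probability at least $1/4$ over $\bv$. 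The goal is to injectively encode $\nA$ using strictly fewer than $d^2/2 - \log_2 d$ bits, which would contradict the size of $\nA$.

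First I would build a ``representative query set''. For each first-round message $\bM \in \{0,1\}^{kd}$, a probabilistic selection yields $V^{*}(\bM) \subseteq \frac{1}{\sqrt{d}}\mH_d$ of size at most $2^{O(s)}$ with the property that for every $\bA \in \nA$ with $\bM_\bA = \bM$, the collection $\T_{0}^{\bM}(\bA) := \{\bx_{\bM,\bv,\bR} : \bv \in V^{*}(\bM),\ \bR\subseteq\row(\bA)\cup\{\nil\},\ |\bR|\leq n\}$ contains $N_0 := 2^{\Omega(s)}$ vectors that are $\xi$-orthogonal to $\bA$ and pairwise $\alpha_0$-separated, where $\alpha_0 := 1/d^{8}$. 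Existence combines a union bound over a net of $\mB^d$ with the protocol's correlation guarantee.

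Next I would run the iterative encoding across levels $\ell = 0,1,\ldots,L$ for $L = \Theta(\log\log d)$. At level $\ell$ we track a submatrix $\bA^{(\ell)} \subseteq \bA$ obtained by removing $\sum_{j<\ell} s\Delta^j$ rows, and attempt \textbf{compression}: locate an $s$-tuple $\bx_1,\ldots,\bx_s \in \T^{\bM_\bA}_{\ell}(\bA^{(\ell)})$ which is $\alpha_\ell$-robustly linearly independent in the sense of \cite{marsden2022efficient}, each $\xi$-orthogonal to $\bA^{(\ell)}$, and such that the next $s\Delta^\ell$ removed rows lie in the $\xi$-orthogonal complement of $\mathrm{span}(\bx_1,\ldots,\bx_s)$. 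Successful compression at some $\ell^*$ produces an encoding for $\bA$ consisting of $\bM$, the removed row indices, the $s$ selectors (each of cost $O(s+n\log_2 d)$ bits using $|V^*|\leq 2^s$ and $\binom{d/2}{\leq n}\leq d^n$), the $(d-s)$-dimensional projections of the removed rows, and the untouched rows, saving $\Omega(s^2\Delta^{\ell^*})$ bits; if compression fails at level $\ell$, the \emph{escape lemma} upgrades the failure into the existence of $N_{\ell+1} := 2^{\Omega(s\Delta^\ell)}$ vectors in $\T^{\bM_\bA}_{\ell+1}(\bA^{(\ell+1)})$ which are $\xi$-orthogonal to $\bA^{(\ell+1)}$ and pairwise $\alpha_{\ell+1}$-separated, where $\alpha_{\ell+1}$ is a doubly-exponential refinement of $\alpha_\ell$ and $\bA^{(\ell+1)}$ is obtained by removing another $s\Delta^{\ell+1}$ rows. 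Because after $\Theta(\log\log d)$ escape steps the cumulative $\sum_j s\Delta^j$ would exceed $d/2$, compression must succeed at some $\ell^*\leq L$; a direct arithmetic check using $kd \ll s^2$ and $\Delta^{\ell^*}\geq 1$ then shows that the saving $\Omega(s^2\Delta^{\ell^*})$ strictly exceeds the additive costs $kd + O(s^2) + O(sn\log_2 d) + \polylog(d)$ by more than $\log_2 d$, completing the contradiction.

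The main obstacle is the escape lemma itself: proving that failure to compress at level $\ell$ forces $N_{\ell+1} = 2^{\Omega(s\Delta^\ell)}$ pairwise-separated vectors at level $\ell+1$. Executing this conversion from ``many linearly independent $s$-tuples orthogonal to $\bA^{(\ell)}$'' into ``many individually well-spread output vectors'' must accommodate $\xi$-approximate (rather than exact) orthogonality, exploit the robust linear independence machinery of \cite{marsden2022efficient}, and track how the separation $\alpha_\ell$ degrades across all $\log\log d$ levels so that the error terms from robust independence never overwhelm the compression savings.
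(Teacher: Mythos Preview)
Your proposal is correct and follows essentially the same approach as the paper's proof in Section~\ref{sec:lower}: the iterative encoding across $O(\log\log d)$ levels, the representative set $V^*$ of size $2^{O(s)}$, the compression-versus-escape dichotomy based on robustly linearly independent tuples, and the doubly-exponential shrinkage of the separation parameter $\alpha_\ell$ all match, and you have correctly identified the escape lemma (the paper's Lemma~\ref{lem:size-s}) as the crux. Minor details differ---the paper uses RLI sequences of length $L=s/(4\log^8 d)$ rather than $s$, a single global $V^*$ rather than one per message, and terminates via ``cover size exceeds table size'' rather than ``rows removed exceed $d/2$''---but none of these affect the structure of the argument.
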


We prove by contradiction and assume $n \leq d\cdot \exp(-\log d /\log\log d)$ from now on.

\paragraph{Communication protocol.} A deterministic $(k,n)$-communication protocol works as follows:
\begin{flushleft}\begin{itemize}
\item (Round 1) Alice receives a matrix $\bA \in \{-1,1\}^{({d}/{2})\times d}$, and let $\bM_{\bA} \in \{0,1\}^{kd}$ be the message sent in the first round.
\item (Round 3) Recall Alice obtains both $\bA$ and $\bv$ after the second round. Let $$\br_{i, \bA, \bv} \in \{-1,1\}^d \cup \{\nil\}$$ be the $i$-th row sent to Bob ($i\in [n]$) and let $\bR_{\bA,\bv} = (\br_{1, \bA,\bv}, \ldots, \br_{n, \bA,\bv})$ be the collection of rows sent by Alice.
\item (Output) At the end, Bob outputs a vector $\bx_{\bM, \bv, \bR}$ based on the message $\bM$, the vector $\bv$ and the collection of rows $\bR$. We write $\bx_{\bA, \bv} = \bx_{\bM_{\bA},\bv,\bR_{\bA, \bv}}$ to denote Bob's output under the input pair $(\bA, \bv)$.
\end{itemize}\end{flushleft}

Note the output vector $\bx_{\bM, \bv, \bR}$ is well-defined for any message $\bM\in \{0,1\}^d$, vector $\bv\in \frac{1}{\sqrt{d}}\mH_d$ and any collection of rows $\bR \in (\{-1,1\}^d\cup \{\nil\})^{n}$.

The following lemma shows that, without loss of generality, one can assume that the output $\bx_{\bM, \bv, \bR}$ has unit norm.
\begin{lemma}[Unit norm]
If there is a $(k, n)$ communication protocol for $(s, \xi)$-correlated orthogonal vector game, then there is a $(k, n)$ communication protocol that always output unit vectors and solves $(s, \xi')$-correlated orthogonal vector game, where $\xi' = \sqrt{d}\xi$
\end{lemma}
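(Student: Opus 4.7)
The plan is a simple rescaling argument: produce a new protocol that runs the original one and then normalizes Bob's output to have unit norm. No additional communication is needed, so the $(k,n)$ bound is preserved. The key observation I would verify first is that on the event that the original output $\bx$ satisfies the correlation requirement, $\bx$ is automatically nonzero and in fact not too short, which lets me control how much the orthogonality parameter degrades under normalization.

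More concretely, since $\bv \sim \frac{1}{\sqrt{d}}\mH_d$ satisfies $\|\bv\|_2 = 1$, Cauchy--Schwarz and the correlation condition $|\langle \bv,\bx\rangle| \geq \sqrt{s/d}$ give
\[
\|\bx\|_2 \;\geq\; |\langle \bv,\bx\rangle| \;\geq\; \sqrt{s/d} \;\geq\; 1/\sqrt{d},
\]
since $s \geq 1$. So on the success event $\bx \neq 0$, and I can define the new protocol to output $\bx' := \bx/\|\bx\|_2$ (and an arbitrary fixed unit vector if $\bx = 0$). Then
\[
\|\bA \bx'\|_\infty \;=\; \frac{\|\bA\bx\|_\infty}{\|\bx\|_2} \;\leq\; \frac{\xi}{1/\sqrt{d}} \;=\; \sqrt{d}\,\xi \;=\; \xi',
\]
and since $\|\bx\|_2 \leq 1$, dividing by it can only increase the correlation, so
\[
|\langle \bv, \bx'\rangle| \;=\; \frac{|\langle \bv,\bx\rangle|}{\|\bx\|_2} \;\geq\; \sqrt{s/d}.
\]
Thus whenever the original protocol succeeds, the rescaled protocol also satisfies both requirements, now with a unit-norm output, preserving the success probability of at least $1/2$. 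This reduction uses no new messages and maintains the message lengths, so it is a $(k,n)$-communication protocol for the $(s, \xi')$-correlated orthogonal vector game.

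There is essentially no obstacle here; the only mildly delicate point is making sure $\bx \neq 0$ on the success event so that normalization is well-defined, which is handled by the Cauchy--Schwarz bound above. The step is a pure syntactic normalization that lets the remainder of Section~\ref{sec:lower} assume $\|\bx_{\bM,\bv,\bR}\|_2 = 1$ throughout.
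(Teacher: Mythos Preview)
Your argument is correct and is essentially the same rescaling argument as the paper: normalize Bob's output, use the correlation condition together with Cauchy--Schwarz to lower bound $\|\bx\|_2$ by $\sqrt{s/d}$ on the success event, and conclude that normalization can only help the correlation while degrading orthogonality by at most a $\sqrt{d}$ factor. The paper presents it as a two-case split on whether $\|\bx\|_2 \ge \sqrt{s/d}$, but the content is the same.
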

\begin{proof}
For any message $\bM\in \{0,1\}^{kd}$, vector $\bv\in \frac{1}{\sqrt{d}}\mH_d$ and collection of rows $\bR \in (\{-1,1\}^d\cup \{\nil\})^{n}$, consider the output vector $\bx_{\bM, \bv, \bR}$:
\begin{itemize}
\item If $\|\bx_{\bM, \bv, \bR}\|_2 < \sqrt{s/d}$, one can change it to an arbitrary unit vector and it does not reduce the success probability of the protocol.
\item If $\sqrt{s/d} \leq \|\bx_{\bM, \bv, \bR}\|_2 \leq 1$, then we can scale the output to $\frac{\bx_{\bM, \bv, \bR}}{\|\bx_{\bM, \bv, \bR}\|_2}$ and it has unit norm.
The correlation with any vector can only increase, and by a factor of at most $\sqrt{d/s} \leq \sqrt{d}$, hence by relaxing the orthogonal condition by a factor of $\sqrt{d}$, the success probability does not decrease.
\end{itemize}
We complete the proof here.
\end{proof}

For any matrix $\bB \in \{-1,1\}^{m\times d}$ and any collection of rows $\bR \in (\{-1,1\}^d \cup \{\nil\})^{n}$, we write $\bR \subseteq \row(\bB) \cup \{\nil\}$ if each row of $\bR$ either is $\nil$ or  belongs to $\row(\bB)$.

\begin{definition}[Table]
\label{def:table}
Let $m \in [d/2]$. Given any message $\bM \in \{0, 1\}^{kd}$, any matrix $\bB \in \{-1,1\}^{m\times d}$ and any set of vectors $V \subseteq \frac{1}{\sqrt{d}} \mH_{d}$ 
\[
\T^{\bM}(\bB, V) :=  \left\{\bx_{\bM, \bv, \bR} :  \bv \in V, \bR\subseteq \row(\bB)\cup \{\nil\}\right\} \subseteq \bS^{d}
\]
That is to say, $\T^{\bM}(\bB,V)$ contains all possible outputs by Bob, when the message is $\bM$, the vector $\bv$ is from $V$ and the n rows are taken from $\bB$.
\end{definition}
Note the definition of table is flexible that we do not need $\bB$ to have exact $d/2$ rows. Given a table $\T^{\bM}(\bB, V)$, we mostly care about entries that are (almost) orthogonal to $\bB$.
Recall $\xi' = \sqrt{d}\xi$.
\begin{definition}[Orthogonal entry]
Given a table $\T^{\bM}(\bB, V)$, let $\mO^{\bM}(\bB, V)$ contain all entries in $\T^{\bM}(\bB, V)$ that are (almost) orthogonal to $\bB$, i.e.,
\begin{align*}
\mO^{\bM}(\bB, V) := \left\{\bx: \bx \in \T^{\bM}(\bB, V), \|\bB\bx\|_\infty \leq \xi' \right\}.
\end{align*}
\end{definition}


\subsection{The existence of a succinct table}

The goal of this subsection is to select a small subset $V^{*} \subseteq \frac{1}{\sqrt{d}}\mH_d$, of size at most $2^s$, that is representative.

Let $\mA_{\suc} \subseteq \{-1,1\}^{\frac{d}{2}\times d}$ contain all matrices $\bA$ such that the protocol succeeds with probability at least $1/4$ (over the randomness of $\bv$) when Alice receives $\bA$ as input. We have $|\mA_{\suc}| \geq \frac{1}{4}\cdot 2^{d^2/2}$ because the protocol succeeds with probability at least $1/2$ over a random pair of $\bA$ and $\bv$. 
First, we prove the outputs $\{\bx_{\bA, \bv}\}_{\bv \in \frac{1}{\sqrt{d}}\mH_d}$ are spread out for $\bA \in \mA_{\suc}$.

\begin{lemma}
\label{lem:spread}
Let $c_1 > 0$ be some sufficiently small constant and $K = \exp(c_1 s)$.
For any fixed matrix $A \in \mA_{\suc}$ and fixed set of vectors $\by_1, \ldots, \by_{K} \in \bS^{d}$, one has
\begin{align}
\Pr_{\bv\sim \frac{1}{\sqrt{d}}\mH_d}\left[\|\bA \bx_{\bA, \bv}\|_\infty \leq \xi' \wedge \|\bx_{\bA, \bv} - \by_i\|_2 > \frac{1}{d^8}\, \forall i\in [K]\right] \geq 1/8.
\label{eq:spread}
\end{align}
\end{lemma}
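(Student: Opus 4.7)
\textbf{Proof plan for Lemma \ref{lem:spread}.}
The plan is to combine the success guarantee of the protocol on $\bA\in\mA_{\suc}$ with an anti-concentration union bound over the $K$ fixed targets $\by_1,\ldots,\by_K$. Let $\mathcal{S}\subseteq \frac{1}{\sqrt{d}}\mH_d$ be the set of $\bv$ for which the (unit-norm) protocol succeeds on input $\bA$; by definition of $\mA_{\suc}$, we have $\Pr_{\bv}[\bv\in\mathcal{S}]\ge 1/4$, and for any such $\bv$ the output $\bx_{\bA,\bv}$ is a unit vector satisfying both $\|\bA\bx_{\bA,\bv}\|_\infty\le \xi'$ and $|\langle \bv,\bx_{\bA,\bv}\rangle|\ge \sqrt{s/d}$. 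Thus the event in the lemma statement is contained in the event that $\bv\in\mathcal{S}$ \emph{and} $\bx_{\bA,\bv}$ is $(1/d^8)$-far from every $\by_i$, and it suffices to bound the probability of the complementary bad event.

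The key geometric observation is the following transfer principle: if $\bv\in\mathcal{S}$ and there exists $i\in[K]$ with $\|\bx_{\bA,\bv}-\by_i\|_2\le 1/d^8$, then by Cauchy--Schwarz and $\|\bv\|_2=1$,
\[
|\langle \bv,\by_i\rangle|\ \ge\ |\langle \bv,\bx_{\bA,\bv}\rangle|-\|\bv\|_2\cdot\|\bx_{\bA,\bv}-\by_i\|_2\ \ge\ \sqrt{s/d}-d^{-8}\ \ge\ \tfrac{1}{2}\sqrt{s/d},
\]
where the last inequality uses that $\sqrt{s/d}=d^{-\delta/4}\log^2 d\gg d^{-8}$ for our parameter regime. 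Hence the bad event is contained in $\bigcup_{i\in[K]}\{|\langle \bv,\by_i\rangle|\ge \tfrac12\sqrt{s/d}\}$, and the $\by_i$ are fixed independently of $\bv$.

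The remaining step is a Hoeffding tail bound plus a union bound. For any fixed unit vector $\by_i\in\bS^d$, the coordinates of $\bv$ are independent with $v_j\in\{\pm 1/\sqrt{d}\}$, so $\langle \bv,\by_i\rangle$ is a sum of independent centered random variables of range $2|y_{i,j}|/\sqrt{d}$, giving
\[
\Pr_{\bv}\!\left[|\langle \bv,\by_i\rangle|\ge \tfrac{1}{2}\sqrt{s/d}\,\right]\ \le\ 2\exp\!\left(-\tfrac{s}{8}\right),
\]
by Hoeffding's inequality (using $\sum_j (2|y_{i,j}|/\sqrt{d})^2=4/d$). Taking a union bound over $i\in[K]$ with $K=\exp(c_1 s)$,
\[
\Pr_{\bv}\!\left[\exists i\in[K]:\ |\langle \bv,\by_i\rangle|\ge \tfrac{1}{2}\sqrt{s/d}\,\right]\ \le\ 2\exp\!\left(\bigl(c_1-\tfrac{1}{8}\bigr)s\right)\ \le\ \tfrac{1}{8},
\]
provided $c_1$ is chosen sufficiently small and $s$ is large enough (which holds for large $d$).

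Putting the pieces together, the probability of the event in \eqref{eq:spread} is at least
\[
\Pr_{\bv}[\bv\in\mathcal{S}]-\Pr_{\bv}\!\left[\exists i:\ |\langle \bv,\by_i\rangle|\ge \tfrac{1}{2}\sqrt{s/d}\,\right]\ \ge\ \tfrac{1}{4}-\tfrac{1}{8}\ =\ \tfrac{1}{8},
\]
as required. The main (and really only) quantitative point to be careful about is the matching between the Hoeffding exponent $s/8$ and the union-bound budget $c_1 s$, which is why the constant $c_1>0$ in $K=\exp(c_1 s)$ must be taken sufficiently small; the factor $1/d^8$ is far below the correlation threshold $\sqrt{s/d}$ and therefore causes no loss.
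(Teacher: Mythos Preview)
Your proposal is correct and follows essentially the same approach as the paper: transfer ``$\bx_{\bA,\bv}$ close to $\by_i$'' to ``$\bv$ has large inner product with $\by_i$'' via the correlation guarantee, then use a Hoeffding-type tail bound and a union bound over the $K$ targets. One wording slip: you write that the event in the lemma ``is contained in'' the event $\{\bv\in\mathcal{S}\}\cap\{\bx_{\bA,\bv}\text{ far from every }\by_i\}$, but the containment goes the other way (the latter implies the former); your final inequality uses the correct direction, so the argument stands.
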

\begin{proof}
Let $V_{\close} \subseteq \frac{1}{\sqrt{d}}\mH_d$ be the set of vectors $\bv$ that has large correlations with some $\by_i$, i.e.,
\begin{align*}
V_{\close} = \bigcup_{i \in [K]} \left\{\bv: |\langle \bv, \by_i\rangle| \geq \sqrt{s/2d}, \bv\in \frac{1}{\sqrt{d}}\mH_d\right\}.
\end{align*}
We first prove the size of $V_\close$ is not large. 
For any fixed $\by_i$ ($i \in [K]$), by Lemma \ref{lem:projection-prob}, we have that
\begin{align*}
\Pr_{\bv\sim \frac{1}{\sqrt{d}}\mH_d}\left[|\langle \bv, \by_i\rangle| \geq \sqrt{s/2d}\right] \leq \exp(-c_2 s)
\end{align*}
for some constant $ c_2 > c_1  > 0$. Taking a union bound over $i \in [K]$, we have
\begin{align*}
\Pr_{\bv\sim \frac{1}{\sqrt{d}}\mH_d}\left[\exists i \in [K]: |\langle \bv, \by_i\rangle| \geq \sqrt{s/2d} \right] \leq K \cdot \exp(-c_2 s) \leq \exp(-(c_2-c_1)s) \leq \frac{1}{8}
\end{align*}
Hence, we have $|V_{\close}| \leq \frac{1}{8} \cdot 2^d$. 

Let $V' \subseteq \frac{1}{\sqrt{d}}\mH_d$ be the set of vectors such that Eq.~\eqref{eq:spread} holds. It suffices to prove that the protocol succeeds on $(\bA, \bv)$ only if $\bv \in V_{\close} \cup V'$, as this would imply $|V'| \geq \frac{1}{8} \cdot 2^d$.
For any vector $\bv \in \frac{1}{\sqrt{d}} \mH_d\backslash (V_{\close} \cup V')$, if $\|\bA \bx_{\bA,\bv}\|_\infty \leq \xi'$, then $\bx_{\bA, \bv}$ must be close to some $\by_i$ (since $\bv \notin V'$). Then we have 
\[
|\langle \bx_{\bA, \bv}, \bv\rangle| \leq |\langle \by_i, \bv\rangle| + \frac{1}{d^8} \leq \sqrt{\frac{s}{2d}} + \frac{1}{d^8} < \sqrt{\frac{s}{d}}
\]
where the first step follows from $\|\bx_{\bA, \bv} - \by_i\|_2 \leq \frac{1}{d^8}$, the second step follows from $\bv \notin V_{\close}$. We finish the proof here.
\end{proof}

\begin{definition}[$\alpha$-cover]
Let $\alpha > 0$ and $X \subseteq \R^{d}$ be a set of points. 
Let $\mN(X,\alpha) \subseteq X$ be the $\alpha$-covering of set $X$, which is defined as the {\em largest} set $X'\subseteq X$ such that for any $\bx_1, \bx_2 \in X'$, $\|\bx_1 - \bx_2\|_2 \geq \alpha$.
\end{definition}

Now, we prove that there exists a ``small'' set of vectors $V^{*} \subseteq \frac{1}{\sqrt{d}}H_d$ , such that the $(1/d^8)$-cover of $\mO^{\bM_\bA}(\bA, V^{*})$ is large , for most $\bA \in \mA_{\suc}$. 
We derive the existence using the probabilistic method. 
\begin{lemma}
\label{lem:probabilistic-v}
Let $c_1 > 0$ be a sufficiently small constant and $K = \exp(c_1 s)$. 
For any fixed matrix $\bA \in \mA_{\suc}$, with probability least $1 - \exp(-s/2)$ over the random draw of $V = \{\bv_{1}, \ldots \bv_{16sK}\}$,  
\[
\left|\mN\left(\mO^{\bM_\bA}(\bA, V), 1/d^8\right)\right|\geq K.
\]
\end{lemma}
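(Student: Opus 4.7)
The plan is to construct the $(1/d^8)$-cover through a greedy online sampling process and analyze it using Lemma~\ref{lem:spread} combined with a Chernoff bound. Specifically, I would draw the vectors $\bv_1,\ldots,\bv_{16sK}$ one at a time and maintain a running set $Y_j \subseteq \bS^d$, starting with $Y_0 = \emptyset$. At step $j$, letting $\bx_j := \bx_{\bA,\bv_j}$, I would set $Y_j = Y_{j-1}\cup\{\bx_j\}$ whenever $\|\bA\bx_j\|_\infty \le \xi'$ and $\|\bx_j - \by\|_2 > 1/d^8$ for every $\by \in Y_{j-1}$, and otherwise set $Y_j = Y_{j-1}$. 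By construction, $Y_{16sK}$ is a subset of $\mO^{\bM_\bA}(\bA, V)$ whose elements are pairwise $(1/d^8)$-separated, so by the maximality in the definition of $\mN$ we obtain $|\mN(\mO^{\bM_\bA}(\bA, V), 1/d^8)| \ge |Y_{16sK}|$, and it suffices to prove $|Y_{16sK}| \ge K$ except with probability $\exp(-s/2)$.

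The second step is to lower bound the per-round success probability. I would condition on the first $j-1$ draws, which fixes $Y_{j-1}$. Writing $K' = |Y_{j-1}|$, the case $K' \ge K$ is already a win, so assume $K' < K$ and pad $Y_{j-1}$ with arbitrary unit vectors to a fixed set of exactly $K$ unit vectors. Applying Lemma~\ref{lem:spread} to this padded set yields, with probability at least $1/8$ over the fresh $\bv_j$, that $\bx_j$ is both orthogonal in the sense $\|\bA\bx_j\|_\infty \le \xi'$ and $(1/d^8)$-far from every padded vector, and in particular from every element of $Y_{j-1}$. Hence, conditional on the history with $|Y_{j-1}| < K$, step $j$ adds a new vector to the running set with probability at least $1/8$.

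Finally, I would close with a Chernoff argument. Let $Z_j$ be the indicator that step $j$ adds a new vector; the previous paragraph shows that the $Z_j$'s stochastically dominate i.i.d.\ Bernoulli$(1/8)$ random variables as long as the running set has not reached size $K$. Over $16sK$ trials the expected number of successes is at least $2sK$, and a standard multiplicative Chernoff bound gives
\[
\Pr[|Y_{16sK}| < K] \le \exp(-\Omega(sK)) \le \exp(-s/2),
\]
where the final inequality uses $K = \exp(c_1 s) \ge 1$ and that $s$ is sufficiently large.

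I expect the main subtlety, rather than a genuine obstacle, to be the interaction between the random accumulated set $Y_{j-1}$ and the requirement in Lemma~\ref{lem:spread} that the points $\by_1,\ldots,\by_K$ be fixed before sampling $\bv$. I would handle this by conditioning on the history so that $Y_{j-1}$ becomes deterministic and then padding it up to size $K$ with arbitrary unit vectors; this makes the hypothesis of Lemma~\ref{lem:spread} legitimately satisfied without weakening the conclusion about being far from the (genuine) elements of $Y_{j-1}$.
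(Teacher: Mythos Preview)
Your proposal is correct and takes essentially the same approach as the paper: a greedy construction of a $(1/d^8)$-separated set inside $\mO^{\bM_\bA}(\bA,V)$, using Lemma~\ref{lem:spread} at each step to guarantee a success probability of at least $1/8$. The only cosmetic difference is that the paper partitions the $16sK$ samples into $K$ blocks of size $16s$ and applies a union bound over blocks rather than your single Chernoff bound over all samples.
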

\begin{proof}
We partition the set $V$ into $K$ groups, where the $i$-th  group 
\[
V_{i} = \{\bv_{16s(i-1)+1}, \ldots, \bv_{16si}\} \quad \forall i \in [K]
\]
We would construct a large set $\mN$ from $\mO^{\bM_{\bA}}(\bA, V)$ such that the pairwise distance is large.
Initially, $\mN_0 = \emptyset$.  
At the $i$-th step, suppose $\mN_{i-1} = \{\bx_1, \ldots, \bx_{i-1}\} \subseteq \mO^{\bM_{\bA}}(\bA, V_1 \cup \cdots \cup V_{i-1})$ be the set constructed thus far, we would add an entry from $\mO^{\bM_\bA}(\bA, V_{i})$ that is apart from existing vectors in $\mN_{i-1}$.

Note for each $j \in [16s]$, consider the entry $\bx_{\bA, \bv_{16s(i-1) + j}} \in \T^{\bM_{\bA}}(\bA, V_i)$, by Lemma \ref{lem:spread}, we have
\begin{align*}
\Pr_{\bv_{16s(i-1)+j}\sim \frac{1}{\sqrt{d}} \mH_d}\left[\|\bA \bx_{\bA, \bv_{16s(i-1)+j}}\|_\infty \leq \xi' \wedge \|\bx_{\bA, \bv_{16s(i-1)+j}} - \bx_\tau\|_2 > \frac{1}{d^8}\, \forall \tau \in [i-1] \right] \geq 1/8.
\end{align*}
Since $\{\bv_{16s(i-1) + j}\}_{j \in [16s]}$ are sampled independently  from $\frac{1}{\sqrt{d}} \mH_d$, with probability at least $1 -\exp(-s)$, there exists $j\in [16s]$ such that
\begin{itemize}
\item $\|\bx_{\bA, \bv_{16s(i-1) + j}} - \bx_\tau\|_2 \geq \frac{1}{d^8}$ for any $\tau\in [i-1]$, and
\item $\|\bA\bx_{\bA, \bv_{16s(i-1) + j}}\|_\infty \leq \xi'$.
\end{itemize}
Hence we can add $\mN_{i} = \mN_{i-1} \cup \{\bx_{\bA, 16s(i-1), j}\}$. Taking a union bound over $i\in [K]$, we have with probability at least $1 - \exp(-s/2)$, one has $|\mN(\mO^{\bM_{\bA}}(\bA, V), 1/d^8)| \geq K$.
\end{proof}

By Lemma \ref{lem:probabilistic-v}, we conclude that
\begin{lemma}
\label{lem:exist-v}
There exists a set $V^{*} \subseteq \frac{1}{\sqrt{d}} \mH_d$ with size at most $16sK \leq 2^{s}$,  such that the set
\begin{align*}
\nA := \left\{ \bA \in \mA_{\suc}: |\mN(\mO^{\bM_{\bA}}(\bA, V^{*}), 1/d^8)| \geq 2^{s/\log (d)}\right\}. 
\end{align*}
has size at least $\frac{1}{8} \cdot 2^{d^2/2}$.
\end{lemma}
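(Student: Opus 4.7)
The plan is to deduce Lemma \ref{lem:exist-v} from Lemma \ref{lem:probabilistic-v} by a standard averaging (probabilistic method) argument: since the conclusion of Lemma \ref{lem:probabilistic-v} holds for \emph{each} fixed $\bA\in\mA_{\suc}$ with high probability over a random $V$, by swapping the order of expectations, a random $V$ must simultaneously be good for many matrices $\bA$, and hence at least one fixed $V^*$ works for many matrices.

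Concretely, I would first fix $V=\{\bv_1,\ldots,\bv_{16sK}\}$ drawn i.i.d.\ from $\frac{1}{\sqrt d}\mH_d$ and, for each $\bA\in\mA_{\suc}$, let $X_\bA(V)$ be the indicator of the event
\[
\left|\mN\!\left(\mO^{\bM_\bA}(\bA,V),\,1/d^8\right)\right|\ge K.
\]
By Lemma \ref{lem:probabilistic-v}, $\E_V[X_\bA(V)]\ge 1-\exp(-s/2)\ge 7/8$ for $s$ large enough. Summing over $\bA\in\mA_{\suc}$ and exchanging expectation and sum gives
\[
\E_V\!\left[\,\big|\{\bA\in\mA_{\suc}:X_\bA(V)=1\}\big|\,\right]\;\ge\;\tfrac{7}{8}\,|\mA_{\suc}|\;\ge\;\tfrac{7}{8}\cdot\tfrac{1}{4}\cdot 2^{d^2/2}.
\]
Hence there exists a specific realization $V^*$ for which the number of matrices $\bA\in\mA_{\suc}$ satisfying the bound on the covering number is at least $\tfrac{7}{32}\cdot 2^{d^2/2}\ge \tfrac{1}{8}\cdot 2^{d^2/2}$.

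It then remains to check the two quantitative claims on $V^*$. For the size bound, $|V^*|=16sK=16s\exp(c_1 s)\le 2^s$, which holds because $c_1$ in Lemma \ref{lem:spread}/\ref{lem:probabilistic-v} is a sufficiently small absolute constant (so $c_1<\ln 2$) and the polynomial factor $16s$ is absorbed. For the covering bound, the guaranteed cardinality is $K=\exp(c_1 s)=2^{c_1 s/\ln 2}\ge 2^{s/\log d}$ for $d$ large enough (since $c_1/\ln 2$ is constant while $1/\log d\to 0$). Both checks are routine, so I expect no real obstacle beyond verifying that the constants in $K$ chain together correctly; the substantive probabilistic content has already been absorbed into Lemma \ref{lem:probabilistic-v}.
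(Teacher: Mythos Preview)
Your proposal is correct and matches the paper's approach: the paper simply states that Lemma~\ref{lem:exist-v} follows from Lemma~\ref{lem:probabilistic-v}, and your averaging argument is exactly the intended (and only natural) way to make that deduction precise. The quantitative checks on $|V^*|\le 2^s$ and $K\ge 2^{s/\log d}$ are also handled correctly.
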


We would fix the set of $V^{*}$ from now on, and we only consider matrix $\bA \in \nA$. In the rest of the proof, we would omit $V^{*}$ when there is no confusion. That is to say, we write the table $\T^{\bM}(\bA) := \T^{\bM}(\bA, V^{*})$ and its orthogonal entries $\mO^{\bM}(\bA) := \mO^{\bM}(\bA, V^{*})$.

\subsection{Encoding}

Up to this point, we have proved that there exists a large set of matrices $\nA \subseteq \{-1,1\}^{(d/2)\times d}$, such that, for any $\bA \in \nA$, the table $\T^{\bM_{\bA}}(\bA)$ contains many different orthogonal entries: The $(1/d^8)$-cover of $\mO^{\bM_\bA}(\bA)$ is of size at least $2^{s/\log (d)}$. 
We establish the contradiction by proving this is simply impossible!
Our strategy is to find an encoding strategy such that (1) it encodes every matrix in $\nA$, but at the same time (2) the number of matrices encoded are at most $2^{d^2/2}/d$.

We first introduce the notion of robustly linearly independent sequence.\footnote{Our definition of $\gamma$-robustly linearly independent sequence is slightly different from \cite{marsden2022efficient}, roughly speaking, $\gamma$-RLI in our work implies $(\gamma^2/2)$-RLI of \cite{marsden2022efficient} and vice versa. }
\begin{definition}[$\gamma$-robustly linearly independent sequence]
\label{def:robust-linear-independent}
Let $\gamma \in (0,1), L \in [d]$. A sequence of unit vectors $\by_1, \ldots, \by_L$ is $\gamma$-robustly linearly independent, if for any $j \in [L-1]$,
\[
\|\by_j - \proj_{\mathsf{span}(\by_1, \ldots, \by_{j-1})}(\by_j)\|_2 \geq \gamma
\]
That is to say, $\by_j$ has a non-trivial component that is orthogonal to the linear span of $\by_1, \ldots, \by_{j-1}$.
\end{definition}

The definition of $\gamma$-RLI sequence allows for arbitrary length, but in the rest of section, we would consider sequence of length
\[
L := \frac{s}{4\log^8 d}
\]

\paragraph{Parameters.} We introduce a few parameters. Let 
\[
\Delta := \frac{d}{n} \geq \exp(\log d /\log \log d).
\]
Let $H$ be the smallest integer such that
\[
\frac{s}{\log^2(d)} \cdot \left(\frac{\Delta}{\log^5(d)}\right)^{H} \geq \frac{d}{10},
\]
we know that $H \leq \log\log d$ by our choice of parameters.

The encoding strategy proceeds in $H$ levels, and the step size 
\[
s_h =  \frac{s}{\log^2(d)} \cdot \left(\frac{\Delta}{\log^5(d)}\right)^{h}, 
\quad \forall h\in [0: H-1]  \quad \text{and} \quad s_H = \frac{d}{10}.
\]
Denote the partial sum as $s_{\leq h} = \sum_{i=1}^{h}s_i$ and $s_{\leq 0} = 0$ (note they exclude $s_0$).

We consider doubly exponentially decreasing radius of cover $\{\alpha_h\}_{h \in [H]}$,
\[
\alpha_0 = \frac{1}{d^8}, \quad \alpha_h = \alpha_{h-1}^{8} = d^{-8^{h+1}} \, \forall h \in [0:H]
\]
Note we still guarantee $\alpha_H \geq \exp(-8\log^4(d)) \gg \xi' = \sqrt{d}\exp(-\log^5(d))$.

\begin{definition}[Partition]
At the $h$-th level ($h\in [H]$), we consider partitions of rows $[d/2]$ that satisfy 
\[
P_{h, 1} \cup P_{h, 2} \cup P_{h, 3}= [d/2], \quad |P_{h, 1}| = d/2 - s_{\le h}, \, |P_{h, 2}| = s_{h},\,  |P_{h,3}| = s_{\le h-1}
\]
and let $\sP_h$ be the collection of all such partitions.

Given a partition $P_h \in \sP_h$ and three sub-matrices $\bA_{h, 1} \in \{-1,1\}^{(d/2-s_{\le h})\times d}$, $\bA_{h, 2} \in \{-1,1\}^{s_h\times d}$ and $\bA_{h, 3} \in \{-1,1\}^{s_{\le h-1}\times d}$,  
Let $[\bA_{h,1}, \bA_{h,2}, \bA_{h,3}]_{P_h} \in \{-1,1\}^{(d/2)\times d}$ be the matrix induced naturally by the partition. 
That is, the $j$-th row of $\bA_{h, \tau}$ is located at the $P_{h, \tau}(j)$-th row, where $P_{h, \tau}(j)$ is the $j$-th element in $P_{h, \tau}$ ($\tau =1,2,3, j \in [|P_{h, \tau}|]$).
\end{definition}

Finally, we set $\Gamma_h := 2^{s_h d - 2kd}$ to be the threshold for level $h$.



\paragraph{Encoding algorithm} The encoding strategy is formally stated at Algorithm \ref{algo:encode}. It divides into $H$ levels. In the $h$-th level, it enumerates all possible messages $\bM \in \{0,1\}^{kd}$, partitions $P_h \in \sP_h$ and sub-matrices $\bA_{h,1} \in \{-1,1\}^{(d/2-s_{\le h})\times d}$. 
It remains to determine $\bA_{h,2} \in \{-1,1\}^{s_h \times d}$ ($\bA_{h,3}$ could take any value in $\{-1,1\}^{s_{\le h-1} \times d}$).
To this end, it checks the orthogonal entries $\mO^{\bM}(\bA_{h, 1})$ and enumerates all $(\alpha_{h-1}/4)$-RLI sequence (of length $L$) in it. 
It includes matrices $\bA_{h, 2}$ that is orthogonal to one of the sequence and and takes a union over all of them (Line \ref{line:sequence-begin}-\ref{line:sequence-end}).
Finally, the encoding algorithm includes this set of matrices (i.e., $\calS_{\bM, P_h, \bA_{h,1}}$) only if its size is no more than $\Gamma_h =2^{s_h d -2kd}$.

\begin{algorithm}[!t]
\caption{Encoding}
\label{algo:encode}
\begin{algorithmic}[1]
\State $\mA_h \leftarrow \emptyset$ ($h \in [H]$) 
\For{$h=1,2,\ldots, H$}
\ForEach{message $\bM \in \{0, 1\}^{kd}$, partition $P_{h} \in \sP_h$, matrix $\bA_{h,1} \in \{-1,1\}^{(d/2-s_{\le h})\times d}$}
\State $\calS_{\bM, P_h, \bA_{h, 1}} \leftarrow \emptyset$
\ForEach{$(\alpha_{h-1}/4)$-RLI sequence $\bx_{h, 1}, \ldots, \bx_{h, L} \in \mO^{\bM}(\bA_{h, 1})$}   \label{line:sequence-begin}
\State $\mI_{h, \bx_{h, 1}, \ldots, \bx_{h, L}} \leftarrow \{\bA_{h, 2} \in \{-1,1\}^{s_h\times d}, \|\bA_{h,2} \bx_{h, i}\|_\infty \leq \xi' \, \forall i \in [L]\}$
\State $\calS_{\bM, P_h, \bA_{h, 1}} \leftarrow \calS_{\bM, P_h, \bA_{h, 1}} \cup \mI_{h, \bx_{h, 1}, \ldots, \bx_{h, L}}$
\EndFor\label{line:sequence-end}
\State $\mJ_{\bM, P_h, \bA_{h, 1}} \leftarrow \{[\bA_{h, 1}, \bA_{h, 2}, \bA_{h, 3}]_{P_h}: \bA_{h, 2} \in \calS_{\bM, P_h, \bA_{h, 1}}, \bA_{h, 3} \in  \{-1,1\}^{s_{\le h-1}\times d} \}$ 
\If{$|\calS_{\bM, P_h, \bA_{h, 1}}| \leq \Gamma_h$}
\State $\mA_h \leftarrow \mA_h \cup \mJ_{\bM, P_h, \bA_{h, 1}}$
\EndIf
\EndFor
\EndFor
\State $\mA \leftarrow \mA_1 \cup \cdots \cup \mA_H$
\end{algorithmic}
\end{algorithm}

\subsection{Analysis}

Our goal is to prove the size of $\mA$ is at most $2^{d^2/2}/d$, but at the same time, it contains $\nA$, this would reach a contradiction. In particular, we prove

\begin{lemma}[Upper bound on $\mA$]
\label{lem:upper-size}
The size of $\mA$ is at most $2^{d^2/2}/d$.
\end{lemma}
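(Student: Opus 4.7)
The plan is a direct counting argument, level by level, that carefully tracks the exponents and uses the threshold test built into Algorithm~\ref{algo:encode}. The intuition is that the definition $\Gamma_h = 2^{s_h d - 2kd}$ is precisely tuned so that, after summing over all the enumeration choices, the matrix bits collapse to exactly $d^2/2$ and the extra $-kd$ in $\Gamma_h$ remains as a surplus that absorbs the enumeration of messages, partitions, and levels.

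First I would fix $h \in [H]$ and bound $|\mA_h|$. By construction, every element of $\mA_h$ arises from a tuple $(\bM, P_h, \bA_{h,1}, \bA_{h,2}, \bA_{h,3})$ where $\bA_{h,2} \in \calS_{\bM,P_h,\bA_{h,1}}$, the tuple passes the threshold test, and $\bA_{h,3}$ is arbitrary. Enumerating these in order and using the threshold bound $|\calS_{\bM,P_h,\bA_{h,1}}| \leq \Gamma_h$ yields
\[
|\mA_h| \;\leq\; 2^{kd}\cdot |\sP_h| \cdot 2^{(d/2 - s_{\leq h})d} \cdot \Gamma_h \cdot 2^{s_{\leq h-1}\, d}.
\]
Plugging in $\Gamma_h = 2^{s_h d - 2kd}$ and using the identity $s_{\leq h} = s_{\leq h-1} + s_h$, the exponents from the matrix factors telescope to exactly $d^2/2$, while the message and threshold factors combine to leave $-kd$. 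This gives the clean bound
\[
|\mA_h| \;\leq\; |\sP_h| \cdot 2^{d^2/2 - kd}.
\]

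Next I would bound $|\sP_h|$, which counts ordered partitions of $[d/2]$ into three blocks of sizes $(d/2 - s_{\leq h},\, s_h,\, s_{\leq h-1})$. A crude count gives $|\sP_h| \leq d^{2 s_{\leq h}}$. Because $s_h$ grows geometrically in $h$ with ratio $\Delta/\log^5 d \gg 1$, we have $s_{\leq h} \leq 2 s_H \leq d/5$, hence $\log_2|\sP_h| = O(d \log d)$. Summing over $h \in [H]$ loses only a factor $H \leq \log\log d$, so
\[
|\mA| \;\leq\; \log\log(d) \cdot 2^{O(d \log d)} \cdot 2^{d^2/2 - kd}.
\]
Since $k = d^{1-\delta-o(1)}$ with $\delta<1$, we have $kd = d^{2-\delta-o(1)}$, which dominates $O(d \log d) + \log_2 \log \log d + \log_2 d$. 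This yields $|\mA| \leq 2^{d^2/2}/d$ as required.

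I expect the argument to be essentially bookkeeping with no deep obstacle: the only nontrivial checks are (i) verifying the exponent identity $(d/2 - s_{\leq h}) + s_h + s_{\leq h-1} = d/2$ so that the matrix bits collapse exactly to $d^2/2$; (ii) confirming that $\Gamma_h$ uses $-2kd$ (not $-kd$), which is exactly what is needed to leave a $-kd$ surplus after enumerating messages; and (iii) ensuring the regime $\delta<1$ makes $kd$ asymptotically larger than the overhead terms $O(d\log d)$ coming from partitions and the $\log\log d$ factor from the level union. The lemma is fundamentally a sanity check that the step-sizes $s_h$, the threshold $\Gamma_h$, and the number of levels $H$ were chosen consistently.
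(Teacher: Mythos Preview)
Your proposal is correct and follows essentially the same counting argument as the paper: enumerate $(\bM,P_h,\bA_{h,1})$, use the threshold $|\calS_{\bM,P_h,\bA_{h,1}}|\le\Gamma_h$ for $\bA_{h,2}$, let $\bA_{h,3}$ be free, telescope the exponents to $d^2/2$, and let the surplus $-kd$ absorb the overhead. The only cosmetic difference is the partition count: the paper uses the crude but tighter $|\sP_h|\le 3^{d/2}\le 2^{3d}$, whereas you use $|\sP_h|\le d^{2s_{\le h}}=2^{O(d\log d)}$; both are swallowed by $kd=d^{2-\delta-o(1)}$, so the conclusion is unaffected (your inequality $s_{\le h}\le 2s_H$ is off by a constant at $h=H$, but $s_{\le H}=O(d)$ is all you need).
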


\begin{lemma}[Lower bound on $\mA$]
\label{lem:lower-size}
$\nA \subseteq \mA$.
\end{lemma}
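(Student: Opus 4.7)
I would prove the lemma by contradiction. Fix $\bA \in \nA$ with first-round message $\bM := \bM_\bA$, and suppose $\bA \notin \mA$, i.e., $\bA$ is not included at any of the $H$ levels of the encoding scheme. My plan is to maintain an inductive invariant: at each level $h \in [0:H]$ there exists a sub-matrix $\bA_{h,1}^\star$ of $\bA$ obtained by removing $s_{\le h}$ rows such that $|\mN(\mO^{\bM}(\bA_{h,1}^\star), \alpha_h)| \geq N_h$, where $N_0 = 2^{s/\log d}$ and the thresholds $N_h$ grow geometrically in $h$. The base case $h=0$ takes $\bA_{0,1}^\star = \bA$ and holds by definition of $\nA$ (Lemma~\ref{lem:exist-v}).

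For the inductive step from $h$ to $h+1$, I would first use the invariant at level $h$ to exhibit a partition $P_{h+1}$ at which the linear-algebraic encoding condition in Algorithm~\ref{algo:encode} is satisfied. Greedily extract an $(\alpha_h/4)$-RLI sequence $\bx_1,\dots,\bx_L \in \mO^{\bM}(\bA_{h,1}^\star)$; at each step the set of candidates ruled out by lying close to the span of previously chosen vectors has size at most $(O(1/\alpha_h))^{L}$, which is dwarfed by $N_h$ under our parameters. Each $\bx_i = \bx_{\bM,\bv_i,\bR_i}$ touches at most $n$ rows, so the sequence uses at most $nL \ll s_{h+1}$ rows of $\bA_{h,1}^\star$; I place $s_{h+1}$ other rows into $\bA_{h+1,2}^\star$ and call the remaining $d/2 - s_{\le h+1}$ rows $\bA_{h+1,1}^\star$. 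Because the $\bx_i$ only use rows outside $\bA_{h+1,2}^\star$, they lie in $\T^{\bM}(\bA_{h+1,1}^\star)$, and since $\bA$ itself is orthogonal to them up to $\xi'$, both $\|\bA_{h+1,1}^\star \bx_i\|_\infty \le \xi'$ and $\|\bA_{h+1,2}^\star \bx_i\|_\infty \le \xi'$ hold. Hence $\bA_{h+1,2}^\star \in \mI_{h+1,\bx_1,\dots,\bx_L} \subseteq \calS_{\bM,P_{h+1},\bA_{h+1,1}^\star}$, and because $\bA$ is not encoded at level $h+1$ the size test of Algorithm~\ref{algo:encode} must fail: $|\calS_{\bM,P_{h+1},\bA_{h+1,1}^\star}| > \Gamma_{h+1}$.

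I would then convert this size lower bound into a cover lower bound. A standard anti-concentration estimate for $\pm 1$ vectors tested against $\gamma$-RLI directions gives $|\mI_{h+1,\bx_1,\dots,\bx_L}| \le 2^{s_{h+1} d} \cdot (C \xi'/\alpha_h)^{L s_{h+1}}$ for an absolute constant $C$. Dividing $\Gamma_{h+1}$ by this bound lower-bounds the number of distinct $(\alpha_h/4)$-RLI $L$-tuples in $\mO^{\bM}(\bA_{h+1,1}^\star)$, which in turn is at most $N_{h+1}^L$ where $N_{h+1} := |\mN(\mO^{\bM}(\bA_{h+1,1}^\star),\alpha_{h+1})|$, since the $\alpha_{h+1}$-cover refines any $(\alpha_h/4)$-packing. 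Taking $L$-th roots closes the induction with a geometric blow-up in $N_h$, driven by the factor $\Delta/\log^5 d$ in $s_h$ and absorbed by the doubly-exponential schedule $\alpha_h = \alpha_{h-1}^{8}$, which is calibrated precisely so that $(C\xi'/\alpha_h)^{L s_{h+1}}$ does not overwhelm the $\Gamma_{h+1}$ gain.

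After $H$ iterations I reach a contradiction: the invariant yields $|\mN(\mO^{\bM}(\bA_{H,1}^\star),\alpha_H)| \ge N_H$, while on the other hand $\mO^{\bM}(\bA_{H,1}^\star) \subseteq \T^{\bM}(\bA_{H,1}^\star)$ has cardinality at most $|V^*| \cdot (d/2 + 1)^n \le 2^{s + 2n\log d}$, and $H \le \log\log d$ is chosen just large enough that the geometrically growing $N_h$ eventually exceeds this trivial bound. The main obstacle I anticipate is the quantitative balancing of the coupled parameters $s_h, \alpha_h, \Gamma_h, L$: one must verify that every induction step genuinely inflates $N_h$ by a $\Delta/\polylog(d)$ factor in the exponent, rather than merely tracking arithmetic. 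The $\log^5 d$ buffer inside $s_h$ and the aggressive eighth-power shrinkage of $\alpha_h$ exist precisely to absorb both the anti-concentration loss $(C\xi'/\alpha_h)^{L s_{h+1}}$ and the $-2kd$ term inside $\Gamma_{h+1} = 2^{s_{h+1}d - 2kd}$, and checking that this delicate budgeting survives all $H = O(\log \log d)$ iterations is where the real work lies.
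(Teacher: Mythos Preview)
Your high-level outline matches the paper's induction, but there is a genuine gap in the inductive step that breaks the argument.

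You first extract an $(\alpha_h/4)$-RLI sequence $\bx_1,\dots,\bx_L$ from $\mO^{\bM}(\bA_{h,1}^\star)$ and then try to choose the $s_{h+1}$ rows of $\bA_{h+1,2}^\star$ to avoid the rows used by this sequence, asserting that ``the sequence uses at most $nL \ll s_{h+1}$ rows''. This inequality is false under the stated parameters. With $n=d/\Delta$, $L=s/(4\log^8 d)$, $s=d^{1-\delta/2}\log^2 d$ and $\Delta$ as small as $\exp(\log d/\log\log d)$, one has
\[
nL \;=\; \frac{d\,s}{4\Delta\log^8 d}\;=\;\frac{d^{\,2-\delta/2}}{4\Delta\log^6 d},
\]
which is polynomially large in $d$, whereas $s_{h+1}\le d/10$ for every $h$; in fact $nL$ already exceeds $d/2$, so the rows used by the sequence cannot even fit inside $\bA_{h+1,1}^\star$. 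Thus you cannot guarantee $\bx_1,\dots,\bx_L\in\T^{\bM}(\bA_{h+1,1}^\star)$, and the inclusion $\bA_{h+1,2}^\star\in\calS_{\bM,P_{h+1},\bA_{h+1,1}^\star}$ does not follow.

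The paper avoids this obstruction by reversing the order of operations. It first proves a ``Deleting rows'' lemma: since every entry of $\T^{\bM}(\wt{\bA}_h)$ uses at most $n$ of the $\ge d/10$ rows of $\wt{\bA}_h$, an averaging argument shows one can remove some single row and retain at least a $(1-10/\Delta)$ fraction of the $\alpha_h$-cover of $\mO^{\bM}(\wt{\bA}_h)$. Iterating $s_{h+1}=s_h\Delta/\log^5 d$ times keeps at least $2^{s_h/(2\log d)}$ well-spread points in $\mO^{\bM}(\wt{\bA}_{h+1})$. Only then is the RLI sequence extracted, now automatically inside $\T^{\bM}(\wt{\bA}_{h+1})$. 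This averaging step is precisely the missing ingredient; without it the row budget cannot be balanced.

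A secondary imprecision: you bound the number of $(\alpha_h/4)$-RLI $L$-tuples in $\mO^{\bM}(\bA_{h+1,1}^\star)$ by $N_{h+1}^L$. There can be uncountably many such tuples; what is bounded by $\hat N_{h+1}^L$ is the number of equivalence classes after rounding each coordinate to the $\alpha_{h+1}$-cover, and one then needs the ``robust'' variant $\wt{\mI}_{h+1,\bx_1,\dots,\bx_L}$ (allowing $2\alpha_{h+1}$-perturbations of each $\bx_i$) to bound $|\calS|$ class by class. This is fixable, but the row-counting issue above is not within your current scheme.
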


The proof of Lemma \ref{lem:upper-size} is straightforward and follows simply from counting.
\begin{lemma}
For each level $h\in [H]$, the size of $\mA_h$ is at most $2^{d^2/2}/d^2$. 
\end{lemma}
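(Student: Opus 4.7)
The plan is to bound $|\mA_h|$ by a direct counting argument, multiplying the number of triples $(\bM, P_h, \bA_{h,1})$ enumerated by Algorithm~\ref{algo:encode} against the maximum size of $\mJ_{\bM, P_h, \bA_{h,1}}$ for each such triple. The bound will rely on the fact that the algorithm discards any triple with $|\calS_{\bM, P_h, \bA_{h,1}}| > \Gamma_h$, together with the identity $s_{\le h} = s_{\le h-1} + s_h$ (forced by the partition condition) which makes the exponent telescope cleanly.

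First I would count the ingredients of the outer \textbf{foreach} loop. There are exactly $2^{kd}$ messages $\bM$. For the partitions $P_h \in \sP_h$ of $[d/2]$ into blocks of sizes $(d/2 - s_{\le h}, s_h, s_{\le h-1})$, the multinomial coefficient is bounded by
\[
|\sP_h| = \binom{d/2}{d/2 - s_{\le h}, s_h, s_{\le h-1}} \le (d/2)^{s_{\le h}} \le d^{s_{\le h}},
\]
since the small blocks have total size $s_{\le h}$ and each chooses from at most $d/2$ indices. Finally there are $2^{(d/2 - s_{\le h})d}$ choices for $\bA_{h,1} \in \{-1,1\}^{(d/2 - s_{\le h})\times d}$. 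For any surviving triple, $\mJ_{\bM, P_h, \bA_{h,1}}$ is obtained by picking $\bA_{h,2} \in \calS_{\bM, P_h, \bA_{h,1}}$ and arbitrary $\bA_{h,3} \in \{-1,1\}^{s_{\le h-1} \times d}$; since the partition $P_h$ is fixed, the reassembly map $[\bA_{h,1}, \bA_{h,2}, \bA_{h,3}]_{P_h}$ is injective in $(\bA_{h,2}, \bA_{h,3})$. Hence $|\mJ_{\bM, P_h, \bA_{h,1}}| \le \Gamma_h \cdot 2^{s_{\le h-1} d} = 2^{s_h d - 2 k d} \cdot 2^{s_{\le h-1} d}$.

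Multiplying everything, I obtain
\begin{align*}
|\mA_h| &\le 2^{kd} \cdot d^{s_{\le h}} \cdot 2^{(d/2 - s_{\le h})d} \cdot 2^{s_h d - 2kd} \cdot 2^{s_{\le h-1} d} \\
&= d^{s_{\le h}} \cdot 2^{-kd + d^2/2 + (s_h + s_{\le h-1} - s_{\le h}) d} = d^{s_{\le h}} \cdot 2^{d^2/2 - kd},
\end{align*}
using $s_h + s_{\le h-1} = s_{\le h}$ in the last step. It remains to verify $d^{s_{\le h}+2} \le 2^{kd}$, i.e., $(s_{\le h}+2)\log_2 d \le kd$. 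Because $\Delta/\log^5 d \to \infty$ (recall $\Delta \ge \exp(\log d/\log\log d)$), the geometric series $s_{\le h} = \sum_{i=1}^{h} s_i$ is dominated by its last term, giving $s_{\le h} \le 2 s_H \le d/5$. Hence $(s_{\le h}+2)\log_2 d \le d \log_2 d$, which is dwarfed by $kd = d^{2-\delta-o(1)}$ for any constant $\delta \in (0,1)$. This yields $|\mA_h| \le 2^{d^2/2}/d^2$, as claimed.

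The argument is essentially bookkeeping; the only things to get right are (i) that $|\sP_h|$ is negligible compared to $2^{kd}$ — safe because $kd$ is nearly quadratic while $\log |\sP_h|$ is essentially linear in $d$ — and (ii) that the exponent simplifies via the partition identity. I expect no real obstacle beyond carrying these simplifications through carefully.
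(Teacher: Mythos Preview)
Your proposal is correct and follows essentially the same counting argument as the paper: enumerate $(\bM,P_h,\bA_{h,1})$, use the threshold $\Gamma_h$ to bound $|\mJ_{\bM,P_h,\bA_{h,1}}|$, and observe that the exponents telescope via $s_h+s_{\le h-1}=s_{\le h}$. The only cosmetic differences are that the paper bounds $|\sP_h|$ by the cruder $2^{3d}$ (rather than your $d^{s_{\le h}}$), and your estimate $s_{\le h}\le 2s_H=d/5$ is slightly off at $h=H$ (it can be up to about $3d/10$), but this is immaterial since $kd=d^{2-\delta-o(1)}\gg d\log d$ anyway.
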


\begin{proof}
For any fixed $h \in [H]$, the total number of different messages $\bM \in \{0,1\}^{kd}$, partitions $P_h = (P_{h,1}, P_{h,2}, P_{h,3})$ and sub-matrices $\bA_{h, 1} \in  \{-1,1\}^{(d/2-s_{\le h})\times d}$ are at most
\begin{align}
2^{kd} \times 2^{3d} \times 2^{d(d/2 - s_{\le h})} \label{eq:count1}
\end{align}
It remains to bound the size of $\mJ_{\bM, P_h, \bA_{h, 1}}$. Note it would not be counted if $|\calS_{\bM, P_h, \bA_{h, 1}}| > \Gamma_h$.
On the other side, when $|\calS_{\bM, P_h, \bA_{h, 1}}| \leq \Gamma_{h}$, the total different choice of $\bA_{h, 2}$ is at most $\Gamma_h = 2^{ds_h -2kd}$. We make no restrictions on $\bA_{h, 3}$ and its size is at most $2^{d s_{\le h-1}}$. Multiply these numbers, we have
\begin{align}
|\mJ_{\bM, P_h, \bA_{h, 1}}| \leq 2^{ds_h -2kd}\times 2^{d s_{\le h-1}}.\label{eq:count2}
\end{align}
Combining Eq.~\eqref{eq:count1}\eqref{eq:count2}, we have
\begin{align*}
|\mA_h| \leq (2^{kd} \times 2^{3d} \times 2^{d(d/2 - s_{\le h})}) \times (2^{d s_h - 2kd} \times 2^{ds_{\le h-1}}) = 2^{d^2/2 +3d -kd} \leq 2^{d^2/2}/d^2.
\end{align*}
We complete the proof here.
\end{proof}

We set to prove Lemma \ref{lem:lower-size}, which is the major technical Lemma. We set
\[
N_h = 2^{s_h/\log (d)} \quad \forall h \in [0:H].
\]

\begin{proof}[Proof of Lemma \ref{lem:lower-size}]
For any fixed $\bA \in \nA$, we wish to prove $\bA \in \mA$. We prove by induction on $h$, and our inductive hypothesis is

\begin{itemize}
\item {\bf Inductive hypothesis}. If $\bA \notin \mA_{\ell}$ for any $\ell =0,1,\ldots, h$, then there is a submatrix $\wt{\bA}_{h} \subseteq \bA$ ($\wt{\bA}_{h} \in \{-1,1\}^{(d/2 - s_{\le h})\times d})$, such that $|\mN(\mO^{\bM_{\bA}}(\wt{\bA}_{h}), \alpha_{h})|  \geq N_{h}$. 
\end{itemize}
In another word, our inductive hypothesis asserts that if $\bA$ has not yet been added to $\mA$ till level $h$, then there is a submatrix $\wt{\bA}_{h}$ that takes $(d/2 - s_{\le h})$ rows from $\bA$ and the $\alpha_{h}$-cover of $\mO^{\bM_{\bA}}(\wt{\bA}_{h})$ has size at least $N_{h}$.

The base case of $h=0$ holds directly from the definition of $\nA$. In particular, it is clear that $\bA \notin \mA_0 = \emptyset$ and we would take $\wt{\bA}_0 = \bA$. By Lemma \ref{lem:exist}, we have that $|\mN(\mO^{\bM_{\bA}}(\bA), \alpha_{0})|  \geq N_{0}$.

Suppose the claim holds up to $h$, and the orthogonal entries $\mO^{\bM_{\bA}}(\wt{\bA}_h)$ has a large $\alpha_h$-cover.
Our first step is to remove $s_{h+1} = s_h \cdot (\frac{\Delta}{\log^5 (d)})$ different rows, $\bR_{h} \in \{-1,1\}^{s_{h+1}\times d}$, from $\wt{\bA}_h$, and prove that the remaining sub-matrix $\wt{\bA}_{h+1} = \wt{\bA}_h \backslash \bR_h$ still contains a large number of entries from the $\alpha_h$ cover. In particular,

\begin{lemma}[Deleting rows]
\label{lem:delete}
Suppose 
\[
\left|\mN(\mO^{\bM_{\bA}}(\wt{\bA}_{h}), \alpha_{h})\right| \geq N_h = 2^{s_h/\log(d)},
\] 
then there exists $s_{h+1}$ rows of $\wt{\bA}$, denoted as $\bR_h$, such that, after deleting $\bR_h$ from $\wt{\bA}_h$, the remaining matrix $\wt{\bA}_{h+1} = \wt{\bA}_h \backslash \bR_h$ satisfies, 
\begin{align*}
\left|\mO^{\bM_\bA}(\wt{\bA}_{h+1}) \cap \mN(\mO^{\bM_\bA}(\wt{\bA}_{h}), \alpha_{h})\right|  \geq 2^{s_h/2\log(d)}.
\end{align*}
\end{lemma}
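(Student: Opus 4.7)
\textbf{Proof plan for Lemma~\ref{lem:delete}.} The plan is a straightforward averaging argument: sample $\bR_h$ uniformly at random as a subset of $s_{h+1}$ rows of $\wt{\bA}_h$, lower bound the expected number of cover points that ``survive'' the deletion, and then fix a realization of $\bR_h$ attaining the expectation. All the design choices in the parameter schedule (in particular the ratio $s_{h+1}/s_h = \Delta/\log^5 d$ and the doubling of the exponent from $s_h/\log d$ down to $s_h/(2\log d)$) are tailored to make this argument go through.

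First, for each of the $N_h = 2^{s_h/\log d}$ cover points $\bx \in \mN(\mO^{\bM_\bA}(\wt{\bA}_h),\alpha_h)$, fix one witnessing representation $\bx = \bx_{\bM_\bA,\bv(\bx),\bR(\bx)}$ with $\bv(\bx)\in V^*$ and $\bR(\bx)\in(\row(\wt{\bA}_h)\cup\{\nil\})^n$, and let $R(\bx)\subseteq \row(\wt{\bA}_h)$ be the set of non-$\nil$ rows appearing in $\bR(\bx)$, so $|R(\bx)|\le n$. I will call $\bx$ a \emph{survivor} (with respect to $\bR_h$) if $R(\bx)\cap \bR_h=\emptyset$. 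Every survivor is automatically in $\mO^{\bM_\bA}(\wt{\bA}_{h+1})\cap \mN(\mO^{\bM_\bA}(\wt{\bA}_h),\alpha_h)$: its witness $\bR(\bx)$ still lies in $(\row(\wt{\bA}_{h+1})\cup\{\nil\})^n$, so $\bx\in \T^{\bM_\bA}(\wt{\bA}_{h+1},V^*)$, and $\|\wt{\bA}_{h+1}\bx\|_\infty \le \|\wt{\bA}_h\bx\|_\infty \le \xi'$ because $\wt{\bA}_{h+1}$ is a row-submatrix of $\wt{\bA}_h$. Thus it suffices to exhibit an $\bR_h$ producing at least $2^{s_h/(2\log d)}$ survivors.

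Next comes the calculation. Since the $s_i$'s grow geometrically with ratio $\Delta/\log^5 d \gg 2$, one has $s_{\leq h}\le 2s_H=d/5$ and hence $m:=d/2-s_{\leq h}\ge d/4$, while $s_{h+1}\le s_H = d/10$. Sampling $\bR_h$ uniformly from the size-$s_{h+1}$ subsets of the rows of $\wt{\bA}_h$, the hypergeometric identity gives
\[
\Pr\bigl[R(\bx)\cap \bR_h=\emptyset\bigr] \;=\; \prod_{i=0}^{s_{h+1}-1}\frac{m-|R(\bx)|-i}{m-i} \;\ge\; \Bigl(1-\tfrac{10n}{d}\Bigr)^{s_{h+1}} \;\ge\; \exp\!\left(-\frac{20\, n\, s_{h+1}}{d}\right) \;=\; \exp\!\left(-\frac{20\, s_h}{\log^5 d}\right),
\]
where the last equality uses $n=d/\Delta$ and $s_{h+1}=s_h\Delta/\log^5 d$. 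By linearity of expectation the expected number of survivors is at least
\[
N_h\cdot \exp\!\left(-\tfrac{20 s_h}{\log^5 d}\right)\;=\;2^{\,s_h/\log d - 20 s_h/(\log^5 d\cdot \ln 2)}\;\ge\;2^{\,s_h/(2\log d)},
\]
for $d$ sufficiently large (since $\log^4 d$ dominates any absolute constant). Picking any $\bR_h$ attaining the expectation yields the claim.

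\textbf{Main obstacle.} The proof itself is short once the right view is adopted, so the real difficulty is definitional and lies in step one: one must commit, ahead of time, to a single representation $(\bv(\bx),\bR(\bx))$ per cover point, so that the event ``$\bx$ survives'' becomes a simple combinatorial event about subsets of at most $n$ rows, and so that surviving is genuinely sufficient for $\bx$ to belong to $\mO^{\bM_\bA}(\wt{\bA}_{h+1})$. Everything downstream is then just verifying that the chain $s_{h+1}=s_h\Delta/\log^5 d$ supplies exactly the slack needed so that deleting $s_{h+1}$ out of $\Theta(d)$ rows kills at most a $\smash{2^{s_h/(2\log d)}}$-factor of the cover; the $\log^5 d$ gap between the inverse temperatures $1/\log d$ (governing $N_h$) and $1/\log^5 d$ (governing the loss) is precisely what powers the recursive encoding of Section~\ref{sec:sketchlowerbound}.
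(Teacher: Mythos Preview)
Your proof is correct and uses essentially the same averaging idea as the paper. The only difference is packaging: the paper deletes rows one at a time, at each step choosing (by averaging over single rows) a row whose removal kills at most a $10/\Delta$ fraction of the surviving cover points, and then iterates $s_{h+1}$ times; you instead sample all $s_{h+1}$ rows at once and average in one shot. Both rest on the same observation---each cover point has a witness using at most $n$ rows---and produce the same loss factor $(1-O(1/\Delta))^{s_{h+1}} = 2^{-O(s_h/\log^5 d)}$, leaving well over half the exponent $s_h/\log d$ intact.
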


\begin{proof}
It is clear if an entry $\bx \in \mN(\mO^{\bM_\bA}(\wt{\bA}_{h}), \alpha_{h})$, then it satisfies 
$
\|\wt{\bA}_{h+1}\bx\|_\infty \leq \|\wt{\bA}_h \bx\|_\infty \leq \xi'.
$
Hence, all we need to prove is that a non-trivial amount of entries in $\mN(\mO^{\bM_\bA}(\wt{\bA}_{h}), \alpha_{h})$ survive after the deletion. Indeed, for any submatrix $\wt{\bA} \subseteq \wt{\bA}_h$ of at least $\frac{d}{10}$ rows, we wish to prove, there exists a row $\br\in \wt{\bA}$, such that
\begin{align}
\left|\mO^{\bM_{\bA}}(\wt{\bA}\backslash \{\br\}) \cap \mN(\mO^{\bM_\bA}(\wt{\bA}_{h}), \alpha_{h})\right| \geq \left(1 -\frac{10}{\Delta}\right)\cdot \left|\mO^{\bM_{\bA}}(\wt{\bA}) \cap \mN(\mO^{\bM_\bA}(\wt{\bA}_{h}), \alpha_{h})\right|.\label{eq:delete-goal}
\end{align}
Assuming this is true, one can take a sequence of rows $\br_1, \ldots, \br_{s_{h+1}}$ and
\begin{align*}
&~ \left|\mO^{\bM_\bA}(\wt{\bA}_{h} \backslash \{\br_1, \ldots, \br_{s_{h+1}}\}) \cap \mN(\mO^{\bM_\bA}(\wt{\bA}_{h}), \alpha_{h})\right| \\
\geq &~ \left(1 -\frac{10}{\Delta}\right)\cdot\left|\mO^{\bM_\bA}(\wt{\bA}_{h} \backslash \{\br_1, \ldots, \br_{s_{h+1}-1}\}) \cap \mN(\mO^{\bM_\bA}(\wt{\bA}_{h}), \alpha_{h})\right| \\
\vdots\\
\geq &~ \left(1 -\frac{10}{\Delta}\right)^{s_{h+1}} \cdot \left|\mO^{\bM_\bA}(\wt{\bA}_{h}) \cap \mN(\mO^{\bM_\bA}(\wt{\bA}_{h}), \alpha_{h})\right|\\
\geq &~ 2^{-s_h/\log^4(d)} \cdot 2^{s_h/\log(d)} \geq  2^{s_h/2\log(d)} 
\end{align*}
The first and second step follows from Eq.~\eqref{eq:delete-goal} and $\wt{\bA}_h$ contains $d/2 - s_{\le h} \geq d/2 - d/5 \geq \frac{d}{5}$ rows.
We plug in the value of $s_{h+1} = s_h \cdot (\frac{\Delta}{\log^5 (d)})$ and $N_{h} = 2^{s_h/\log(d)}$ in the third step.

It remains to prove Eq.~\eqref{eq:delete-goal}, it follows from a counting argument. In particular, we claim
\begin{align*}
&~ \sum_{\br \in \row(\wt{\bA})}\left|\mO^{\bM_\bA}(\wt{\bA}\backslash \{\br\}) \cap \mN(\mO^{\bM_\bA}(\wt{\bA}_{h}), \alpha_{h})\right|\\
\geq &~ (|\row(\wt{\bA})| - (d/\Delta)) \cdot \left|\mO^{\bM_\bA}(\wt{\bA}) \cap \mN(\mO^{\bM_\bA}(\wt{\bA}_{h}), \alpha_{h})\right|.
\end{align*}
This is because for any entry $\bx$ in the RHS (i.e. $\bx \in \mO^{\bM_\bA}(\wt{\bA}) \cap \mN(\mO^{\bM_\bA}(\wt{\bA}_{h}), \alpha_{h})$), suppose it equals $\bx_{\bM_\bA, \{\bq_1, \ldots, \bq_n\}, \bv}$ for some rows $\bq_1, \ldots, \bq_n \in \row(\wt{\bA}) \cup \{\nil\}$ and $\bv\in V^{*}$, then it is also contained in $\mO^{\bM_\bA}(\wt{\bA}\backslash \{\br\}) \cap \mN(\mO^{\bM_\bA}(\wt{\bA}_{h}), \alpha_{h})$ as long as $\br\notin \{\bq_1, \ldots, \bq_n\}$. This means $\bx$ has been counted for $|\row(\wt{\bA}) - n| = |\row(\wt{\bA}) - (d/\Delta)|$ times in the LHS.

Combining with the assumption that $\wt{\bA}$ contains at least $d/10$ rows, we conclude the proof of Eq.~\eqref{eq:delete-goal}, and complete the proof of Lemma.
\end{proof}


We continue the proof of induction. In particular, let $\bR_{h}, \wt{\bA}_{h+1}$ be defined as in Lemma \ref{lem:delete}, we would consider the following enumeration
\begin{align}
\bM = \bM_\bA, \quad \bA_{h+1, 1} = \wt{\bA}_{h+1} \in \{-1,1\}^{(d/2 - s_{\le h+1}) \times d},  \quad \bA_{h+1, 3} = \bA\backslash \wt{\bA}_{h} \in \{-1,1\}^{s_{\le h} \times d} \label{eq:enum}
\end{align}
and the partition $P_{h+1} \in \sP_{h+1}$ is the one, such that, $\bA = [\wt{\bA}_{h+1}, \bR_h, \bA\backslash \wt{\bA}_{h}]_{P_{h+1}}$.

It remains to fill in $\bA_{h+1, 2}$ with $\bR_{h}$, to this end, we prove
\begin{lemma}
\label{lem:exist}
Suppose the message $\bM$, the matrix $\bA_{h+1,1}$ and the partition $P_{h+1}$ are defined as Eq.~\eqref{eq:enum}, then $\bR_h \in \calS_{\bM, P_{h+1}, \bA_{h+1,1}}$
\end{lemma}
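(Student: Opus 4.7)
My plan is as follows. Unpacking the definition of $\calS_{\bM, P_{h+1}, \bA_{h+1,1}}$ at level $h+1$, the claim $\bR_h \in \calS_{\bM, P_{h+1}, \bA_{h+1,1}}$ amounts to exhibiting an $(\alpha_h/4)$-RLI sequence $\bx_1, \ldots, \bx_L$ lying entirely in $\mO^{\bM_\bA}(\wt{\bA}_{h+1})$ such that $\|\bR_h \bx_i\|_\infty \leq \xi'$ for every $i\in [L]$. Once such a sequence is produced, $\bR_h$ lies in the corresponding $\mI_{h+1, \bx_1,\ldots,\bx_L}$ and hence in their union.

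The source of these vectors will be the set $X := \mO^{\bM_\bA}(\wt{\bA}_{h+1}) \cap \mN(\mO^{\bM_\bA}(\wt{\bA}_h), \alpha_h)$ produced by Lemma \ref{lem:delete}, which has size at least $2^{s_h/(2\log d)}$ and pairwise distance at least $\alpha_h$. Crucially, every $\bx \in X$ lies in $\mO^{\bM_\bA}(\wt{\bA}_h)$, so $\|\wt{\bA}_h \bx\|_\infty \leq \xi'$; since the rows of $\bR_h$ are (by construction) a subset of the rows of $\wt{\bA}_h$, we immediately get $\|\bR_h \bx\|_\infty \leq \xi'$. Thus the row-orthogonality requirement for membership in $\mI_{h+1,\bx_1,\ldots,\bx_L}$ is automatic for any sequence drawn from $X$, and similarly every element of $X$ belongs to $\mO^{\bM_\bA}(\wt{\bA}_{h+1})$ so the sequence is of the required provenance.

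It remains to extract an $(\alpha_h/4)$-RLI sequence of length $L$ from $X$. I would do so greedily: pick $\bx_1 \in X$ arbitrarily, and at step $j$ pick any $\bx_j \in X$ whose projection onto $\mathsf{span}(\bx_1,\ldots,\bx_{j-1})^\perp$ has norm at least $\alpha_h/4$. Such a $\bx_j$ must exist unless every element of $X$ lies within $\alpha_h/4$ of the $(j-1)$-dimensional subspace $\mathsf{span}(\bx_1,\ldots,\bx_{j-1})$; but then, covering a ball in that subspace by an $(\alpha_h/4)$-net of size at most $(C/\alpha_h)^{j-1}$ and pigeonholing would force two elements of $X$ to lie within distance $\alpha_h/2 < \alpha_h$ of each other, contradicting the $\alpha_h$-separation.

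The only substantive bookkeeping is the numerology: the greedy process succeeds provided $(C/\alpha_h)^L \ll |X|$. Substituting $L = s/(4\log^8 d)$, $\log_2(1/\alpha_h) = O(8^{h+1}\log d)$, and $|X| \geq 2^{s_h/(2\log d)}$ with $s_h = (s/\log^2 d)(\Delta/\log^5 d)^h$, the inequality reduces (up to constants) to $8^{h+1}/\log^7 d \ll (\Delta/\log^5 d)^h / \log^3 d$, which holds uniformly for $h\in [0:H-1]$ because $\Delta \geq \exp(\log d/\log\log d)$ dominates any polylogarithmic factor in $d$. I expect this numerical check, and the careful tracking of which parameter shrinks at which rate, to be the only delicate part; the conceptual content is the observation that removing rows preserves orthogonality (so the $\bR_h$-condition comes for free from the $\wt{\bA}_h$-orthogonality of $X$) combined with the standard greedy extraction of an RLI sequence from a well-separated set.
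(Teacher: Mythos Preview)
Your proposal is correct and essentially the same as the paper's proof: both identify the set $X = \mO^{\bM_\bA}(\wt{\bA}_{h+1}) \cap \mN(\mO^{\bM_\bA}(\wt{\bA}_h), \alpha_h)$, note that $\|\bR_h\bx\|_\infty \le \|\wt{\bA}_h\bx\|_\infty \le \xi'$ holds automatically for $\bx\in X$, and then extract an $(\alpha_h/4)$-RLI sequence of length $L$ from $X$ by a pigeonhole argument on the span of a maximal RLI subsequence. The only cosmetic difference is the covering parameter: the paper uses a $\xi$-spaced grid in the span (so the numerology reduces to $L\log(1/\xi)\lesssim s_h/\log d$, with no dependence on $h$), whereas you use a coarser $(\alpha_h/4)$-net and correctly absorb the resulting $8^{h}$ blow-up of $\log(1/\alpha_h)$ via the $(\Delta/\log^5 d)^h$ growth of $s_h$.
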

\begin{proof}
Consider the set 
\begin{align}
X := \mO^{\bM}(\bA_{h+1,1}) \cap \mN(\mO^{\bM}(\wt{\bA}_{h}), \alpha_{h})  \label{eq:def-exist-x}
\end{align}
By Lemma \ref{lem:delete}, the size $|X| \geq 2^{s_h/2\log(d)}$ and for any $\bx \in X$, $\|\bR_h\bx\|_\infty \leq \|\wt{\bA}_h\bx\|_\infty \leq \xi'$. Therefore, it suffices to prove that there exists an $(\alpha_{h}/4)$-RLI sequence $\bx_1, \ldots, \bx_L \in X$, note it would imply $\bR_h \in \mI_{h+1, \bx_1,\ldots, \bx_L}$.

We prove by contradiction and assume the maximum length of $(\alpha_{h}/4)$-RLI sequence is at most $L' < L$, then there exists an $L'$-dimensional subspace (formed by the maximum sequence), represented by the orthonormal $\bU = [\bu_1, \ldots \bu_{L'}] \in \R^{d\times L'}$, such that 
\begin{align}
\|(\bI - \bU\bU^\top)\bx\|_2 < \frac{\alpha_h}{4}, \quad  \forall \bx\in X. \label{eq:exist1}
\end{align}

Consider the grid 
\[
G_{L'} = \left\{ \sum_{i=1}^{L'} \lambda_i \bu_i: \lambda_i = 0 \pm \xi, \pm 2\xi, \ldots, \pm 1, \forall i \in [L']   \right\} 
\]
Then the size of the grid is most
\[
|G_{L'}| \leq (4/\xi)^{L'} \leq \exp(L\log(4/\xi)) <  \exp\left(s/2\log^3(d)\right) \leq \exp(s_h/2\log(d)).
\]
where the third step follows from the choice of parameters $L = \frac{s}{4\log^8(d)}$, $\xi = 2\exp(-\log^5(d))$, the last step follows from $s_h \geq s_0 = s/\log^2(d)$.

Therefore, there exists $\bx_1, \bx_2 \in X$ such that the projections $\bU\bU^\top \bx_1, \bU\bU^\top \bx_2$ are in the same grid, then we have
\begin{align*}
\|\bx_1 - \bx_2\|_2 \leq \|(\bU\bU^\top)(\bx_1 - \bx_2)\|_2 + \|(\bI - \bU\bU^\top)(\bx_1 - \bx_2)\|_2 \leq 2d\xi + 2\cdot \frac{\alpha_h}{4} < \alpha_h
\end{align*}
Here the first step follows from the triangle inequality, the second step follows from Eq.~\eqref{eq:exist1} and $\bU\bU^\top\bx_1, \bU\bU^\top \bx_2$ are at the same grid. This contradicts with the fact that points in $X$ have pairwise distance at least $\alpha_h$ (see the definition at Eq.~\eqref{eq:def-exist-x}). We complete the proof here.
\end{proof}

Due to Lemma \ref{lem:exist}, we can conclude that, if $|\calS_{\bM, P_{h+1}, \bA_{h+1,1}}| \leq \Gamma_{h+1} = 2^{s_{h+1} d -2kd}$, then the encoding algorithm would add $\bA$ to $\mA_{h+1}$, and we can finish the induction. Hence, we focus on the case of $|\calS_{\bM, P_{h+1}, \bA_{h+1,1}}| > \Gamma_{h+1} = 2^{s_{h+1} d -2kd}$. 
In this case, let $\hat{N}_{h+1}$ be the size of $\alpha_{h+1}$-cover in $\mO^{\bM}(\bA_{h+1,1})$, that is
\[
\hat{N}_{h+1} := |\mN(\mO^{\bM}(\bA_{h+1,1}), \alpha_{h+1})|. 
\]

We give a lower bound on $\hat{N}_{h+1}$ in terms of $|\calS_{\bM, P_{h+1}, \bA_{h+1,1}}|$, in particular, we prove
\begin{lemma}
\label{lem:size-s}
Suppose the message $\bM$, the matrix $\bA_{h+1,1}$ and the partition $P_{h+1}$ are defined as Eq.~\eqref{eq:enum}, then 
$|\calS_{\bM, P_{h+1}, \bA_{h+1,1}}| \leq (\hat{N}_{h+1})^{L} \cdot 2^{s_{h+1} \cdot (d - \Omega(L))}$.
\end{lemma}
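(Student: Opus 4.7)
The plan is to decompose $\calS_{\bM, P_{h+1}, \bA_{h+1,1}}$ as a union, indexed by $L$-tuples from the $\alpha_{h+1}$-cover of $\mO^{\bM}(\bA_{h+1,1})$, of sets of compatible $\{-1,1\}$-matrices, and then to bound each piece by $2^{s_{h+1}(d-\Omega(L))}$ via row-by-row anti-concentration.

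\textbf{Rounding.} For any $(\alpha_h/4)$-RLI sequence $\bx_1,\dots,\bx_L \in \mO^{\bM}(\bA_{h+1,1})$, I would replace each $\bx_i$ by its nearest neighbor $\wt{\bx}_i$ in the $\alpha_{h+1}$-cover $\mN(\mO^{\bM}(\bA_{h+1,1}), \alpha_{h+1})$. The perturbation $\|\bx_i-\wt{\bx}_i\|_2\le \alpha_{h+1}=\alpha_h^8$ is doubly exponentially smaller than $\alpha_h$, so a standard perturbation argument on the Gram-Schmidt residuals shows the rounded sequence remains $(\alpha_h/8)$-RLI; moreover any $\bA_{h+1,2}$ with $\|\bA_{h+1,2}\bx_i\|_\infty \le \xi'$ satisfies $\|\bA_{h+1,2}\wt{\bx}_i\|_\infty \le \xi'+\sqrt{d}\,\alpha_{h+1}\le 2\xi'$. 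Thus every matrix in $\calS_{\bM, P_{h+1}, \bA_{h+1,1}}$ is indexed by one of the at most $\hat{N}_{h+1}^L$ ordered $L$-tuples from the cover.

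\textbf{Per-tuple anti-concentration.} Fix an $(\alpha_h/8)$-RLI rounded tuple $\wt{\bx}_1,\dots,\wt{\bx}_L$. For a random row $\bb\sim\{-1,1\}^d$, the vector $Y=(\langle\bb,\wt{\bx}_i\rangle)_{i=1}^L$ has covariance equal to the Gram matrix $G$, and by the RLI property $\det G \ge (\alpha_h/8)^{2L}$. A multivariate anti-concentration estimate for Rademacher sums (the volume bound $(4\xi')^L/\sqrt{\det G}$ in the $L$-dimensional span, combined with an Esseen-type density upper bound in that span) yields
\[
\Pr_{\bb\sim\{-1,1\}^d}\!\bigl[\|Y\|_\infty \le 2\xi'\bigr] \;\le\; \Bigl(\frac{C\xi'}{\alpha_h}\Bigr)^L
\]
for an absolute constant $C$. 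Our parameters $\xi' = 2\sqrt{d}\,\exp(-\log^5 d)$ and $\alpha_h \ge \alpha_H \ge \exp(-8\log^4 d)$ make $C\xi'/\alpha_h \le 1/2$, so the probability is $\le 2^{-L}$; hence the number of compatible rows is $\le 2^{d-L}$, and the number of compatible matrices $\bA_{h+1,2} \in \{-1,1\}^{s_{h+1}\times d}$ is $\le 2^{s_{h+1}(d-L)}$. Summing over the at most $\hat{N}_{h+1}^L$ tuples yields $|\calS_{\bM, P_{h+1}, \bA_{h+1,1}}|\le \hat{N}_{h+1}^L\cdot 2^{s_{h+1}(d-\Omega(L))}$.

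The main obstacle is the multivariate anti-concentration estimate above: since the $\wt{\bx}_i$'s are only RLI (rather than orthonormal), a Gram-Schmidt change of basis produces a lower-triangular matrix whose smallest singular value can be as small as $\gamma^L$, so naively bounding each Gram-Schmidt direction by Berry-Esseen would incur a $(1/\gamma)^L$ blow-up that destroys the product bound. I plan to bypass this by working directly with the Gram determinant --- for which the RLI property gives the clean $\gamma^{2L}$ lower bound --- and applying a Halász-type multivariate anti-concentration inequality for Rademacher sums, which converts a determinant lower bound directly into a joint density upper bound without paying the singular-value gap.
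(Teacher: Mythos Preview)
Your overall decomposition --- round each $(\alpha_h/4)$-RLI sequence to the $\alpha_{h+1}$-cover and then bound the contribution of each of the at most $\hat N_{h+1}^L$ tuples --- matches the paper's proof (the paper phrases the rounding slightly differently, keeping one representative RLI sequence per cover-tuple and enlarging $\mI$ to a fattened $\wt{\mI}$, but the effect is the same).

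The gap is in the per-tuple bound. The inequality you write,
\[
\Pr_{\bb\sim\{-1,1\}^d}\bigl[\|Y\|_\infty\le 2\xi'\bigr]\ \le\ \frac{(4\xi')^L}{\sqrt{\det G}}\ \le\ \Bigl(\frac{C\xi'}{\alpha_h}\Bigr)^L,
\]
is a Gaussian density heuristic and is \emph{false} for Rademacher sums. Already for $L=1$ with $\wt{\bx}_1=d^{-1/2}(1,\dots,1)$ one has $\Pr[|Y_1|\le 2\xi']\ge \Pr[Y_1=0]\sim c/\sqrt{d}$, whereas your bound would force this below $C\xi'/\alpha_h\le \exp(-\Theta(\log^5 d))$. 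A Hal\'asz-type inequality cannot repair this: for discrete sums the small-ball probability does not vanish as the radius shrinks, so no statement that converts only a determinant lower bound into the displayed product bound can hold. This is exactly the manifestation of the $\gamma^L$ obstacle you flagged --- passing to $\det G$ does not remove it.

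The paper sidesteps the issue via Lemma~\ref{lem:rli} (imported from \cite{marsden2022efficient}): from an $(\alpha_h/4)$-RLI sequence $\bx_1,\dots,\bx_L$ one can extract an \emph{orthonormal} $\bU\in\R^{d\times(L/2)}$ with
\[
\|\bU^\top\ba\|_\infty\ \le\ \frac{d}{\delta}\,\|\bX^\top\ba\|_\infty\qquad\text{for all }\ba\in\R^d,
\]
i.e.\ the blowup is only polynomial in $d/\delta$, not $(1/\delta)^L$. Then any admissible row $\br$ satisfies $\|\bU^\top\br\|_\infty\le 1/d^5$, and a \emph{concentration} bound for $\|\bU^\top\br\|_2^2$ (Lemma~\ref{lem:projection-prob}), rather than anti-concentration, yields $\Pr\le 2^{-cL}$ directly. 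This orthonormal-extraction lemma is the missing ingredient in your plan; once you invoke it, the rest of your argument goes through and coincides with the paper's.
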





We need the following bound on the number of orthogonal vectors to a RLI sequence. 

\begin{lemma}[Number of orthogonal vectors]
\label{lem:num-row}
Given any $(\alpha_{h}/4)$-RLI sequence $\bx_{1}, \ldots, \bx_{L}$, define
\begin{align*}
\wt{\mI}_{h+1, \bx_1, \ldots, \bx_{L}} := \bigcup_{\wt{\bx}_i \in \mB(\bx_i, 2\alpha_{h+1}), \forall i \in [L]} \mI_{h+1, \wt{\bx}_1, \ldots, \wt{\bx}_{L}}.
\end{align*}
Then we have 
\begin{align*}
\left|\wt{\mI}_{h+1, \bx_1, \ldots, \bx_{L}}\right| \leq 2^{s_{h+1} (d - c L)}
\end{align*}
for some absolute constant $c > 0$.
\end{lemma}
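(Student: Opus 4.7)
The plan is to factor $|\wt{\mI}_{h+1,\bx_1,\ldots,\bx_L}|$ over the rows of $\bA_{h+1,2}$ and then apply a Halász-style anti-concentration bound to the resulting single-row count using the RLI structure of $\bx_1,\ldots,\bx_L$. For the factoring step: if $\bA_{h+1,2}\in\wt{\mI}_{h+1,\bx_1,\ldots,\bx_L}$, then by definition there exist perturbations $\wt{\bx}_i\in\mB(\bx_i,2\alpha_{h+1})$, common to all rows, with $\|\bA_{h+1,2}\wt{\bx}_i\|_\infty\leq \xi'$ for every $i$. By the triangle inequality and $\|\br\|_2=\sqrt{d}$, every row $\br\in\{-1,1\}^d$ of $\bA_{h+1,2}$ then obeys
$$
|\langle \br,\bx_i\rangle|\ \leq\ \xi'+2\sqrt{d}\,\alpha_{h+1}\ =:\ \eps \qquad\text{for all }i\in[L].
$$
Defining $R_0:=\{\br\in\{-1,1\}^d:|\langle \br,\bx_i\rangle|\leq \eps\text{ for all }i\in[L]\}$, the $s_{h+1}$ rows of any such $\bA_{h+1,2}$ are independently drawn from $R_0$, so $|\wt{\mI}_{h+1,\bx_1,\ldots,\bx_L}|\leq |R_0|^{s_{h+1}}$; it thus suffices to prove $|R_0|\leq 2^{d-cL}$ for some absolute $c>0$.

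For the single-row bound, I would Gram--Schmidt $\bx_1,\ldots,\bx_L$ to obtain an orthonormal $\bU\in\R^{d\times L}$ and an upper-triangular $\mathbf{T}\in\R^{L\times L}$ with $\bX=\bU\mathbf{T}$; the $(\alpha_h/4)$-RLI hypothesis forces $|\mathbf{T}_{ii}|\geq \alpha_h/4$ and hence $\det(\bX^\top\bX)=(\det\mathbf{T})^2\geq (\alpha_h/4)^{2L}$. The key tool is then a Halász-type (multi-dimensional Esseen) anti-concentration inequality: for any $\bX\in\R^{d\times L}$ with $\det(\bX^\top\bX)\geq \gamma^{2L}$ and any $\by_0\in\R^L$,
$$
\Pr_{\br\sim\{-1,1\}^d}\!\left[\bX^\top \br\in \by_0+[-\eps,\eps]^L\right]\ \leq\ (C\eps/\gamma)^L
$$
for an absolute constant $C>0$. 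Applying this with $\gamma=\alpha_h/4$ and $\by_0=0$ gives $|R_0|\leq 2^d(4C\eps/\alpha_h)^L$. Under the paper's parameter regime---$\alpha_h\geq d^{-8^{H+1}}\geq d^{-8(\log d)^3}$, $\xi'=2\sqrt{d}\exp(-\log^5 d)$, and $\alpha_{h+1}=\alpha_h^8$ so that $2\sqrt{d}\alpha_{h+1}\leq 1/d$---one computes $\log(\alpha_h/\eps)=\Omega(\log^5 d)$, hence $4C\eps/\alpha_h\leq \exp(-\Omega(\log^5 d))$, and therefore $|R_0|\leq 2^{d-\Omega(L\log^5 d)}\leq 2^{d-cL}$ for any fixed constant $c$. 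Raising to the $s_{h+1}$-th power yields $|\wt{\mI}_{h+1,\bx_1,\ldots,\bx_L}|\leq 2^{s_{h+1}(d-cL)}$, as claimed.

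The main obstacle is establishing the Halász-type bound uniformly in $\bX$: for pathological $\bX$ the random vector $\bX^\top\br$ need not admit a uniformly bounded density, so Esseen's inequality has to be applied carefully. The standard route is to control the characteristic function $\phi(\mathbf{s})=\prod_{k=1}^{d}\cos(\langle\bX^\top\mathbf{e}_k,\mathbf{s}\rangle)$, which satisfies $|\phi(\mathbf{s})|\leq \exp(-\tfrac{1}{2}\mathbf{s}^\top\bX^\top\bX\,\mathbf{s})$ for small $\mathbf{s}$ (since $\sum_k(\bX^\top\mathbf{e}_k)(\bX^\top\mathbf{e}_k)^\top=\bX^\top\bX$), and to integrate the resulting Gaussian to obtain $\int|\phi|\leq (2\pi)^{L/2}/\det(\bX^\top\bX)^{1/2}\leq (2\pi)^{L/2}/\gamma^L$. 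The RLI-induced determinant lower bound is precisely what makes this integral finite and of the right magnitude; this is the same ingredient (though used in a different configuration) as in the orthogonal vector game analysis of Marsden--Sharan--Sidford--Valiant.
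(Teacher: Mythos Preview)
Your row-by-row reduction to the single-row set $R_0=\{\br\in\{-1,1\}^d:\|\bX^\top\br\|_\infty\le\eps\}$ is correct and is exactly how the paper proceeds. The gap is in the anti-concentration step: the Hal\'asz/Esseen bound $\Pr_{\br\sim\{-1,1\}^d}[\bX^\top\br\in\by_0+[-\eps,\eps]^L]\le (C\eps/\gamma)^L$ that you invoke is \emph{false} for Rademacher vectors. Take $\bx_i=\tfrac{1}{\sqrt{2}}(e_{2i-1}+e_{2i})$ for $i\in[L]$; these are orthonormal so $\gamma=1$, yet $\langle\br,\bx_i\rangle=0$ with probability $1/2$ independently, hence $\Pr[\|\bX^\top\br\|_\infty\le\eps]=(1/2)^L$ for every $\eps\in(0,\sqrt{2})$, not $(C\eps)^L$. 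Your Esseen route cannot be repaired: in this example $\phi(\bs)=\prod_i\cos^2(s_i/\sqrt{2})$ is periodic with average $1/2$, so the integral over $[-\pi/\eps,\pi/\eps]^L$ is $\Theta((1/\eps)^L)$ and Esseen returns a trivial bound. The Gaussian majorant $|\phi(\bs)|\le\exp(-\tfrac{1}{2}\bs^\top\bX^\top\bX\bs)$ holds only where every $|(\bX\bs)_k|$ is small, which is a tiny portion of the integration box. Your determinant lower bound is correct but does not by itself control Rademacher small-ball probabilities.

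The paper avoids this by a different mechanism. It invokes Lemma~\ref{lem:rli} (from \cite{marsden2022efficient}): the $(\alpha_h/4)$-RLI hypothesis yields an orthonormal $\bU\in\R^{d\times(L/2)}$ with $\|\bU^\top\ba\|_\infty\le\frac{d}{\delta}\|\bX^\top\ba\|_\infty$ for all $\ba$, where $\delta=(\alpha_h/4)^2/2$. The point is that this $\ell_\infty\!\to\!\ell_\infty$ factor is only polynomial in $d$ and $1/\alpha_h$ and \emph{independent of $L$}; a naive inversion of your Gram--Schmidt $\bT$ could blow up exponentially in $L$. With this in hand, $\br\in R_0$ forces $\|\bU^\top\br\|_\infty\le 1/d^5$, hence $\|\bU^\top\br\|_2^2\ll L/2$, and then the sub-Gaussian projection concentration (Lemma~\ref{lem:projection-prob}) gives $\Pr[\br\in R_0]\le 2^{-cL}$. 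This is the weaker bound you actually need, but it requires the $\bU$-construction rather than a density/Esseen argument.
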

\begin{proof}
We apply Lemma \ref{lem:rli} for the sequence $\bx_1, \ldots, \bx_{L}$,  with $\delta = (\alpha_h/4)^{2}/2$, $\bX = [\bx_1, \ldots, \bx_{L}]\in \R^{d\times L}$, then there exists an orthonormal matrix $\bU \in \R^{d\times (L/2)}$, such that
\begin{align}
\label{eq:num-row1}
\|\bU^\top \ba\|_\infty \leq \frac{d}{\delta}\|\bX^\top\ba\|_\infty = \frac{32d}{\alpha_h^2} \|\bX^\top\ba\|_\infty \quad \forall \ba \in \R^d.
\end{align}
Now for any matrix $\bR \in \wt{\mI}_{h+1, \bx_1, \ldots, \bx_{L}}$, there exists a sequence $\wt{\bx}_{1}, \ldots, \wt{\bx}_{L}$, such that $\wt{\bx}_i \in \mB(\bx_i, 2\alpha_{h+1})$ for $i\in [L]$, and $\bR \in \mI_{h+1, \wt{\bx}_1, \ldots, \wt{\bx}_{L}}$. 
Let $\wt{\bX} = [\wt{\bx}_1, \ldots, \wt{\bx}_{L}] \in \R^{d\times L}$, for each row $\br$ of $\bR$, one has
\begin{align}
\|\bU^\top \br\|_\infty \leq \frac{32d}{\alpha_h^2}\|\bX^\top \br\|_\infty \leq \frac{32d}{\alpha_h^2} (\|\wt{\bX}^\top\br\|_\infty + \sqrt{d}\cdot 2\alpha_{h+1}) \leq \frac{32d\xi'}{\alpha_h^2} + \frac{64d\sqrt{d}\alpha_{h+1}}{\alpha_h^2} \leq 1/d^5. \label{eq:num-row2}
\end{align}
Here the first step follows from Eq.~\eqref{eq:num-row1}, the second step follows from $\wt{\bx}_i \in \mB(\bx_i, 2\alpha_{h+1})$ for every $i \in [L]$ and $\|\br\|_2 = \sqrt{d}$. The last step follows from the choice of parameters.

Let $\mR_{\mathsf{or}} \subseteq \{-1,1\}^{d}$ contain all possible $\br$ that satisfies Eq.~\eqref{eq:num-row2}.
We wish to bound the size of $\mR_{\mathsf{or}}$, by Lemma \ref{lem:probabilistic-v}, we have that
\[
\Pr_{\br\sim \{-1,1\}^{d}}\left[\|\bU^\top \br\|_2^2 \leq 1/d^8\right] \leq 2^{-cL}
\]
for some constant $c > 0$. 
Hence, we have $|\mR_{\mathsf{or}}| \leq 2^{d}\cdot 2^{-cL} = 2^{d-cL}$. Since we have proved $\br \in \mR_{\mathsf{or}}$ for every row of $\bR \in \wt{\mI}_{h+1, \bx_1, \ldots, \bx_L}$, and $\bR$ has $s_{h+1}$ rows, we conclude the proof of Lemma.
\end{proof}

Now we can go back to the proof of Lemma \ref{lem:size-s}
\begin{proof}[Proof of Lemma \ref{lem:size-s}]
Let $X_{h}$ contain all $(\alpha_h/4)$-RLI sequence of length $L$ in $\mO^{\bM}(\bA_{h+1, 1})$, then ones has
\[
\calS_{\bM, P_{h+1}, \bA_{h+1,1}} = \bigcup_{(\bx_1, \ldots, \bx_{L}) \in X_{h}} \mI_{h+1, \bx_1, \ldots, \bx_{L}}.
\]

Construct $Y_h$ as follows. For any $\bx_1, \ldots, \bx_{L}$ in the $\alpha_{h+1}$-cover $\mN(\mO^{\bM}(\bA_{h+1,1}), \alpha_{h+1})$, if there exists a sequence $\by_1, \ldots, \by_{L}$, such that 
\begin{itemize}
\item $\by_i \in \mB(\bx_{i}, \alpha_{h+1})$,
\item $\by_i \in \mO^{\bM}(\bA_{h+1, 1})$, and 
\item $\by_1, \ldots, \by_{L}$ forms an $(\alpha_h/4)$-RLI sequence,
\end{itemize}
then we add $(\by_1, \ldots, \by_{L})$ to $Y_{h}$. If there are multiple sequences, we only add one of them.

First, we have $|Y_{h}| \leq (\hat{N}_{h+1})^{L}$ because we look at each $L$-tuple in $\mN(\mO^{\bM}(\bA_{h+1,1}), \alpha_{h+1})$ once. 

Next, we claim that for any sequence $\bx_1, \ldots, \bx_{L} \in X_h$, there exists $(\by_1, \ldots, \by_L) \in Y_{h}$ such that $\|\bx_i - \by_i\|_2 \leq 2\alpha_{h+1}$. To see this, let $\wt{\bx}_1, \ldots, \wt{\bx}_L$ from the $\alpha_{h+1}$-cover $\mN(\mO^{\bM}(\bA_{h+1,1}), \alpha_{h+1})$, such that $\|\wt{\bx}_i - \bx_i\|_2 \leq \alpha_{h+1}$. Then there must exists $(\by_1, \ldots, \by_L)\in Y_h$ such that $\|\wt{\bx}_i - \by_i\|_2 \leq \alpha_{h+1}$. This is sufficient for our purpose.

Hence, we can conclude that
\[
\calS_{\bM, P_{h+1}, \bA_{h+1,1}} = \bigcup_{(\bx_1, \ldots, \bx_{L}) \in X_{h}} \mI_{h+1, \bx_1, \ldots, \bx_{L}} \subseteq  \bigcup_{(\by_1, \ldots, \by_{L}) \in Y_{h}} \wt{\mI}_{h+1, \by_1, \ldots, \by_{L}} 
\]
By Lemma \ref{lem:num-row}, we conclude
\[
|\calS_{\bM, P_{h+1}, \bA_{h+1,1}}|\leq |Y_h| \cdot 2^{s_{h+1}(d-cL)} \leq (\hat{N}_{h+1})^{L} \cdot 2^{s_{h+1} \cdot (d - \Omega(L))}.
\]
This completes the proof of Lemma.
\end{proof}

By Lemma \ref{lem:size-s} and our assumption that $|\calS_{\bM, P_{h+1}, \bA_{h+1,1}}| \geq \Gamma_{h+1}$, we have that
\begin{align*}
(\hat{N}_{h+1})^{L}\cdot 2^{s_{h+1} \cdot (d - \Omega(L))} \geq |\calS_{\bM, P_{h+1}, \bA_{h+1,1}}| \geq \Gamma_{h+1} = 2^{s_{h+1}d-2kd} 
\end{align*}
and therefore,
\begin{align*}
\hat{N}_{h+1}\geq 2^{s_{h+1}/\log(d)} = N_{h+1}
\end{align*}
this is because $kd = d^{2-\delta-o(1)}$ and $s_{h+1} L \geq s^2/\poly\log(d) \geq d^{2-\delta}/\poly\log(d)$.

This wraps up the induction.

For any matrix $\mA \in \nA$, if $\mA \notin \mA_{h}$ for any $h=0,1,2,\ldots, H$, then the above induction implies that there exists a sub-matrix $\wt{\bA}_{H} \subseteq \bA$ ($\wt{\bA}_H \in \{-1,1\}^{(d/2 - s_{\le H})\times d}$), such that
\[
|\mN(\mO^{\bM_\bA}(\wt{\bA}_{H}), \alpha_{H})|  \geq N_{H} = 2^{s_H/\log d} \geq 2^{d/10\log(d)}.
\]
This is simply impossible, because the total number of entries in $\mN(\mO^{\bM_\bA}(\wt{\bA}_{H}), \alpha_{H})$ is at most 
\begin{align*}
\left|\mN(\mO^{\bM_\bA}(\wt{\bA}_{H}), \alpha_{H})\right| \leq  &~ |\T^{\bM_\bA}(\bA)| =  |\T^{\bM_\bA}(\bA, V^{*})|\\
\leq &~ |V^{*}| \cdot (d/2+1)^{n} \leq 2^{s + n\log_2(d)} \leq 2^{d/10\log(d)}.
\end{align*}
Here the third follows from the definition of a table (see Definition \ref{def:table}), and there are at most $(d/2+1)^{n}$ combinations of the third round message given $\bA$, the fourth step follows from $|V^*|\leq 2^s$ (see Lemma \ref{lem:exist-v}) -- this is the only place that we use the size of $V^{*}$ is not large.

Combining the above inequalities, we have proved that $\nA \subseteq \mA$ and finish the proof of Lemma \ref{lem:lower-size}.
\end{proof}

\section*{Acknowledgement}
X.C. and B.P. wish to thank Daniel Hsu, Christos Papadimitriou, Aviad Rubinstein, Gregory Valiant for helpful discussion over the project.

\bibliography{ref}
\bibliographystyle{alpha}

\newpage
\appendix
\section{Useful Lemma}

\begin{lemma}[Khintchine’s Inequality]
\label{lem:khintchine}
Let $\sigma_1, \ldots, \sigma_{n}$ be i.i.d. Rademacher variables (i.e., $\sigma_i \sim \{-1,1\}$), and let $x_1, \ldots, x_n$ be real numbers. Then there are constants $c_1, c_2 > 0$ so that
\begin{align*}
\Pr\left[\left|\sum_{i=1}^{n}\sigma_i x_i\right| \geq c_1 t\cdot \|x\|_2\right] \leq \exp(-c_2 t^2).
\end{align*}
\end{lemma}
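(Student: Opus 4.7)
The final statement to prove is Khintchine's inequality, which is a standard sub-Gaussian concentration bound for Rademacher sums. The plan is to carry out the textbook Chernoff-style argument: bound the moment generating function of a single Rademacher term, multiply using independence, and optimize Markov's inequality.

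First, by scaling, I may assume without loss of generality that $\|x\|_2 = 1$ (both sides of the inequality are homogeneous of the same degree in $\|x\|_2$). Set $S = \sum_{i=1}^n \sigma_i x_i$, so $\mathbb{E}[S] = 0$ and $\mathrm{Var}(S) = 1$. The core analytic step is the single-coordinate moment generating function bound
\[
\mathbb{E}\big[e^{\lambda \sigma_i x_i}\big] = \cosh(\lambda x_i) \le e^{\lambda^2 x_i^2 / 2},
\]
which follows by comparing Taylor coefficients of $\cosh(u)$ and $e^{u^2/2}$ term by term (the latter's $2k$-th coefficient is $1/(2^k k!) \ge 1/(2k)!$).

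Next, since the $\sigma_i$ are independent, factor the MGF of the sum:
\[
\mathbb{E}\big[e^{\lambda S}\big] = \prod_{i=1}^n \mathbb{E}\big[e^{\lambda \sigma_i x_i}\big] \le \prod_{i=1}^n e^{\lambda^2 x_i^2 / 2} = e^{\lambda^2/2},
\]
using $\|x\|_2 = 1$. Apply Markov to the non-negative random variable $e^{\lambda S}$: for any $a, \lambda > 0$,
\[
\Pr[S \ge a] \le e^{-\lambda a}\, \mathbb{E}\big[e^{\lambda S}\big] \le e^{\lambda^2/2 - \lambda a}.
\]
Optimizing the exponent over $\lambda$ by setting $\lambda = a$ yields $\Pr[S \ge a] \le e^{-a^2/2}$. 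By symmetry (replace $x$ by $-x$, which has the same distribution in $S$), the same bound holds for $\Pr[S \le -a]$, giving
\[
\Pr\big[|S| \ge a\big] \le 2 e^{-a^2/2}.
\]

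Finally, to match the form in the statement, set $a = c_1 t$ for a suitable absolute constant $c_1 > 0$ (any $c_1 \le 1$ works). Then
\[
\Pr\big[|S| \ge c_1 t\big] \le 2 e^{-c_1^2 t^2 / 2},
\]
and by choosing, e.g., $c_1 = 1$ and $c_2 = 1/3$ (so that $2 e^{-t^2/2} \le e^{-t^2/3}$ for all $t \ge t_0$, and adjusting $c_1, c_2$ to handle the small-$t$ regime where the bound is trivial since probabilities are at most $1$), one obtains the desired inequality. There is no real obstacle here; the only care needed is the comparison $\cosh u \le e^{u^2/2}$ and cleanly handling the constants at the end.
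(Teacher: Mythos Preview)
The paper does not prove this lemma; it is simply stated in the appendix as a standard fact and cited where needed. Your argument is the textbook Chernoff--Hoeffding proof and is correct in substance: the key inequality $\cosh(u)\le e^{u^2/2}$, independence to factor the MGF, and the optimization $\lambda=a$ are all exactly right, yielding $\Pr[|S|\ge a]\le 2e^{-a^2/2}$.

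One small caveat on your final paragraph: the constant-absorption you sketch cannot quite work as written. Taken literally (no leading factor of $2$ on the right-hand side), the lemma's inequality fails for small $t$: with $n=1$ and $x_1=1$ one has $|S|=1$ almost surely, so $\Pr[|S|\ge c_1 t]=1$ whenever $c_1 t\le 1$, while $e^{-c_2 t^2}<1$ for every $t>0$. So no choice of $c_1,c_2>0$ makes the stated bound hold for all $t$; the ``small-$t$ regime is trivial since probabilities are at most $1$'' remark does not rescue this. This is a harmless imprecision in the paper's statement (every application in the paper uses $t$ large enough that the extra factor of $2$ is irrelevant), not a flaw in your derivation. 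If you want a clean fix, just record the conclusion as $\Pr[|S|\ge t\|x\|_2]\le 2e^{-t^2/2}$ and note that this is what the paper actually uses.
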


Define the sub-Gaussian norm $\|x\|_{\psi_2}$ of a sub-Gaussian random variable $x$ as
\[
\|x\|_{\psi_2} := \inf \{ K > 0 \text{ such that } \E[\exp(x^2/K^2)] \leq 2\}.
\]
We have
\begin{lemma}[Projection of sub-gaussian random variables, Lemma 40 of \cite{marsden2022efficient}]
Let $\bx\in \R^d$ be a random vector with i.i.d sub-Gaussian components  which satisfy $\E[\bx_i]= 0, \|\bx_i\|_{\psi_2} \leq K$, and $\E[\bx\bx^\top]$ = $s^2 I_d$. Let $\bU \in \R^{d\times r}$ be an orthonormal matrix, then there is a constant $c > 0$, such that, for any $t\geq 0$
\begin{align*}
\Pr\left[|\|\bU^\top \bx\|_2^2 - rs^2| \geq t \right] \leq \exp\left(-c\min\left\{\frac{t^2}{rK^4}, \frac{t}{K^2}   \right\}\right).
\end{align*}
\end{lemma}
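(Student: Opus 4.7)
The plan is to recognize the quantity of interest as a quadratic form and then invoke the Hanson--Wright inequality. Let $\bP := \bU\bU^\top \in \R^{d \times d}$. Since $\bU$ has orthonormal columns, $\bP$ is the orthogonal projection onto an $r$-dimensional subspace, so $\bP^2 = \bP$ and the eigenvalues of $\bP$ are $r$ ones and $d-r$ zeros. Writing
\[
\|\bU^\top \bx\|_2^2 = \bx^\top \bU \bU^\top \bx = \bx^\top \bP \bx,
\]
the deviation we want to control is exactly a centered quadratic form in the i.i.d. sub-Gaussian vector $\bx$.

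First I would compute the mean, using $\E[\bx\bx^\top] = s^2 \bI_d$:
\[
\E[\bx^\top \bP \bx] = \tr(\bP\, \E[\bx\bx^\top]) = s^2 \tr(\bP) = r s^2,
\]
which matches the centering in the statement. Second, I would record the two relevant norms of $\bP$: since $\bP^2 = \bP$,
\[
\|\bP\|_F^2 = \tr(\bP^\top \bP) = \tr(\bP) = r, \qquad \|\bP\|_{\mathrm{op}} = 1.
\]
Now I would invoke the Hanson--Wright inequality: for $\bx$ with independent, mean-zero, sub-Gaussian entries of $\psi_2$-norm at most $K$ and an arbitrary matrix $\bA$,
\[
\Pr\!\left[\,\big|\bx^\top \bA \bx - \E[\bx^\top \bA \bx]\big| \geq t\,\right]
\leq 2\exp\!\left(-c\min\!\left\{\frac{t^2}{K^4 \|\bA\|_F^2},\; \frac{t}{K^2 \|\bA\|_{\mathrm{op}}}\right\}\right).
\]
Substituting $\bA = \bP$ and the norm values above yields exactly
\[
\Pr\!\left[\,\big|\|\bU^\top \bx\|_2^2 - r s^2\big| \geq t\,\right] \leq 2\exp\!\left(-c\min\!\left\{\tfrac{t^2}{rK^4}, \tfrac{t}{K^2}\right\}\right),
\]
and absorbing the constant $2$ into $c$ gives the stated bound.

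If a self-contained proof of Hanson--Wright were desired rather than a black-box invocation, I would follow the Rudelson--Vershynin route: decompose $\bx^\top \bP \bx = \sum_i P_{ii}\bx_i^2 + \sum_{i\ne j} P_{ij} \bx_i \bx_j$, handle the diagonal part by Bernstein's inequality (using that $\bx_i^2 - s^2$ is sub-exponential with norm $\lesssim K^2$, together with $\sum_i P_{ii}^2 \leq \sum_i P_{ii} = r$ and $|P_{ii}|\leq 1$), and handle the off-diagonal part by a standard decoupling step that replaces one copy of $\bx$ by an independent copy $\bx'$, followed by a conditional Khintchine/sub-Gaussian estimate on the linear form in $\bx'$ given $\bx$. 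The main obstacle in that self-contained route is the off-diagonal MGF bound $\E\exp(\lambda \bx^\top \bP_{\mathrm{off}} \bx')$, which is controlled by a Gaussian comparison after decoupling; for the present paper, however, citing Hanson--Wright (as done in \cite{marsden2022efficient}) is sufficient, and the two-paragraph computation above suffices.
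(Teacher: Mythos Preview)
Your argument is correct: writing $\|\bU^\top\bx\|_2^2=\bx^\top\bP\bx$ with $\bP=\bU\bU^\top$, computing $\E[\bx^\top\bP\bx]=rs^2$, $\|\bP\|_F^2=r$, $\|\bP\|_{\mathrm{op}}=1$, and applying Hanson--Wright is exactly the right route. Note that the paper does not give its own proof of this lemma; it is simply quoted from \cite{marsden2022efficient} (where it is also obtained via Hanson--Wright), so your proposal coincides with the intended argument.
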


In particular, if the vector $\bv \sim \frac{1}{\sqrt{d}}\mH_d$, then we have $\E[\bv\bv^\top] = \frac{1}{d} I$ and $\|\bv_i\|_{\psi_2} \leq \frac{2}{\sqrt{d}}$, and there exists a constant $c > 0$ such that for any $t\geq 0$,
\begin{lemma}[Projection of random vectors in $\mH_d$]
\label{lem:projection-prob}
Let $\bv \sim \frac{1}{\sqrt{d}}\mH_d$ and $\bU \in \R^{d\times r}$ be an orthonormal matrix, then
\begin{align*}
\Pr\left[\left|\|\bU^\top \bv\|_2^2 - \frac{r}{d}\right| \geq t \right] \leq \exp\left(-c\min\left\{\frac{d^2t^2}{16r}, \frac{dt}{4}   \right\}\right).
\end{align*}
\end{lemma}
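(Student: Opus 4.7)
The plan is to derive this as an immediate corollary of the preceding lemma (Projection of sub-gaussian random variables), which gives a Hanson--Wright-type bound for any random vector with i.i.d.\ centered sub-Gaussian coordinates with a common variance. The only work is to verify that the specific distribution $\bv \sim \frac{1}{\sqrt d}\mH_d$ satisfies the three hypotheses and then to substitute the correct constants.

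First I would record the per-coordinate facts. The entries $\bv_i$ are i.i.d.\ with $\bv_i \in \{-1/\sqrt d, +1/\sqrt d\}$ uniformly, so $\E[\bv_i] = 0$ and $\E[\bv_i^2] = 1/d$, giving $\E[\bv\bv^\top] = \tfrac{1}{d} I_d$, i.e., $s^2 = 1/d$ in the notation of the sub-Gaussian lemma. Next I would check the sub-Gaussian norm: since $\bv_i^2 = 1/d$ is deterministic, $\E[\exp(\bv_i^2/K^2)] = \exp(1/(dK^2))$, which is at most $2$ as soon as $K^2 \geq 1/(d\ln 2)$; the simple choice $K = 2/\sqrt d$ works comfortably.

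Now I would apply the preceding lemma with $s^2 = 1/d$ and $K = 2/\sqrt d$, so that $rs^2 = r/d$, $rK^4 = 16r/d^2$, and $K^2 = 4/d$. Substituting into
\[
\Pr\left[\big|\|\bU^\top \bv\|_2^2 - rs^2\big| \geq t\right] \leq \exp\!\left(-c\min\!\left\{\tfrac{t^2}{rK^4},\, \tfrac{t}{K^2}\right\}\right)
\]
yields exactly the desired bound
\[
\Pr\!\left[\Big|\|\bU^\top \bv\|_2^2 - \tfrac{r}{d}\Big| \geq t\right] \leq \exp\!\left(-c\min\!\left\{\tfrac{d^2t^2}{16r},\, \tfrac{dt}{4}\right\}\right),
\]
with the same absolute constant $c > 0$ inherited from the sub-Gaussian lemma.

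There is no real obstacle here; the proof is a one-line substitution once the hypotheses are verified. The only point worth double-checking is the sub-Gaussian norm bound $\|\bv_i\|_{\psi_2} \leq 2/\sqrt d$, which is immediate because $\bv_i^2$ is a constant, so the definition of $\|\cdot\|_{\psi_2}$ reduces to the elementary inequality $\exp(1/(dK^2)) \leq 2$ for $K = 2/\sqrt d$.
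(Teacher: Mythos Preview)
Your proposal is correct and matches the paper's approach exactly: the paper also derives this lemma as a direct specialization of the preceding sub-Gaussian projection lemma, noting that $\E[\bv\bv^\top] = \tfrac{1}{d}I$ and $\|\bv_i\|_{\psi_2} \le \tfrac{2}{\sqrt d}$, and then substituting $s^2 = 1/d$, $K = 2/\sqrt d$.
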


\begin{lemma}[Lemma 34 from \cite{marsden2022efficient}]
\label{lem:rli}
Let $L \in [d]$, $\delta \in (0, 1]$.
Suppose a sequence of unit norm vectors $\bx_1, \ldots,\bx_{L} \in \R^d$ satisfies
\[
\|\proj_{\mathsf{span}(\bx_1, \ldots, \bx_{i-1})}(\bx_i)\|_2 \leq 1-\delta.
\]
Let $\bX = [\bx_1, \ldots, \bx_{L}] \in \R^{d\times L}$. There exists an orthonormal matrix $\bU \in \R^{d\times (L/2)}$ such that for any vector $\ba\in \R^d$,
\[
\|\bU^\top \ba\|_\infty \leq \frac{d}{\delta}\|\bX^\top \ba\|_\infty
\]
\end{lemma}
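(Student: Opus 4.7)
The plan is to take $\bU$ to consist of the top $L/2$ left singular vectors of $\bX$, and to control the ratio $\|\bU^\top \ba\|_\infty / \|\bX^\top \ba\|_\infty$ via a lower bound on the $(L/2)$-th singular value $\sigma_{L/2}(\bX)$ that follows from the robust independence hypothesis.

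First, I will invoke the Gram--Schmidt / QR decomposition $\bX = \bU_0 \mathbf{C}$, with $\bU_0 \in \R^{d\times L}$ having orthonormal columns and $\mathbf{C}\in\R^{L\times L}$ upper triangular. Two structural facts drop out immediately. (i) The diagonal entries are $\mathbf{C}_{ii} = \|\bx_i - \proj_{\mathsf{span}(\bx_1,\ldots,\bx_{i-1})}(\bx_i)\|_2$, and since $\bx_i$ is a unit vector the hypothesis gives $\mathbf{C}_{ii}^2 \geq 1-(1-\delta)^2 \geq \delta$, so $|\det(\mathbf{C})| = \prod_i \mathbf{C}_{ii} \geq \delta^{L/2}$. (ii) Each column of $\mathbf{C}$ has unit $\ell_2$-norm (since $\|\bx_j\|_2=1$ and $\bU_0$ has orthonormal columns), hence $\|\mathbf{C}\|_F^2 = L$ and in particular $\sigma_i(\mathbf{C})\leq\sqrt{L}$ for every $i$. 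Because $\bX^\top\bX = \mathbf{C}^\top\mathbf{C}$, the singular values of $\bX$ and of $\mathbf{C}$ coincide.

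Next, I will combine (i) and (ii) via a one-line AM--GM / pigeonhole argument to lower-bound the $(L/2)$-th singular value. Setting $t := \sigma_{L/2}(\bX)$ and using $\sigma_i\leq\sqrt{L}$ for all $i$ together with $\sigma_i\leq t$ for $i>L/2$,
\[
\delta^L \;\leq\; \prod_{i=1}^L \sigma_i^2 \;\leq\; L^{L/2}\cdot t^L,
\]
so $t \geq \delta/\sqrt{L}$.

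Finally, I will take $\bU\in\R^{d\times (L/2)}$ to be the matrix of the top $L/2$ left singular vectors of $\bX$ and verify the bound directly. Writing the SVD $\bX = U\Sigma V^\top$, the $k$-th column of $\bU$ equals $\bu_k = U_k = (1/\sigma_k)\bX V_k$, so for any $\ba\in\R^d$,
\[
|\langle \bu_k,\ba\rangle| \;=\; \frac{1}{\sigma_k}\left|\sum_{i=1}^L V_{ik}\,\langle \bx_i,\ba\rangle\right| \;\leq\; \frac{\|V_k\|_1}{\sigma_k}\,\|\bX^\top \ba\|_\infty \;\leq\; \frac{\sqrt{L}}{\sigma_k}\,\|\bX^\top \ba\|_\infty,
\]
using $\|V_k\|_1\leq \sqrt{L}\,\|V_k\|_2 = \sqrt{L}$. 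For $k\leq L/2$ we have $\sigma_k \geq t \geq \delta/\sqrt{L}$, hence $\sqrt{L}/\sigma_k \leq L/\delta \leq d/\delta$; taking the maximum over $k$ yields $\|\bU^\top \ba\|_\infty \leq (d/\delta)\|\bX^\top \ba\|_\infty$. I do not foresee any real obstacle: the only conceptual step is noticing that the robust-linear-independence hypothesis pins $\det(\bX^\top\bX)$ away from zero (through the QR diagonal), which together with the trivial Frobenius cap $\sigma_i\leq\sqrt{L}$ forces $\sigma_{L/2}\geq\delta/\sqrt{L}$; everything else is routine.
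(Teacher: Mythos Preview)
The paper does not supply its own proof of this lemma: it is quoted verbatim from \cite{marsden2022efficient} and placed in the appendix as a ``Useful Lemma'' without argument. So there is nothing in the present paper to compare your proposal against.

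That said, your argument is correct and self-contained. The chain of ideas---QR to read off $\det(\bX^\top\bX)\ge\delta^L$ from the diagonal, the Frobenius cap $\sigma_i\le\sqrt{L}$ from unit columns, the product/pigeonhole step to force $\sigma_{L/2}\ge\delta/\sqrt L$, and then the one-line SVD bound $|\langle\bu_k,\ba\rangle|\le(\sqrt L/\sigma_k)\|\bX^\top\ba\|_\infty$---is exactly the natural route and every inequality checks (in particular $1-(1-\delta)^2=\delta(2-\delta)\ge\delta$ and $\sqrt L/\sigma_k\le L/\delta\le d/\delta$ for $k\le L/2$). No gaps.
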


\end{document}